\newenvironment{proof}[1][Proof]
{\par\noindent{\bf #1:} }{\hspace*{\fill}\nolinebreak{$\Box$}\bigskip\par}
\newtheorem{theorem}{Theorem}
\theoremstyle{plain}
\newtheorem{conjecture}{Conjecture}
\newtheorem{corollary}{Corollary}
\newtheorem{lemma}{Lemma}
\numberwithin{equation}{section}
\begin{document}
\title{\textbf{On a Conjecture for a Hypergraph Edge Coloring Problem}}
\author{Wies\l aw Kubiak \protect\footnote{  Faculty of Business
Administration, Memorial University, St. John's, Canada. Email: wkubiak@mun.ca}
%Dominique de Werra\protect\footnote{EPFL, Lausanne, Switzerland. Email: dominique.dewerra@epfl.ch}
}
\maketitle
\begin{abstract}Let $H =(\mathcal{M} \cup \mathcal{J} ,E \cup \mathcal{E})$ be a hypergraph with two hypervertices $\mathcal{G}_1$ and $\mathcal{G}_2$ where $\mathcal{M} =\mathcal{G}_{1} \cup \mathcal{G}_{2}$ and $\mathcal{G}_{1} \cap \mathcal{G}_{2} =\varnothing $. An edge $\{h ,j\} \in E$ in a bi-partite multigraph graph $(\mathcal{M} \cup \mathcal{J} ,E)$ has an integer multiplicity $b_{j h}$, and a hyperedge $\{\mathcal{G}_{\ell } ,j\} \in \mathcal{E}$, $\ell=1,2$, has an integer multiplicity $a_{j \ell }$. It has been conjectured in \cite{KDWK} that $\chi  \prime  (H) =\lceil \chi  \prime _{f} (H)\rceil $, where $\chi  \prime  (H)$ and $\chi  \prime _{f} (H)$ are the edge chromatic number of $H$ and the fractional edge chromatic number of $H$ respectively. Motivation to study this hyperedge coloring conjecture comes from the University timetabling, and open shop scheduling with multiprocessors. We prove this conjecture in this paper.  
\end{abstract}

\section{The conjecture}
Let $H =(\mathcal{M} \cup \mathcal{J} ,E \cup \mathcal{E})$ be a hypergraph with two hypervertices $\mathcal{G}_1$ and $\mathcal{G}_2$ where $\mathcal{M} =\mathcal{G}_{1} \cup \mathcal{G}_{2}$ and $\mathcal{G}_{1} \cap \mathcal{G}_{2} =\emptyset $. An edge $\{h ,j\} \in E$ in a bi-partite multigraph graph $(\mathcal{M} \cup \mathcal{J} ,E)$ has an integer multiplicity $b_{j h}$, and a hyperedge $\{\mathcal{G}_{\ell } ,j\} \in \mathcal{E}$, $\ell=1,2$, has an integer multiplicity $a_{j \ell }$. 
We limit ourselves to the just defined hypergraphs $H$ in this paper.
%Let $H =(\mathcal{M} \cup \mathcal{J} ,E \cup \mathcal{E})$ be a hypergraph where $\mathcal{M} =\mathcal{G}_{1} \cup \mathcal{G}_{2}$ and $\mathcal{G}_{1} \cap \mathcal{G}_{2} =\varnothing $. An edge $\{h ,j\} \in E$ has an integer multiplicity $b_{j h}$ and a hyperedge $\{\mathcal{G}_{\ell } ,j\} \in \mathcal{E}$ has an integer multiplicity $a_{j \ell }$. 
It has been conjectured in \cite{KDWK} that $\chi  \prime  (H) =\lceil \chi  \prime _{f} (H)\rceil $, where $\chi  \prime  (H)$ and $\chi  \prime _{f} (H)$ are the edge chromatic number of $H$ and the fractional edge chromatic number of $H$ respectively, see \cite{SU} for more on the fractional graph theory.
%We prove this conjecture in this paper. 
Observe that $\mathcal{G}_1=\emptyset$ or $\mathcal{G}_2=\emptyset$ results in the $\chi  \prime  (H) = \chi  \prime _{f} (H)=\Delta(\mathcal{G}_2) + \chi' (\mathcal{M}\cup \mathcal{J}, E)$ and $\Delta(\mathcal{G}_1)+\chi' (\mathcal{M}\cup \mathcal{J}, E)$ respectively, where $\chi' (\mathcal{M}\cup \mathcal{J}, E)=\max\{\max_j\{\sum_h b_{jh}\},\max_h\{\sum_j b_{jh}\}\}$ is the edge chromatic number of the bi-partite multigraph $(\mathcal{M}\cup \mathcal{J}, E)$, and $ \Delta (\mathcal{G}_{\ell })$ = $\sum _{j \in \mathcal{J}}a_{j \ell }$ for $\ell  =1 ,2$. Thus the conjecture holds in this case and we assume non-empty $\mathcal{G}_1$ and non-empty $\mathcal{G}_2$ from now on in the paper.

A feasible edge coloring in $H$ can be partitioned in the following four parts: part (a) includes matchings with hyperedges $(\mathcal{G}_{1} ,j)$ for some $j \in \mathcal{J}$, and  edges $(h ,j)$ where $h \in \mathcal{G}_{2}$ and $j \in \mathcal{J}$; part (b) includes matchings with hyperedges $(\mathcal{G}_{1} ,j)$, and hyperedges $(\mathcal{G}_{2} ,j \prime )$ for some $j ,j \prime  \in \mathcal{J}$; part (c) includes matchings with hyperedges $(\mathcal{G}_{2} ,j)$ for some $j \in \mathcal{J}$, and edges $(h ,j)$ where $h \in \mathcal{G}_{1}$ and $j \in \mathcal{J}$; and part (d) includes matchings with edges $(h ,j)$ only where $h \in \mathcal{M}$ and $j \in \mathcal{J}$. The parts (a), (b), (c) and (d) have multiplicities $ \Delta (\mathcal{G}_{1}) -r$, $r$, $ \Delta (\mathcal{G}_{2}) -r$, and $w$ respectively, for some $r$ and $w$. Therefore the total of $ \Delta (\mathcal{G}_{1}) + \Delta (\mathcal{G}_{2}) -r +w$ colors are used, and the minimization of the number of colors required to color the edges of $H$ reduces to the minimization of $w -r$. Following the convention used in \cite{DWKK}
and \cite{KDWK} we refer to $h \in \mathcal{M}$ as machine $h$, and to $\jmath \in \mathcal{J}$ as job $j$ for convenience. It was shown in \cite{DWKK} that $\chi \prime (H)> \chi \prime_f (H)$ for some hypergraphs $H$, and in \cite{KDWK} that $\lceil \chi \prime_f(H)\rceil +1 \geq \chi \prime (H)$ for each hypergraph $H$. 

Let $m =\vert \mathcal{M}\vert $ be the number of machines and $n =\vert \mathcal{J}\vert $ be the number of jobs. Without loss of generality $\mathcal{J}=\{1,...,n\}$ and $\mathcal{M}=\{1,...,m\}$.
The following integer linear program $\cal{ILP}$ with variables $r$, $w$, and $y_{j h}$, $x_{j \ell }$, for $j \in \mathcal{J}$, $h \in \mathcal{M}$, and $\ell  =1 ,2$, and integer coefficients $b_{j h}$, $a_{j \ell }$, for $j \in \mathcal{J}$, $h \in \mathcal{M}$,  and $\ell  =1 ,2$, 
%and $ \Delta (\mathcal{G}_{\ell })$ = $\sum _{j =1}^{n}a_{j \ell }\;\text{,}\;$ 
for $\ell  =1 ,2$ was given in \cite{DWKK} and \cite{KDWK} to find $\chi \prime (H)$.

\begin{equation}ILP =\min (w -r) \label{obj}
\end{equation}Subject to
\begin{equation}\sum _{j}b_{j h} -( \Delta (\mathcal{G}_{2}) -r) \leq \sum _{j}y_{j h} \leq w\;\;\text{\ \ }\;\; \;\;\text{\ }\;\;h \in \mathcal{G}_{1} \label{B2}
\end{equation}
\begin{equation}\sum _{j}b_{j h} -( \Delta (\mathcal{G}_{1}) -r) \leq \sum _{j}y_{j h} \leq w\;\;\text{\ \ \ }\;\;\;\;\text{}\;\;h \in \mathcal{G}_{2} \label{B3}
\end{equation}
\begin{equation}\sum _{h}y_{j h} \leq w\;\;\text{\ }\;\;\;\;\text{\ }\;\;j\in \mathcal{J} \label{B4}
\end{equation}
\begin{equation}0 \leq y_{j h} \leq b_{j h} \;\;\text{\ }\;\; \;\;\text{}\;\;h \in \mathcal{M}\;\; \;\;\text{\ }\;\;j \in \mathcal{J}\label{B5}
\end{equation}
\begin{equation}\sum _{j}x_{j 1} =r \label{B6}
\end{equation}
\begin{equation}\sum _{j}x_{j 2} =r \label{B7}
\end{equation}
\begin{equation}x_{j 1} +x_{j 2} \leq r\;\;\text{\ \ }\;\;\;\;\text{\ }\;\;j \in \mathcal{J}\label{B8}
\end{equation}
\begin{equation}0 \leq x_{j \ell } \leq a_{j \ell }\;\; \;\;\text{\ }\;\;j \in \mathcal{J} \;\; \;\;\text{\ }\;\;\ell=1,2 \label{B9}
\end{equation}
\begin{equation}\sum _{h \in \mathcal{G}_{1}}(b_{j h} -y_{j h}) +a_{j 2} -x_{j 2} \leq  \Delta (\mathcal{G}_{2}) -r\;\;\text{\ \ \ }\;\; \;\;\text{\ }\;\;j \in \mathcal{J}\label{B10}
\end{equation}
\begin{equation}\sum _{h \in \mathcal{G}_{2}}(b_{j h} -y_{j h}) +a_{j 1} -x_{j 1} \leq  \Delta (\mathcal{G}_{1}) -r\;\;\text{\ \ \ }\;\; \;\;\text{\ }\;\;j \in \mathcal{J}\label{B11}
\end{equation}

The variable $y_{jh}$  represents the amount of $j\in \mathcal{J}$ on $h\in \mathcal{M}$ in part (d). The variable $x_{j\ell}$ represents the amount of $j\in \mathcal{J}$ on $\mathcal{G}_{\ell}$, $\ell=1,2$, in part (b). The variable $w$ is the size of (d), and the variable $r$ is the size of (b). The constraints (\ref{B2})-(\ref{B5}) guarantee that the size of part (d) does not exceed $w$. The constraints (\ref{B6})-(\ref{B9}) guarantee that the size of part (b) equals $r$. The constraints (\ref{B10})-(\ref{B11}) along with the left hand side inequalities in (\ref{B2}) and (\ref{B3}) guarantee that the size of part (a) does not exceed $\Delta(\mathcal{G}_1) - r$ and that the size of part (c) does not exceed $\Delta(\mathcal{G}_2) - r$.

Let $ILP$ be the value of optimal solution to this program, and let $L P$ be the value of optimal solution to the $L P$-relaxation of this program. The conjecture $\chi  \prime  (H) =\lceil \chi  \prime _{f} (H)\rceil $ for each hypergraph $H$
%, to find the edge chromatic number $\chi  \prime  (H)$ of $H$ 
is therefore equivalent to the following conjecture that we show in this paper to hold.

\begin{conjecture}\label{conj}
$I L P =\lceil L P\rceil $. 
\end{conjecture}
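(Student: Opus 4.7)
The weak direction $ILP \ge \lceil LP\rceil$ is immediate, since the LP relaxation bounds $ILP$ from below and $ILP\in\mathbb{Z}$. The entire task is to exhibit an integer feasible solution whose objective value equals $\kappa:=\lceil LP\rceil$.

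My plan is a two--stage rounding. In the first stage I fix integer values $\bar w,\bar r$ with $\bar w-\bar r=\kappa$ for which the remaining polytope in $(y,x)$ defined by \eqref{B2}--\eqref{B11} is still LP-nonempty. Starting from an LP-optimum $(w^\ast,r^\ast,y^\ast,x^\ast)$ with $w^\ast-r^\ast=LP$, the natural candidates are $(\bar w,\bar r)=(\lceil w^\ast\rceil,\,\lceil w^\ast\rceil-\kappa)$ and the symmetric $(\bar w,\bar r)=(\lfloor r^\ast\rfloor+\kappa,\,\lfloor r^\ast\rfloor)$. At least one of these (with at most a unit adjustment in degenerate cases such as $r^\ast<1$) should work, because the right-hand sides of \eqref{B2}, \eqref{B3}, \eqref{B10} and \eqref{B11} all loosen when $w$ is increased or $r$ is decreased, and the equalities \eqref{B6}--\eqref{B7} can be re-established by a small rescaling of $x^\ast$, which preserves \eqref{B8}--\eqref{B9} thanks to the slack $x^\ast_{j1}+x^\ast_{j2}\le r^\ast$.

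In the second stage, with integer $\bar w,\bar r$ in hand, I would model the remaining polytope in $(y,x)$ as the feasible set of a flow problem on an integer-capacitated network. The network has a source, a sink, one node for every job $j$, one for every machine $h$, and two hypervertex nodes $\mathcal{G}_1,\mathcal{G}_2$. The arcs $j\to h$ carry $y_{jh}$ with capacity $b_{jh}$, the arcs $j\to\mathcal{G}_\ell$ carry $x_{j\ell}$ with capacity $a_{j\ell}$, and source-to-job, machine-to-sink and $\mathcal{G}_\ell$-to-sink arcs encode \eqref{B4}, the upper and lower bounds in \eqref{B2}--\eqref{B3}, and the equalities \eqref{B6}--\eqref{B7}. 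Auxiliary per-job nodes then route the $\mathcal{G}_\ell$-side load so that the cross-couplings in \eqref{B10}--\eqref{B11} (where $x_{j2}$ interacts with the $\mathcal{G}_1$ side and $x_{j1}$ with the $\mathcal{G}_2$ side) and the joint capacity \eqref{B8} on $x_{j1}+x_{j2}$ both become ordinary arc capacities. Since all capacities and lower bounds are integer, the standard integrality theorem for feasible flows with lower bounds then supplies integer $(y,x)$, completing the rounding.

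The principal obstacle is clearly the flow construction of the second stage: the network must simultaneously express the machine-side lower bounds in \eqref{B2}--\eqref{B3}, the index-swapped coupling of \eqref{B10}--\eqref{B11}, and the joint capacity \eqref{B8}, and it is not obvious a priori that all three fit into a single network with integer capacities. Should a clean flow formulation prove elusive, my backup would be to show directly that the constraint matrix of the fixed-$(\bar w,\bar r)$ system is a network matrix, whence total unimodularity---and hence integrality of its LP vertices---follows from standard LP theory; an LP-duality argument tailored to the specific coefficient pattern of \eqref{B2}--\eqref{B11} is a further alternative.
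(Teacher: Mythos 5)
The weak inequality and your overall two--stage architecture are fine, but the proposal has a genuine gap exactly where the paper's difficulty lies: Stage 1. Your claim that one of the integer pairs $(\lceil w^{\ast}\rceil,\lceil w^{\ast}\rceil-\kappa)$ or $(\lfloor r^{\ast}\rfloor+\kappa,\lfloor r^{\ast}\rfloor)$ ``should work'' by monotonicity plus a small rescaling of $\mathbf{x}^{\ast}$ does not hold up. If you move $r$ up to $\lceil w^{\ast}\rceil-\kappa$ (which can exceed $r^{\ast}$), the equalities \eqref{B6}--\eqref{B7} force $\sum_j x_{j\ell}$ to increase against the caps $a_{j\ell}$ in \eqref{B9} and the joint bound \eqref{B8}, while simultaneously the right-hand sides of \eqref{B10}--\eqref{B11} and the lower bounds in \eqref{B2}--\eqref{B3} tighten; if instead you move $r$ down to $\lfloor r^{\ast}\rfloor$, then $w$ must drop to $\lfloor r^{\ast}\rfloor+\kappa$, which can be strictly below $w^{\ast}$, so $\mathbf{y}^{\ast}$ may violate the upper bounds in \eqref{B2}--\eqref{B4}, and decreasing $\mathbf{y}$ to restore them increases the left-hand sides of \eqref{B10}--\eqref{B11} and violates the machine lower bounds. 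There is no simple direction in which all constraints loosen, and the existence of \emph{some} integer pair $(\bar w,\bar r)$ with $\bar w-\bar r=\lceil LP\rceil$ and a nonempty $(\mathbf{y},\mathbf{x})$-polytope is precisely the content of the paper's Theorem \ref{integer}: the paper fixes $w-r=\lceil w^{\ast}-r^{\ast}\rceil$, minimizes $r$, and needs all of Sections \ref{S4}--\ref{S12} (the Gonzalez--Sahni column decomposition of part (d), the $a$-, $c$-, $d$-tightness of jobs in $B_1\cup B_2$, absence of crossing jobs and simultaneous overlaps, and two auxiliary network-flow constructions applied to a truncated solution) to rule out a fractional optimal $r$. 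Asserting Stage 1 without this (or an equivalent) argument assumes the theorem you are trying to prove.

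Stage 2 is closer to being sound, but not in the form you sketch. The joint system in $(y,x)$ with fixed integer $(\bar w,\bar r)$ is not presented (and is not obviously representable) as a single integer-capacitated network, and total unimodularity of the full matrix of \eqref{B2}--\eqref{B11} is doubtful given the cross-coupling in \eqref{B10}--\eqref{B11} together with \eqref{B8} and the machine lower bounds; the paper instead eliminates $\mathbf{x}$ by Fourier--Motzkin (Section \ref{S11}), shows the projected system $Q$ is a network-flow model for fixed integral $w,r$, extracts an integral $\mathbf{y}$, and only then recovers an integral $\mathbf{x}$ from a second flow problem. So your instinct that flow integrality finishes the job once integral $(\bar w,\bar r)$ are in hand matches the paper's endgame, but the decisive step --- producing such a feasible integer pair --- is missing from your argument.
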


Motivation to study this hyperedge coloring problem comes from the University timetabling studied
in \cite{ADW} and \cite{DWAD}, and open shop scheduling with multiprocessors studied in \cite{DWKK}, \cite{KDWK},
see also  \cite{LLV}. The University timetabling is a generalization of the well-known class-teacher timetabling model. In the generalization an edge represent a single-period lecture given by a teacher to
a class, and a hyperedge represent a single-period lecture given by a teacher to a group of classes
simultaneously. One looks for a minimum number of periods in which to complete all lecture without conflicts.
In the open shop scheduling with multiprocessors 
%studied in \cite{DWKK}, \cite{KDWK},
%see also  \cite{LLV}, where 
a set of jobs $\mathcal{J}=\{J_1,...,J_n\}$ is scheduled on machines $\mathcal{M}=\{M_1,...,M_m\}$. The set of machines is partitioned into two groups $\mathcal{G}_1$ and $\mathcal{G}_2$. Each job consists of \emph{single-processor} and \emph{multiprocessor} operations. A single-processor operation requires one of the machines in $\mathcal{M}$, and a multiprocessor operation requires \emph{all} machines from the group, $\mathcal{G}_1$ or $\mathcal{G}_2$.  Each machine can execute at most one operation at a time, and no two operations of the same job can be executed simultaneously. Any operation can be preempted at any moment though we limit preemptions to integer points for the edge coloring, and at any point for the fractional edge coloring. The makespan is to be minimized. Please see \cite{ADW}, \cite{DWAD}, \cite{DWKK}, \cite{KDWK}, and  \cite{LLV} for more on the applications of the hyperedge coloring problem.

It was pointed out in \cite{DWKK} that the  hypergraphs considered in this paper generalize bipartite multigraphs, but they do not belong to known classes, like balanced, normal or with the K\H{o}nig-Egerv\'{a}ry property, see \cite{B}.

\section{Fractional edge colorings with $\lceil LP \rceil$ colors}
%The Optimal Solutions with Integral $r$\bigskip 
%}
Let $(\mathbf{y}^{ \ast } ,\mathbf{x}^{ \ast }, w^{ \ast } ,r^{ \ast })$ be an optimal solution to the $L P$-relaxation of $\cal{ILP}$. Let $w^*=\lfloor w^* \rfloor + \lambda_{w^*}$ and $r^*=\lfloor r^* \rfloor + \lambda_{r^*}$, where $0\leq \lambda_{w^*}<1$
and $0\leq \lambda_{r^*}<1$. Consider the following linear program $\ell p$.

\begin{equation*}l p =\min r
\end{equation*}Subject to
\begin{equation}w -r =\left \lceil w^{ \ast } -r^{ \ast }\right \rceil  \label{B20}
\end{equation}
\begin{equation}
\lfloor r^* \rfloor \leq r
\end{equation}
\begin{equation}\sum _{j}b_{j h} -( \Delta (\mathcal{G}_{2}) -r) \leq \sum _{j}y_{j h} \leq w\;\;\text{\ \ \ }\;\; \;\;\text{}\;\;h \in \mathcal{G}_{1} \label{y1}
\end{equation}
\begin{equation}\sum _{j}b_{j h} -( \Delta (\mathcal{G}_{1}) -r) \leq \sum _{j}y_{j h} \leq w\;\;\text{\ \ }\;\; \;\;\text{\ }\;\;h \in \mathcal{G}_{2} \label{y2}
\end{equation}
\begin{equation}\sum _{h}y_{j h} \leq w\;\;\text{\ }\;\; \;\;\text{\ }\;\;j \in \mathcal{J}\label{r1d}
\end{equation}
\begin{equation}0 \leq y_{j h} \leq b_{j h} \;\;\text{\ }\;\; \;\;\text{}\;\;h \in \mathcal{M}\;\; \;\;\text{\ }\;\;j \in \mathcal{J}\label{y4}
\end{equation}
\begin{equation}\sum _{j}x_{j 1} =r \label{bm1}
\end{equation}
\begin{equation}\sum _{j}x_{j 2} =r \label{bm2}
\end{equation}
\begin{equation}x_{j 1} +x_{j 2} \leq r\;\;\text{\ }\;\; \;\;\text{\ \ }\;\;j \in \mathcal{J}\label{y3}
\end{equation}
\begin{equation}0 \leq x_{j \ell } \leq a_{j \ell } \;\; \;\;\text{\ }\;\;j \in \mathcal{J} \;\; \;\;\text{\ }\;\;\ell=1,2\label{y5}
\end{equation}
\begin{equation}\sum _{h \in \mathcal{G}_{1}}(b_{j h} -y_{j h}) +a_{j 2} -x_{j 2} \leq  \Delta (\mathcal{G}_{2}) -r\;\;\text{\ \ }\;\; \;\;\text{\ \ }\;\;j\in \mathcal{J} \label{r1a}
\end{equation}
\begin{equation}\sum _{h \in \mathcal{G}_{2}}(b_{j h} -y_{j h}) +a_{j 1} -x_{j 1} \leq  \Delta (\mathcal{G}_{1}) -r\;\;\text{\ \ }\;\; \;\;\text{\ \ }\;\;j\in \mathcal{J} \label{r1c}
\end{equation}

Let $(\mathbf{y} ,\mathbf{x} ,r ,w)$ be an optimal solution to $\ell p$.  The solution exists since $(\mathbf{y}^* ,\mathbf{x} ^*,r^* ,\lfloor w^* \rfloor +\lambda_{r^*})$ is feasible for $\ell p$ if $\lambda_{w^*}\leq\lambda_{r^*}$, and
$(\mathbf{y}^* ,\mathbf{x} ^*,r^* ,\lceil w^* \rceil +\lambda_{r^*})$ is feasible for $\ell p$ if $\lambda_{w^*}>\lambda_{r^*}$, thus $\ell p$ is feasible and clearly it is also bounded. Observe that $w^*\leq \lfloor w^* \rfloor+\lambda_{r^*}$ for $\lambda_{w^*}\leq\lambda_{r^*}$, and $\lceil w^* \rceil +\lambda_{r^*} - r^*=\lceil w^* \rceil  - \lfloor r^* \rfloor=\lceil w^* - r^* \rceil$
for $\lambda_{w^*} >\lambda_{r^*}$. Thus the $\ell p$ gives a fractional edge coloring of the hypergraph with $\lceil LP \rceil$ colors.

We assume without loss of generality that the solution meets the machine \emph{saturation} condition, i.e. the upper and lower bounds in (\ref{y1})
and (\ref{y2}) are equal. If the machine saturation is not met by the solution for some machine $h$, then a job $j (h)$ with $b_{j (h) ,h} =w -\sum _{j}b_{j h} +( \Delta (\mathcal{G}_{2}) -r)$, $a_{j(h)1}=a_{j(h)2}=0$ should be added to the instance for each such machine to make the solution meet the saturation condition. Observe that by (\ref{B20})
$b_{j (h) ,h}$ is integral so the extended instance is a valid instance of the hypergraph edge coloring problem. We take $y_{j (h) ,h} =w -\sum _{j}y_{j h}$ in the extended solution. Observe that $n=|\mathcal{J}|\geq |\mathcal{G}_1|+|\mathcal{G}_2|$ for the solutions that meet the saturation condition.  

An optimal solution $(\mathbf{y} ,\mathbf{x} ,r ,w)$ to $\ell p$ that is integral is feasible for $\cal{ILP}$, and
$w - r =\lceil w^{ \ast } -r^{ \ast }\rceil  =\lceil L P\rceil $.  Moreover this solution is optimal for $\cal{ILP}$ since by definition of $L P$-relaxation we have $L P \leq I L P$ for any feasible solution to $\cal{ILP}$. This proves Conjecture \ref{conj}. Therefore it suffices to prove that there is an optimal solution to $\ell p$ that is integral. To that end, we prove the following theorem in the remainder of the paper.
%We have the following lemma. 

%\begin{lemma}
%\label{L1} If the $r$ in an optimal solution to the $l p$ is integral, then the conjecture holds, and the $I L P$ can be solved in polynomial time. 
%\end{lemma}

%\begin{proof}
%By (\ref{B20}), if $r$ is integer, then $w$ is integer. Using the projective method of \cite{KDWK}
%we can then find in polynomial time an integral $\mathbf{y}$, and finally given integral
%$r ,w$ and $\mathbf{y}$ we can find in polynomial time an integral $\mathbf{x}$. Therefore, we can find in polynomial time an integer solution to the $l p$ provided that $r$ is integer. Since $\lceil w^{ \ast } -r^{ \ast }\rceil  =\lceil L P\rceil $ and $(\mathbf{y} ,\mathbf{x} ,r ,w)$ is a feasible solution to $I L P$ such that $I L P =\lceil L P\rceil $, the conjecture holds. Finally by definition of $L P$-relaxation we have $L P \leq I L P$ for any feasible solution to $I L P$. Thus $(\mathbf{y} ,\mathbf{x} ,r ,w)$ is an optimal solution to $I L P$ obtained in polynomial time. This proves the lemma. 
%\end{proof}

%By Lemma \ref{L1}
%it suffices to prove that $r$ is integral in an optimal solution to $l p$. 

\begin{theorem}
\label{integer} The $r$ in an optimal solution to $\ell p$ is integral.  
Moreover, there is optimal solution to $\ell p$ that is integral.
\end{theorem}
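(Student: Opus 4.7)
I would split the claim into two parts matching the two sentences of the theorem: (i) reduce to proving $r$ integral, and (ii) given integer $r$, produce an integral optimum in $(\mathbf{y},\mathbf{x})$. First, use (\ref{B20}) to eliminate $w$ entirely, writing $w = r + K$ with $K := \lceil w^{*} - r^{*}\rceil \in \mathbb{Z}$. Then $\ell p$ becomes the problem of minimizing $r$ over $(\mathbf{y},\mathbf{x},r)$, and integrality of $w$ is equivalent to that of $r$; the two assertions of the theorem collapse to (i) and (ii).

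For (i) I would argue by contradiction: assume $(\mathbf{y},\mathbf{x},r)$ is optimal with $r = \lfloor r \rfloor + \lambda$, $0 < \lambda < 1$. The goal is to exhibit a feasible direction $(\mathbf{d}_y,\mathbf{d}_x,-1)$ so that $(\mathbf{y} + t\mathbf{d}_y,\mathbf{x} + t\mathbf{d}_x,r - t)$ remains feasible for all small $t > 0$, contradicting the minimality of $r$. Decreasing $r$ by $t$ simultaneously tightens the upper bound $w = r + K$ on the $y$-sums in (\ref{y1})--(\ref{r1d}), tightens the equalities (\ref{bm1})--(\ref{bm2}) for $\sum_{j}x_{j\ell}$, and relaxes the coupling inequalities (\ref{r1a})--(\ref{r1c}). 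Fractionality of $r$ forces at least one $x_{j\ell}$ and, via the saturation assumption, at least one $y_{jh}$ to lie strictly between its integer bounds. Starting from such a fractional variable, I would build an alternating sequence in the bipartite graph $(\mathcal{M},\mathcal{J})$ augmented with two super-nodes for $\mathcal{G}_1,\mathcal{G}_2$, whose signed increments along edges and hyperedges form the direction $(\mathbf{d}_y,\mathbf{d}_x,-1)$. This is the same kind of cycle argument used to round fractional bipartite $b$-matchings.

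For (ii), once $r$ (and therefore $w = r + K$) is an integer, every coefficient and right-hand side of the residual system in $(\mathbf{y},\mathbf{x})$ is an integer. The system has a bipartite flow structure: the $\mathbf{x}$-constraints (\ref{bm1})--(\ref{y5}) form a transportation polytope with integer margins $r,r$ and integer capacities $a_{j\ell}$; the $\mathbf{y}$-constraints (\ref{y1})--(\ref{y4}) are the capacity and degree constraints of a bipartite $b$-matching on $(\mathcal{M},\mathcal{J})$; and the coupling inequalities (\ref{r1a})--(\ref{r1c}) append one linear constraint per job linking a row-sum of $\mathbf{y}$ with a single $x_{j\ell}$. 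An alternating-cycle decomposition in this augmented bipartite graph, started at any fractional edge, rounds any given fractional feasible solution to an integer one without changing $r$, yielding the desired integer optimum.

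The main obstacle will be (i). Reducing $r$ acts on several kinds of tight constraint at once and with different signs -- shrinking $\sum_j x_{j\ell}$, shrinking the $w$-bounds on $\sum y$, expanding $\Delta(\mathcal{G}_\ell)-r$ on the right-hand sides of (\ref{r1a})--(\ref{r1c}), and sharpening the lower bounds in (\ref{y1})--(\ref{y2}). The proof will need a careful classification of tight constraints at the optimum and a non-trivial use of both the saturation assumption $\sum_j y_{jh}=w$ for all $h\in\mathcal{M}$ and of the bound $n\geq|\mathcal{G}_1|+|\mathcal{G}_2|$ to guarantee that a closed signed circulation exists. Part (ii) is comparatively routine once (i) is in hand.
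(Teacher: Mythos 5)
There is a genuine gap: your proposal is a plan rather than a proof, and the plan defers exactly the step that constitutes the entire content of the theorem. For part (i) you say you would ``build an alternating sequence \ldots the same kind of cycle argument used to round fractional bipartite $b$-matchings,'' but such a generic argument does not go through here, because the system is not a network/totally unimodular one: the per-job coupling constraints (\ref{r1a})--(\ref{r1c}) bound $\sum_{h\in\mathcal{G}_2}y_{jh}+x_{j1}$ and $\sum_{h\in\mathcal{G}_1}y_{jh}+x_{j2}$ from below while (\ref{r1d}) bounds $\sum_h y_{jh}$ and (\ref{y3}) bounds $x_{j1}+x_{j2}$, so the tight-row families cross rather than nest, and $r$ enters constraints with both signs. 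Moreover, under saturation a decrease of $r$ (hence of $w$) by $t$ forces every machine sum, and both equalities (\ref{bm1})--(\ref{bm2}), to drop by exactly $t$, so the ``direction'' you need is a global object, not a local cycle through fractional entries. This is precisely why the paper does not argue via an infinitesimal feasible direction at all: it decomposes part (d) into machine-saturating matchings (columns, via Gonzalez--Sahni), proves an elaborate sequence of structural facts at a hypothetical fractional optimum (forbidden column types, $a$-, $c$-, $d$-tightness of all jobs in $B_1\cup B_2$, absence of crossing jobs, the overlap lemma), and then, splitting into the cases $\sum_{j\in B_i}\varepsilon_j=\epsilon$ and $\sum_{j\in B_i}\varepsilon_j>\epsilon$, explicitly constructs a new \emph{integral} feasible solution with $r'=\lfloor r\rfloor<r$ by removing an interval of length $\epsilon$ from the column decomposition and solving tailored integral network-flow problems (the problem $\cal F$ and the projected system $Q$). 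Your proposal acknowledges the ``main obstacle'' but supplies none of this machinery, so the contradiction with optimality of $r$ is never actually produced.

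Part (ii) is also glossed too quickly. Even with $r$ and $w$ integral, the joint system in $(\mathbf{y},\mathbf{x})$ is not a bipartite $b$-matching polytope because of the same crossing constraints, so ``an alternating-cycle decomposition \ldots rounds any given fractional feasible solution'' is unsupported as stated. The paper instead eliminates $\mathbf{x}$ by Fourier--Motzkin (Lemma \ref{LL3}), shows the projected system $Q$ is a network-flow model for fixed integral $r,w$, obtains an integral $\mathbf{y}$, proves a lifting result (Theorem \ref{T11}) to recover a feasible $\mathbf{x}$, and only then solves a second network-flow problem in $\mathbf{x}$ for fixed integral $\mathbf{y},r,w$. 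If you want to salvage your outline, you would need either to prove integrality (or a rounding property) of the fixed-$(r,w)$ polytope directly -- which is not obvious and is not what the paper claims -- or to reproduce something like the paper's two-stage projection-and-flow argument.
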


\begin{proof}
Let $\mathbf{s}=(\mathbf{y} ,\mathbf{x} ,r ,w)$ be an optimal solution to $\ell p$. Suppose for a contradiction that the $r$ in $\mathbf{s}$ equals
\begin{equation*}r =\left \lfloor r^*\right \rfloor  +\epsilon 
\end{equation*}where $0 <\epsilon  <1.$ Thus by (\ref{B20})
\begin{equation*}w =\left \lfloor w\right \rfloor  +\epsilon \;\text{.}\;
\end{equation*}In the remaining sections
of the paper we show that such $\mathbf{s}$ can not be optimal which leads to a contradiction and proves the first part of the theorem. We then show that an optimal solution that is integral can be  found in polynomial time. An outline of the proof will be given at the end of the next section after we first introduce the necessary notations and definitions.
\end{proof}

%hus there is a feasible integral solution $(\mathbf{y} ,\mathbf{x} ,r ,w)$ to $lp$ with $r=\lfloor r^ * \rfloor<r$. Hence there is a feasible solution
%to $lp$ which is smaller than optimal $r$  which gives contradiction. 

% obtained in polynomial time. 

\section{Preliminaries}
%Let $(\mathbf{y} ,\mathbf{x} ,r ,w)$ be an optimal solution to $l p$ with $0 <\epsilon  <1.$ 
Consider the solution $\mathbf{s}=(\mathbf{y} ,\mathbf{x} ,r ,w)$. Let $B_{1}$ be the set of all jobs $j$ with fractional $x_{j 1}$, 
%i.e. $x_{j1}=\lfloor x_{j1} \rfloor + \varepsilon_{j}$,  $0<\varepsilon_j<1$,
and let $B_{2}$ be the set of all jobs $j$ with fractional $x_{j 2}$.
%, i.e. $x_{j1}=\lfloor x_{j2} \rfloor + \varepsilon_{j}$,  $0<\varepsilon_j<1$
Clearly both sets are non-empty
for $\epsilon  >0$. By (\ref{bm1}) and (\ref{bm2}) the fractions in $B_{\ell}$ sum up to $i_{\ell}+ \epsilon$, where $i_{\ell}$ is non-negative integer, for $\ell=1,2$.

\noindent A job $j$ is \emph{d-tight} if
\begin{equation*}\sum _{h}y_{j h} =w.
\end{equation*}Denote by $D$ the set of all $d$-tight jobs. 

\noindent A job $j$ is \emph{a-tight} if
\begin{equation*}\sum _{h \in \mathcal{G}_{2}}(b_{j h} -y_{j h}) +a_{j 1} -x_{j 1} = \Delta (\mathcal{G}_{1}) -r.
\end{equation*}

\noindent A
job $j$ is \emph{c-tight} if
\begin{equation*}\sum _{h \in \mathcal{G}_{1}}(b_{j h} -y_{j h}) +a_{j 2} -x_{j 2} = \Delta (\mathcal{G}_{2}) -r.
\end{equation*}

%\noindent A
%job $j$ is \emph{b-tight} if
%\begin{equation}x_{j 1} +x_{j 2} =r.
%\end{equation}

\noindent For
jobs $g$ and $k$ such that $x_{g 1} >0$ and $x_{k 2} >0$ define
\begin{equation*}\varepsilon _{r} (g ,k) =\left \{\begin{array}{cc}\min_{j \in (B_{1} \cup B_{2}) \setminus \{g ,k\}}\{r -(x_{j 1} +x_{j 2}) ,\epsilon \} & \;\text{if}\;(B_{1} \cup B_{2}) \setminus \{g ,k\} \neq \emptyset  ; \\
\epsilon  & \;\text{if}\;B_{1} \cup B_{2} \subseteq \{g ,k\}\;\text{.}\;\end{array}\right .
\end{equation*}Observe that jobs $g$ and $k$ with $\varepsilon _{r} (g ,k) >0$ can potentially be used to obtain a solution to $l p$ with smaller $r$ since the reduction of both $x_{g 1}$ and $x_{k 2}$ by some small enough $\varepsilon  >0$ will leave the resulting constraint (\ref{y3}) satisfied. Moreover define
\begin{equation*}\varepsilon _{c} (k) =\sum _{h \in \mathcal{G}_{1}}y_{k h} -(\sum _{h \in \mathcal{G}_{1}}b_{k h} +a_{k 2} -x_{k 2} - \Delta (\mathcal{G}_{2}) +r),
\end{equation*}
\begin{equation*}\varepsilon _{a} (g) =\sum _{h \in \mathcal{G}_{2}}y_{g h} -(\sum _{h \in \mathcal{G}_{2}}b_{g h} +a_{g 1} -x_{g 1} - \Delta (\mathcal{G}_{1}) +r).
\end{equation*}

Let
$G$ be a job-machine bi-partite graph such that there is an edge between machine $h \in \mathcal{M}$ and job $j \in \mathcal{J}$ if and only if $y_{j h} >0$. A \emph{column} $I =(M_{I} ,\epsilon _{I})$ consists of a matching $M_{I}$ in $G$ that matches all $m$ machines in $\mathcal{M}$ with a subset of exactly $m$ jobs in $\mathcal{J}$, and its multiplicity $\epsilon _{I} >0$. Let $\mathcal{J}_{I}$ be the set of all jobs matched in $M_{I}$, i.e. $\mathcal{J}_{I} =\{j \in \mathcal{J} :(j ,h) \in M_{I}\;\;\text{for some}\;\;h \in \mathcal{M}\}$. By definition of $D$ we require that $D \subseteq \mathcal{J}_{I}$ for a column in $\mathbf{s}$. By \cite{GS}, see also \cite{GK},
part (d) can be represented by a set of columns $d (\mathbf{y} ,w) =\{I_{1} ,\ldots  ,I_{p}\}$. For a set $X$ of columns let $l (X)$ denote the total multiplicity of all columns in $X$. We have $l (d (\mathbf{y} ,w)) =w$. Let $I_1=(M_{I_1},\varepsilon_{I_1}),...,I_q=(M_{I_q},\varepsilon_{I_q})$ be a subset of $q\geq 1$ columns from $d (\mathbf{y} ,w)$, the set of columns  $(M_{I_1},\lambda_1),...,(M_{I_q},\lambda_q)$, where  $0\leq\lambda_1\leq \varepsilon_{I_1},...,0\leq \lambda_q\leq \varepsilon_{I_q}$ and $\lambda_1+...+\lambda_q=\lambda$ is called the \emph{interval} of length $\lambda$ in $d (\mathbf{y} ,w)$.

Let $u_{1} ,\ldots  ,u_{p}$ and $l_{1} ,\ldots  ,l_{q}$ be different jobs from $\mathcal{J}$, and $I$ be a column. We say that $I$ is of type
\begin{equation*}\binom{ \ast  ,u_{1} ,\ldots  ,u_{p}}{ \ast  ,l_{1} ,\ldots  ,l_{q}}
\end{equation*}if $\{(u_{1} ,h_{1}) ,\ldots  ,(u_{p} ,h_{p})\} \subseteq M_{I}$ for some machines $h_{1} ,\ldots  ,h_{p}$ in $\mathcal{G}_{1}$, and $\{(l_{1} ,H_{1})$ $ ,\ldots  ,(l_{q} ,H_{q})\} \subseteq M_{I}$ for some machines $H_{1} ,\ldots  ,H_{q}$ in $\mathcal{G}_{2}$. For convenience, we sometimes use the following notation
\begin{equation*}\binom{ \ast  ,U}{ \ast  ,L}
\end{equation*}
where $U=\{u_{1} ,\ldots  ,u_{p}\}$ and $L=\{l_{1} ,\ldots  ,l_{q}\}$.
By definition if $p =0$ or $q =0$, then the asterisk alone denotes any matching on $\mathcal{G}_{1}$ or $\mathcal{G}_{2}$ respectively. We extend this notation for convenience as follows. Let $u$ and $l$ be different jobs from $\mathcal{J}$, and $I$ be a column. We say that $I$ is of type
\begin{equation*}\binom{ \ast  ,\overline{u}}{ \ast  ,\overline{l}}
\end{equation*}if $(u ,h) \notin M_{I}$ for any machine $h \in \mathcal{G}_{1}$, and $(l ,H) \notin M_{I}$ for any machine $H \in \mathcal{G}_{2}$. 

The outline of the proof is as follows. Sections \ref{S4} and \ref{S5} characterize those columns that cannot occur in $\mathbf{s}$ with $\epsilon>0$ since their presence would contradict the optimality of $r$. Section \ref{S6} shows that each job in $B_1$ must be both $a$-tight and $d$-tight,
and each job in $B_2$ must be both $c$-tight and $d$-tight in $\mathbf{s}$ using the characterization. Section  \ref{crossi} proves that $B_1\cap B_2=\emptyset$  in $\mathbf{s}$. Section \ref{S8} shows that  the product $x_{j1}x_{j2}=0$  in $\mathbf{s}$ except for the case where $B_1=\{j\}$ or $B_2=\{j\}$. Section \ref{S10} proves that the fractions in $B_1$ or $B_2$ may not sum up to $\epsilon$, ($i_1=0$ or $i_2=0$), since if they did the $\mathbf{s}$ would not be optimal. The proof relies on the results of earlier sections. Section \ref{S12}   proves that the fractions in $B_1$ and $B_2$ may not sum up to $i_1+\epsilon$ and $i_2+\epsilon$, ($i_1>0$ and $i_2>0$), since again if they did the $\mathbf{s}$ would not be optimal. The proof relies on the projections introduced in Section \ref{S11} and the results of earlier sections. Section \ref{S13} summarizes these results proving that the $r$ in $\mathbf{s}$ must be integral, $\epsilon=0$. The section also shows how to obtain an integral optimal solution for $\ell p$ using the network flow problems from Sections \ref{S10} and \ref{S12}, and the integrality of $r$ and $w$.

\section{Columns absent from $d (\mathbf{y} ,w)$ in $\mathbf{s}$} \label{S4}
In this section we show that for two different jobs $g$ and $k$ jobs such that $x_{g 1} >0$ and $x_{k 2} >0$ certain columns or subsets of columns must be missing from $d (\mathbf{y} ,w)$ if $\epsilon>0$. Though these results are contingent on $\varepsilon _{r} (g ,k) >0$, we show that this condition often holds, for instance in Section \ref{crossi} we show that this inequality holds for each pair $g \in B_{1}$ and $k \in B_{2}$. 

Let $g$ and $k$ be two different jobs such that $x_{g 1} >0$ and $x_{k 2} >0$. A \emph{$(g ,k)$-feasible} \emph{semi-matching} in $G$ is a set of edges $E$ of $G$ of cardinality $m=|\mathcal{G}_1|+|\mathcal{G}_2|$ such that
\begin{enumerate}
\item $E_{1} =\{(j ,h) \in E :h \in \mathcal{G}_{1}\}$ and $E_{2} =\{(j ,h) \in E :h \in \mathcal{G}_{2}\}$ are matchings; 

\item there are $h \in \mathcal{M}$ and $(j ,h) \in E$ for each $j \in D$; 

\item if $\varepsilon _{a} (g) =0$, then $(g ,h) \notin E_{2}$ for any $h \in \mathcal{G}_{2}$; 

\item if $\varepsilon _{c} (k) =0$, then $(k ,h) \notin E_{1}$ for any $h \in \mathcal{G}_{1}$. \end{enumerate}

If $E$ is a matching, then a $(g ,k)$-feasible semi-matching in $G$ is called a \emph{$(g ,k)$-feasible} \emph{matching} in $G$.

We define solution $(\mathbf{y} (E) ,\mathbf{x} (g ,k) ,r (g ,k) ,w (g ,k) ,\varepsilon )$ for jobs $g$, $k$, and a $(g ,k)$-feasible semi-matching $E$, where 

\begin{equation}\varepsilon  =\left \{\begin{array}{cc}\varepsilon'  & \;\text{if}\;\varepsilon _{a} (g) =0\;\text{and}\;\varepsilon _{c} (k) =0\;\text{;}\; \\
\min \{\varepsilon' ,\varepsilon _{a} (g)\} & \;\text{if}\;\varepsilon _{a} (g) >0\;\text{and}\;\varepsilon _{c} (k) =0\;\text{;}\; \\
\min \{\varepsilon',\varepsilon _{c} (k)\} & \;\text{if}\;\varepsilon _{a} (g) =0\;\text{and}\;\varepsilon _{c} (k) >0\;\text{;}\; \\
\min \{\varepsilon',\varepsilon _{a} (g) ,\varepsilon _{c} (k)\} & \;\text{if}\;\varepsilon _{a} (g) >0\;\text{and}\;\varepsilon _{c} (k) >0\;\text{;}\;\end{array}\right . \label{epsy}
\end{equation}and
\begin{equation}\varepsilon'  =\min \{\epsilon  ,\varepsilon _{r} (g ,k) ,x_{g 1} ,x_{k 2} ,\min_{(j ,h) \in E}\{y_{j h}\} ,\min_{j \in \mathcal{J} \setminus D}\{r -\sum _{h}y_{j h}\}\}\;\text{,}\; \label{x1}
\end{equation}as follows
\begin{equation}y_{j h} (E) =\left \{\begin{array}{cc}y_{j h} -\varepsilon  & \;\text{if}\;(j ,h) \in E\;\text{;}\; \\
y_{j h} & \;\text{otherwise}\; ;\end{array}\right . \label{x2}
\end{equation}
\begin{equation}x_{j 1} (g ,k) =\left \{\begin{array}{cc}x_{g 1} -\varepsilon  & \;\text{if}\;j =g\;\text{;}\; \\
x_{j 1} & \;\text{if}\;j \neq g ;\end{array}\right . \label{x3}
\end{equation}
\begin{equation}x_{j 2} (g ,k) =\left \{\begin{array}{cc}x_{k 2} -\varepsilon  & \;\text{if}\;j =k\;\text{;}\; \\
x_{k 2} & \;\text{if}\;j \neq k ;\end{array}\right . \label{x4}
\end{equation}

\begin{equation}r (g ,k) =r -\varepsilon  ; \label{x5}
\end{equation}

\begin{equation}w (g ,k) =w -\varepsilon \;\text{.}\; \label{x6}
\end{equation}

We have the following lemma

\begin{lemma}
\label{reduc} Let $g$ and $k$ be two different jobs such that $x_{g 1} >0$ and $x_{k 2} >0$. If $\varepsilon _{r} (g ,k) >0$, then no $(g ,k)$-feasible semi-matching $E$ in $G$ exists. 
\end{lemma}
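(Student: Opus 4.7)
The plan is proof by contradiction: suppose such a $(g,k)$-feasible semi-matching $E$ exists, and show that the tuple $(\mathbf{y}(E), \mathbf{x}(g,k), r(g,k), w(g,k))$ defined by (\ref{x2})--(\ref{x6}) is a feasible solution to $\ell p$ with strictly smaller $r$, contradicting the optimality of $\mathbf{s}$ with respect to minimizing $r$.

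First I would verify that $\varepsilon > 0$. Each of the six quantities in the min of (\ref{x1}) is strictly positive: $\epsilon > 0$ by the standing assumption of Theorem \ref{integer}, $\varepsilon_r(g,k) > 0$ by hypothesis, $x_{g1}, x_{k2} > 0$ by hypothesis, $y_{jh} > 0$ for each $(j,h) \in E$ because edges of $G$ are precisely the pairs with $y_{jh} > 0$, and the last term is positive by the definition of $D$. The refinements involving $\varepsilon_a(g)$ and $\varepsilon_c(k)$ in (\ref{epsy}) engage only when those quantities are positive, so $\varepsilon > 0$ in every branch.

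Next, I would check feasibility constraint by constraint. Constraint (\ref{B20}) and constraints (\ref{bm1})--(\ref{bm2}) are immediate, since $w$, $r$, $\sum_j x_{j1}$ and $\sum_j x_{j2}$ all drop by $\varepsilon$ in lockstep. For (\ref{y1}) and (\ref{y2}), the cardinality condition $|E|=m$ combined with the matching condition forces $E_1$ and $E_2$ to saturate $\mathcal{G}_1$ and $\mathcal{G}_2$ respectively, so $\sum_j y_{jh}$ drops by exactly $\varepsilon$ at every machine, matching the drop in the lower bound. Constraint (\ref{r1d}) is controlled for $j \in D$ by condition 2 (such $j$ must be matched in $E$) and for $j \notin D$ by the slack term in (\ref{x1}). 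The upper bounds in (\ref{y4}) and (\ref{y5}) only relax under decreases, and the lower bound $y_{jh} \geq 0$ holds by the $\min_{(j,h) \in E}\{y_{jh}\}$ term. For (\ref{y3}), jobs in $(B_1 \cup B_2) \setminus \{g,k\}$ are handled by the definition of $\varepsilon_r(g,k)$, jobs outside $B_1 \cup B_2$ have integer $x_{j1}+x_{j2}$ and hence slack at least $\epsilon \geq \varepsilon$ relative to $r = \lfloor r^*\rfloor + \epsilon$, and for $j \in \{g,k\}$ both sides shift by exactly $\varepsilon$.

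The main obstacle is constraints (\ref{r1a}) and (\ref{r1c}) at $j = k$ and $j = g$ respectively, where the left-hand side can grow by $\varepsilon$ from the $x$-update and by another $\varepsilon$ from the $y$-update if $k$ (resp.\ $g$) is also matched on the ``wrong'' side of $\mathcal{M}$, while the right-hand side grows by only $\varepsilon$. This is precisely the role of conditions 3 and 4: when $k$ is $c$-tight the extra $\varepsilon$ would be fatal, so condition 4 forbids $(k,h) \in E_1$ altogether, whereas if $k$ is not $c$-tight the slack $\varepsilon_c(k)$ absorbs it, matching the branches of (\ref{epsy}). The symmetric argument with condition 3 and $\varepsilon_a(g)$ handles $j = g$ in (\ref{r1c}). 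With feasibility established, $r(g,k) = r - \varepsilon < r$ contradicts the minimality of $r$ in the optimal solution $\mathbf{s}$, completing the proof.
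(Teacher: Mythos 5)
Your proposal is correct and follows essentially the same route as the paper: assume a $(g,k)$-feasible semi-matching exists, verify constraint by constraint that the perturbed solution $(\mathbf{y}(E),\mathbf{x}(g,k),r(g,k),w(g,k))$ is feasible for $\ell p$ (using that $|E_1|+|E_2|=m$ saturates every machine, that $D$ is covered, and that conditions 3 and 4 protect (\ref{r1a}) and (\ref{r1c}) at $k$ and $g$), and conclude $r(g,k)=r-\varepsilon<r$ contradicts optimality. If anything, you are slightly more explicit than the paper about why $\varepsilon>0$ and why (\ref{y3}) holds for jobs outside $B_1\cup B_2$, but the argument is the same.
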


\begin{proof}
Suppose for a contradiction that a $(g ,k)$-feasible semi-matching $E$ in $G$ exists. It suffices to show that solution $s =(\mathbf{y} (E) ,\mathbf{x} (g ,k) ,r (g ,k) ,w (g ,k) ,\varepsilon )$ is feasible for $l p$. The feasibility of $s$ implies that $r$ is not optimal for $l p$ since by the lemma assumptions $\epsilon  >0$ and thus $r (g ,k) <r$ which gives a contradiction. To prove the feasibility of $s$ we observe that since both $E_{1}$ and $E_{2}$ are matchings and $\vert E_{1}\vert  +\vert E_{2}\vert  =m$ we have
\begin{equation*}\sum _{j}y_{j h} (E) =\sum _{j}y_{j h} -\varepsilon \;\;\text{\ \ }\;\; \;\;\text{}\;\;h \in \mathcal{G}_{1}
\end{equation*}and
\begin{equation*}\sum _{j}y_{j h} (E) =\sum _{j}y_{j h} -\varepsilon \;\;\text{\ \ }\;\; \;\;\text{}\;\;h \in \mathcal{G}_{2}\;\text{.}\;
\end{equation*}Thus both (\ref{y1})
and (\ref{y2}) hold for $s$ by (\ref{x5}) and (\ref{x6}). The (\ref{x5})
and (\ref{x6}) also imply (\ref{B20}) for $s$. The (\ref{x3}) and (\ref{x4}) imply (\ref{bm1})
and (\ref{bm2}) for $s$. Since $\varepsilon _{r} (g ,k) >0$, definition of $\varepsilon $ guarantees (\ref{y3}) for $s$. The $E$ covers each job in $D$ at least once, thus (\ref{r1d}) holds for each job in $D$ in $s$, and by definition of $\varepsilon $ for each job in $\mathcal{J} \setminus D$. Clearly, the (\ref{y4}) and (\ref{y5}) are met for $s$ by the lemma assumptions and definition of $\varepsilon $. Therefore it remains to show that (\ref{r1a}) and (\ref{r1c})
hold for $s$. First, since both $E_{1}$ and $E_{2}$ are matchings we have 

\begin{equation*}\sum _{h \in \mathcal{G}_{1}}y_{j h} -\varepsilon  \leq \sum _{h \in \mathcal{G}_{1}}y_{j h} (E) \leq \sum _{h \in \mathcal{G}_{1}}y_{j h}\;\;\text{\ \ }\;\; \;\;\text{}\;\;j\in \mathcal{J}
\end{equation*}and
\begin{equation*}\sum _{h \in \mathcal{G}_{2}}y_{j h} -\varepsilon  \leq \sum _{h \in \mathcal{G}_{2}}y_{j h} (E) \leq \sum _{h \in \mathcal{G}_{2}}y_{j h}\;\;\text{\ \ }\;\; \;\;\text{}\;\;j\in \mathcal{J}\;\text{.}\;
\end{equation*}Hence (\ref{r1a})
and (\ref{r1c}) hold for all jobs in $\mathcal{J} \setminus \{g ,k\}$ for $s$, and for job $g$ provided that the slack $\varepsilon _{a} (g)$ of $g$ in $(a)$ is positive, and for $k$ provided that the slack $\varepsilon _{c} (k)$ of $k$ in $(c)$ is positive. The (\ref{r1a}) and (\ref{r1c})
hold for $g$ and $k$ with no slack in $(a)$ and $(c)$ respectively as well since then
\begin{equation*}\sum _{h \in \mathcal{G}_{1}}y_{k h} (E) =\sum _{h \in \mathcal{G}_{1}}y_{k h}
\end{equation*}and
\begin{equation*}\sum _{h \in \mathcal{G}_{2}}y_{g h} (E) =\sum _{h \in \mathcal{G}_{2}}y_{g h}
\end{equation*}by the conditions (3) and (4) in definition of $E$. Therefore $s$ is feasible for $l p$ and we get the required contradiction. 
\end{proof}

\begin{lemma}
\label{onecolumn} Let $g$ and $k$ be two different jobs such that $x_{g 1} >0$ and $x_{k 2} >0$. If $\varepsilon _{r} (g ,k) >0$, then no column of type $\binom{ \ast  ,\overline{k}}{ \ast  ,\overline{g}}$ exists in $d (\mathbf{y} ,w)$. 
\end{lemma}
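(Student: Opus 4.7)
The plan is to directly contradict Lemma \ref{reduc} by exhibiting a $(g,k)$-feasible matching in $G$ built from the hypothetical column.

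Suppose for contradiction that a column $I=(M_I,\varepsilon_I)$ of type $\binom{*,\overline{k}}{*,\overline{g}}$ belongs to $d(\mathbf{y},w)$. I would take $E:=M_I$ and verify the four defining conditions of a $(g,k)$-feasible semi-matching. First, $E$ is a set of $m=|\mathcal{G}_1|+|\mathcal{G}_2|$ edges of $G$: it is a matching in $G$ by definition of a column, and each edge lies in $G$ since $y_{jh}\geq \varepsilon_I>0$ for every $(j,h)\in M_I$. Splitting $E$ into $E_1$ (edges incident to $\mathcal{G}_1$) and $E_2$ (edges incident to $\mathcal{G}_2$) yields two matchings, giving condition (1).

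For condition (2), the definition of a column in $\mathbf{s}$ requires $D\subseteq \mathcal{J}_I$, so every $d$-tight job is matched in $M_I$. Conditions (3) and (4) follow directly from the assumed type $\binom{*,\overline{k}}{*,\overline{g}}$: no edge $(g,h)$ with $h\in\mathcal{G}_2$ occurs in $M_I$, so $(g,h)\notin E_2$, and no edge $(k,h)$ with $h\in\mathcal{G}_1$ occurs in $M_I$, so $(k,h)\notin E_1$. In particular these exclusions hold regardless of the values of $\varepsilon_a(g)$ and $\varepsilon_c(k)$, so both conditions are satisfied vacuously in the cases where the slacks are positive and nontrivially when they vanish.

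Thus $E=M_I$ is a $(g,k)$-feasible matching (hence also a $(g,k)$-feasible semi-matching) in $G$. Since the hypothesis $\varepsilon_r(g,k)>0$ is inherited from the statement, Lemma \ref{reduc} applies and yields a contradiction. There is no real obstacle here beyond carefully matching the hypothetical column's structure to the four conditions; the lemma is essentially a direct translation of Lemma \ref{reduc} to the special case where the semi-matching comes from a single existing column.
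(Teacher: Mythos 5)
Your proof is correct and follows essentially the same route as the paper: the paper likewise observes that the matching $M_I$ of the hypothetical column is itself a $(g,k)$-feasible semi-matching and invokes Lemma \ref{reduc} for the contradiction. Your write-up merely spells out the verification of the four conditions (cardinality, $D\subseteq \mathcal{J}_I$, and the two type-based exclusions), which the paper leaves implicit.
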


\begin{proof}
If such a column $I =(M_{I} ,\epsilon _{I})$ exists, then $M_{I}$ is $(g ,k)$-feasible semi-matching $E$ in $G$ which contradicts Lemma \ref{reduc}. 
\end{proof}

We now consider another
forbidden configuration of columns in $d (\mathbf{y} ,w)$. Let $I_{1} =(M_{I_{1}} ,\epsilon _{I_{1}})$ and $I_{2} =(M_{I_{2}} ,\epsilon _{I_{2}})$ be two columns. Let $g$, $k$, $a$, and $b$ be four different jobs such that $x_{g 1} >0$, $x_{k 2} >0$, $x_{a 1} >0$, and $x_{b 2} >0$. Define solution $(\mathbf{y} (I_{1} ,I_{2}) ,\mathbf{x} '  ,r'  ,w',\varepsilon )$, where
\begin{equation}\varepsilon  =\min \{\epsilon  ,\varepsilon _{r} (g ,k) ,\varepsilon _{r} (a ,b) ,x_{g 1} ,x_{a 1} ,x_{b 2} ,x_{k 2} ,\epsilon _{I_{1}} ,\epsilon _{I_{2}} ,\min_{j \in \mathcal{J} \setminus D}\{r -\sum _{h}y_{j h}\}\} \label{ax1}
\end{equation}as follows
\begin{equation}y_{j h} (I_{1} ,I_{2}) =\left \{\begin{array}{cc}y_{j h} -\varepsilon  & \;\text{if}\;(j ,h) \in M_{I_{1}}\;\text{and}\;(j ,h) \in M_{I_{2}}\;\text{;}\; \\
y_{j h} -\varepsilon /2 & \;\text{if}\;(j ,h) \in M_{I_{1}}\;\text{and}\;(j ,h) \notin M_{I_{2}}\;\text{;}\; \\
y_{j h} -\varepsilon /2 & \;\text{if}\;(j ,h) \notin M_{I_{1}}\;\text{and}\;(j ,h) \in M_{I_{2}}\;\text{;}\; \\
y_{j h} & \;\text{otherwise}\; ;\end{array}\right . \label{ax2}
\end{equation}
\begin{equation}x' _{j 1} =\left \{\begin{array}{cc}x_{j 1} -\varepsilon /2 & \;\text{if}\;j =g\;\text{or}\;j =a\;\text{;}\; \\
x_{j 1} & \;\text{otherwise}\; ;\end{array}\right . \label{ax3}
\end{equation}
\begin{equation}x'_{j 2} =\left \{\begin{array}{cc}x_{j 2} -\varepsilon /2 & \;\text{if}\;j =k\;\text{or}\;j =b\;\text{;}\; \\
x_{j 2} & \;\text{otherwise}\; ;\end{array}\right . \label{ax4}
\end{equation}

\begin{equation}r' =r -\varepsilon  ; \label{ax5}
\end{equation}

\begin{equation}w'=w -\varepsilon \;\text{.}\; \label{ax6}
\end{equation}

We
have the following lemma

\begin{lemma}
\label{abgkpath} Let $g$, $k$, $a$, and $b$ be four different jobs such that $x_{g 1} >0$, $x_{k 2} >0$, $x_{a 1} >0$, and $x_{b 2} >0$. If $\varepsilon _{r} (g ,k) >0$ and $\varepsilon _{r} (a ,b) >0$, then a column of type $\binom{ \ast  ,a ,b ,g ,k}{ \ast }$ does not exist in $d (\mathbf{y} ,w)$ or a column of type $\binom{ \ast }{ \ast  ,a ,b ,k ,g}$ does not exist in $d (\mathbf{y} ,w)$. 
\end{lemma}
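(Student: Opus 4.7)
The plan is to argue by contradiction: suppose that both a column $I_1$ of type $\binom{\ast,a,b,g,k}{\ast}$ and a column $I_2$ of type $\binom{\ast}{\ast,a,b,k,g}$ occur in $d(\mathbf{y},w)$, and show that the solution $(\mathbf{y}(I_1,I_2),\mathbf{x}',r',w',\varepsilon)$ defined by (\ref{ax1})--(\ref{ax6}) is feasible for $\ell p$. Since each quantity entering (\ref{ax1}) is strictly positive under the hypotheses ($\epsilon$, $\varepsilon_r(g,k)$, $\varepsilon_r(a,b)$, $x_{g1}$, $x_{a1}$, $x_{b2}$, $x_{k2}$, $\varepsilon_{I_1}$, $\varepsilon_{I_2}$, as well as the (\ref{r1d})-slack for every $j\notin D$), feasibility gives $r'=r-\varepsilon<r$, contradicting the optimality of $r$ in $\mathbf{s}$.

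Most of the feasibility checks mirror those in the proof of Lemma \ref{reduc}. Constraint (\ref{B20}) is immediate from (\ref{ax5})--(\ref{ax6}). For (\ref{y1}) and (\ref{y2}), both $M_{I_1}$ and $M_{I_2}$ cover every machine, so on each $h$ the total reduction of $\sum_j y_{jh}$ is exactly $\varepsilon$ whether the two columns share the edge at $h$ (one full $\varepsilon$ drop) or not (two $\varepsilon/2$ drops), matching the $\varepsilon$-decrease of both sides. For (\ref{r1d}), every $j\in D$ lies in $\mathcal{J}_{I_1}\cap\mathcal{J}_{I_2}$ by the column definition, so $\sum_h y_{jh}$ drops by exactly $\varepsilon$; the remaining $j\notin D$ are absorbed by the explicit slack term in (\ref{ax1}). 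Constraints (\ref{bm1}), (\ref{bm2}), (\ref{y4}), (\ref{y5}) are routine from (\ref{ax3})--(\ref{ax6}) together with the bounds $\varepsilon\leq\min\{x_{g1},x_{a1},x_{b2},x_{k2},\varepsilon_{I_1},\varepsilon_{I_2}\}$. For (\ref{y3}) I would exploit both $\varepsilon\leq\varepsilon_r(g,k)$ and $\varepsilon\leq\varepsilon_r(a,b)$: since $a,b,g,k$ are pairwise distinct, every $j\in B_1\cup B_2$ belongs either to $(B_1\cup B_2)\setminus\{g,k\}$ or to $(B_1\cup B_2)\setminus\{a,b\}$, yielding $\varepsilon\leq r-(x_{j1}+x_{j2})$, while for $j\notin B_1\cup B_2$ the integrality of $x_{j1}+x_{j2}$ forces $x_{j1}+x_{j2}\leq\lfloor r\rfloor\leq r-\varepsilon$. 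Since any $x$-reduction is at most $\varepsilon/2$ and occurs only for $j\in\{a,b,g,k\}$, this preserves (\ref{y3}).

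The heart of the proof, and its main obstacle, is verifying (\ref{r1a}) and (\ref{r1c}): the $\varepsilon/2$-splitting of both the $x$- and the $y$-updates must exactly cancel the $\varepsilon$-decrease in the right-hand side $\Delta(\mathcal{G}_\ell)-r$, and, unlike in Lemma \ref{reduc}, the slacks $\varepsilon_a(g)$ and $\varepsilon_c(k)$ do not appear in (\ref{ax1}). The cancellation is structural. For $j\in\{a,g\}$, only $I_2$ matches $j$ inside $\mathcal{G}_2$ (because $I_1$ already matches $j$ to $\mathcal{G}_1$), so $\sum_{h\in\mathcal{G}_2}y_{jh}$ loses $\varepsilon/2$ and, simultaneously, $x_{j1}$ loses $\varepsilon/2$; the left-hand side of (\ref{r1c}) therefore grows by $\varepsilon/2+\varepsilon/2=\varepsilon$, exactly matching the $\varepsilon$-increase of $\Delta(\mathcal{G}_1)-r$. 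Symmetrically, for $j\in\{b,k\}$ the left-hand side of (\ref{r1a}) grows by exactly $\varepsilon$. For $j\in\{a,g\}$ tested against (\ref{r1a}) and for $j\in\{b,k\}$ tested against (\ref{r1c}) the relevant $x_{j\ell}$ is unchanged and the left-hand side grows only by $\varepsilon/2\leq\varepsilon$, while for $j\notin\{a,b,g,k\}$ the $x$-values are unchanged and the $y$-reduction is at most $\varepsilon$. This self-balancing effect, enabled by the pairing of $a,g$ with $b,k$ across the two columns of opposite type, is precisely what lets the perturbation decrease $r$ without ever relying on slack in tight (\ref{r1a}) or (\ref{r1c}) constraints.
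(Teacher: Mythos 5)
Your proposal is correct and follows essentially the same route as the paper: you perturb along the two columns via (\ref{ax1})--(\ref{ax6}), verify feasibility constraint by constraint, and use the same key cancellation for (\ref{r1a}) and (\ref{r1c}), namely that for $j\in\{a,g\}$ (resp.\ $j\in\{b,k\}$) the column types force the $\mathcal{G}_2$- (resp.\ $\mathcal{G}_1$-) sum of $y_{jh}$ to drop by exactly $\varepsilon/2$, which together with the $\varepsilon/2$ drop in $x_{j1}$ (resp.\ $x_{j2}$) matches the $\varepsilon$ shift of the right-hand side. Your treatment of (\ref{y3}) and of the jobs outside $\{a,b,g,k\}$ is just a more explicit spelling-out of what the paper leaves to the definition of $\varepsilon$.
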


\begin{proof}
Suppose for a contradiction that a column $I_{1}$ of type $\binom{ \ast  ,a ,b ,g ,k}{ \ast }$ is in $d (\mathbf{y} ,w)$ and a column $I_{2}$ of type $\binom{ \ast }{ \ast  ,a ,b ,k ,g}$ is in $d (\mathbf{y} ,w)$. It suffices to show that solution $s =(\mathbf{y} (I_{1} ,I_{2}) ,\mathbf{x}' ,r',w',\varepsilon )$ is feasible for $l p$. The feasibility of $s$ implies that $r$ is not optimal for $l p$ since by the lemma assumptions $\epsilon  >0$ and thus $r' <r$ which leads to a contradiction. To prove the feasibility of $s$ we observe that since both $M_{I_{1}}$ and $M_{I_{2}}$ are matchings that cover all machines we have
\begin{equation*}\sum _{j}y_{j h} (I_{1} ,I_{2}) =\sum _{j}y_{j h} -\varepsilon \;\;\text{\ \ }\;\; \;\;\text{}\;\;h \in \mathcal{G}_{1}
\end{equation*}and
\begin{equation*}\sum _{j}y_{j h} (I_{1} ,I_{2}) =\sum _{j}y_{j h} -\varepsilon \;\;\text{\ \ }\;\; \;\;\text{}\;\;h \in \mathcal{G}_{2}.
\end{equation*}Thus both (\ref{y1}) and (\ref{y2}) hold for $s$ by (\ref{ax5}) and (\ref{ax6}). The (\ref{ax5})
and (\ref{ax6}) also imply (\ref{B20}) for $s$. The (\ref{ax3}) and (\ref{ax4}) imply (\ref{bm1})
and (\ref{bm2}) for $s$ respectively. Since $\varepsilon _{r} (g ,k) >0$ and $\varepsilon _{r} (a ,b) >0$, definition of $\varepsilon $ guarantees (\ref{y3}) for $s$. The $M_{I_{1}} \cup M_{I_{2}}$ covers each job in $D$ exactly twice, thus (\ref{r1d}) holds for each job in $D$ in $s$, and by definition of $\varepsilon $ for each job in $\mathcal{J} \setminus D$. Clearly, the (\ref{y4}) and (\ref{y5}) are met for $s$ by the lemma assumptions and definition of $\varepsilon $. Therefore it remains to show that (\ref{r1a}) and (\ref{r1c})
hold for $s$. First, both $M_{I_{1}}$ and $M_{I_{2}}$ are matchings we have 

\begin{equation*}\sum _{h \in \mathcal{G}_{1}}y_{j h} -\varepsilon  \leq \sum _{h \in \mathcal{G}_{1}}y_{j h} (I_{1} ,I_{2}) \leq \sum _{h \in \mathcal{G}_{1}}y_{j h}\;\;\text{\ \ }\;\; \;\;\text{}\;\;j\in \mathcal{J}
\end{equation*}and
\begin{equation*}\sum _{h \in \mathcal{G}_{2}}y_{j h} -\varepsilon  \leq \sum _{h \in \mathcal{G}_{2}}y_{j h} (I_{1} ,I_{2}) \leq \sum _{h \in \mathcal{G}_{2}}y_{j h}\;\;\text{\ \ }\;\; \;\;\text{}\;\;j\in \mathcal{J}
\end{equation*}Hence (\ref{r1a}) and (\ref{r1c}) hold for all
jobs in $\mathcal{J} \setminus \{a ,g ,b ,k\}$ for $s$. They hold for $g$ and $a$ as well since
\begin{equation*}\sum _{h \in \mathcal{G}_{2}}y_{g h} (I_{1} ,I_{2}) =\sum _{h \in \mathcal{G}_{2}}y_{g h} -\epsilon /2
\end{equation*}and
\begin{equation*}\sum _{h \in \mathcal{G}_{2}}y_{a h} (I_{1} ,I_{2}) =\sum _{h \in \mathcal{G}_{2}}y_{a h} -\epsilon /2
\end{equation*}by the types of $I_{1}$ and $I_{2}$, and for $b$ and $k$ since
\begin{equation*}\sum _{h \in \mathcal{G}_{1}}y_{k h} (I_{1} ,I_{2}) =\sum _{h \in \mathcal{G}_{1}}y_{k h} -\epsilon /2
\end{equation*}and
\begin{equation*}\sum _{h \in \mathcal{G}_{1}}y_{b h} (I_{1} ,I_{2}) =\sum _{h \in \mathcal{G}_{1}}y_{b h} -\epsilon /2
\end{equation*}by the types of $I_{1}$ and $I_{2}$. Therefore $s$ is feasible for $l p$ and we get the required contradiction. 
\end{proof}

The following two corollaries follow immediately from the
proof of Lemma \ref{abgkpath}.

\begin{corollary}
\label{agkpath1} Let $g$, $k$, and $a$ be three different jobs such that $x_{g 1} >0$, $x_{k 2} >0$, and $x_{a 1} x_{a 2} >0$. If $\varepsilon _{r} (g ,a) >0$ and $\varepsilon _{r} (a ,k) >0$, then a column of type $\binom{ \ast  ,a ,g ,k}{ \ast }$ does not exist in $d (\mathbf{y} ,w)$ or a column of type $\binom{ \ast }{ \ast  ,a ,k ,g}$ does not exist in $d (\mathbf{y} ,w)$. 
\end{corollary}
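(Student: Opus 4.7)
The plan is to mimic the proof of Lemma \ref{abgkpath} with the identification $b := a$. Since the hypothesis $x_{a1}x_{a2}>0$ supplies both $x_{a1}>0$ and $x_{a2}>0$, the single job $a$ can simultaneously play the two roles that $a$ and $b$ play in Lemma \ref{abgkpath}. Suppose for a contradiction that a column $I_1$ of type $\binom{\ast,a,g,k}{\ast}$ and a column $I_2$ of type $\binom{\ast}{\ast,a,k,g}$ both appear in $d(\mathbf{y},w)$. I would construct a candidate solution $s$ using the same $y$-updates as in (\ref{ax2}), but with the $x$-updates collapsed: $x_{g1}$ and $x_{a1}$ each drop by $\varepsilon/2$, and $x_{k2}$ and $x_{a2}$ each drop by $\varepsilon/2$, while $r$ and $w$ drop by $\varepsilon$. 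The step size $\varepsilon$ would be defined as in (\ref{ax1}), with $x_{b2}$ replaced by $x_{a2}$ and the pair $\varepsilon_r(g,k),\varepsilon_r(a,b)$ replaced by $\varepsilon_r(g,a),\varepsilon_r(a,k)$.

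Feasibility of $s$ would then be checked exactly as in the proof of Lemma \ref{abgkpath}. Constraints (\ref{y1}), (\ref{y2}), (\ref{r1d}) hold because $M_{I_1}$ and $M_{I_2}$ each cover every machine and every $d$-tight job; (\ref{y4}), (\ref{y5}), (\ref{B20}), (\ref{bm1}), (\ref{bm2}) follow from the choice of $\varepsilon$ and from the updates; and (\ref{r1a}), (\ref{r1c}) for jobs outside $\{g,a,k\}$ follow because $E_1,E_2$ are matchings, while for $g$ and $k$ they follow from the type specifications of $I_1$ and $I_2$ exactly as in Lemma \ref{abgkpath}. For job $a$, the type specifications force $\sum_{h\in\mathcal{G}_2} y_{ah}(I_1,I_2)=\sum_{h\in\mathcal{G}_2}y_{ah}-\varepsilon/2$ and $\sum_{h\in\mathcal{G}_1} y_{ah}(I_1,I_2)=\sum_{h\in\mathcal{G}_1}y_{ah}-\varepsilon/2$, which combined with the $\varepsilon/2$-drops in $x_{a1}$ and $x_{a2}$ preserve (\ref{r1a}) and (\ref{r1c}). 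The conditions $\varepsilon_r(g,a)>0$ and $\varepsilon_r(a,k)>0$ jointly cover all jobs in $(B_1\cup B_2)\setminus\{a\}$ for constraint (\ref{y3}), and feasibility for $a$ itself is immediate since $x_{a1}'+x_{a2}'$ drops by exactly $\varepsilon$, matching the drop in $r$. The feasibility of $s$ with $r'=r-\varepsilon<r$ then contradicts the optimality of $r$ in $\ell p$.

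The only genuine novelty relative to Lemma \ref{abgkpath} is the verification of (\ref{y3}) for job $a$, and this is in fact the easiest case rather than the hardest: the simultaneous reduction of $x_{a1}$ and $x_{a2}$ by $\varepsilon/2$ each exactly matches the $\varepsilon$-reduction in $r$. Thus the corollary follows immediately from the proof of Lemma \ref{abgkpath} under the identification $b:=a$, as claimed in the excerpt.
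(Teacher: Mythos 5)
Your proposal is correct and is essentially the paper's own argument: the paper derives this corollary directly from the proof of Lemma \ref{abgkpath}, and your identification $b:=a$ (with $x_{a1}>0$ and $x_{a2}>0$ letting $a$ play both roles, the combined $\varepsilon/2+\varepsilon/2$ drop in $x_{a1}+x_{a2}$ matching the $\varepsilon$ drop in $r$, and $\varepsilon_r(g,a)$, $\varepsilon_r(a,k)$ covering the remaining jobs in constraint (\ref{y3})) is exactly the intended specialization.
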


\begin{corollary}
\label{agkpath2} Let $g$, and $k$ be two different jobs such that $x_{g 1} x_{g 2} >0$, and $x_{k 1} x_{k 2} >0$. If $\varepsilon _{r} (g ,k) >0$, then a column of type $\binom{ \ast  ,g ,k}{ \ast }$ does not exist in $d (\mathbf{y} ,w)$ or a column of type $\binom{ \ast }{ \ast  ,k ,g}$ does not exist in $d (\mathbf{y} ,w)$. 
\end{corollary}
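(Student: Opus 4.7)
The plan is to apply the proof of Lemma \ref{abgkpath} verbatim with the identification $a := k$ and $b := g$. With this identification, the lemma's hypotheses $x_{g1},x_{k2},x_{a1},x_{b2}>0$ become $x_{g1},x_{k1},x_{g2},x_{k2}>0$, which is exactly the corollary's hypothesis $x_{g1}x_{g2}>0$ and $x_{k1}x_{k2}>0$. Moreover, $\varepsilon_r(a,b)=\varepsilon_r(k,g)=\varepsilon_r(g,k)>0$ since $\varepsilon_r$ is symmetric in its two arguments, so the lemma's second $\varepsilon_r$ hypothesis is automatic. Finally, the column types $\binom{*,a,b,g,k}{*}$ and $\binom{*}{*,a,b,k,g}$ collapse to $\binom{*,g,k}{*}$ and $\binom{*}{*,k,g}$ respectively, because a matching cannot repeat a job on its $\mathcal{G}_1$- or $\mathcal{G}_2$-side.

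First I would assume for contradiction that a column $I_1$ of type $\binom{*,g,k}{*}$ and a column $I_2$ of type $\binom{*}{*,k,g}$ both appear in $d(\mathbf{y},w)$, and define the candidate solution $s=(\mathbf{y}(I_1,I_2),\mathbf{x}',r',w',\varepsilon)$ via (\ref{ax1})--(\ref{ax6}) with the above identification. Under this identification the update (\ref{ax3}) reduces both $x_{g1}$ and $x_{k1}$ by $\varepsilon/2$, while (\ref{ax4}) reduces both $x_{k2}$ and $x_{g2}$ by $\varepsilon/2$, so (\ref{bm1}) and (\ref{bm2}) continue to hold at the new value $r'=r-\varepsilon$.

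The delicate point is constraint (\ref{y3}) for the jobs $g$ and $k$ themselves, since each of them now absorbs both halves of the reduction: $x'_{g1}+x'_{g2}=x_{g1}+x_{g2}-\varepsilon\leq r-\varepsilon = r'$, which holds because (\ref{y3}) in $\mathbf{s}$ gives $x_{g1}+x_{g2}\leq r$; the same holds for $k$. For every other $j\in (B_1\cup B_2)\setminus\{g,k\}$, constraint (\ref{y3}) at the new $r'$ is guaranteed by the choice $\varepsilon\leq \varepsilon_r(g,k)$, exactly as in Lemma \ref{abgkpath}. All remaining feasibility checks carry over unchanged because $g$ still appears in $M_{I_1}$ on a machine of $\mathcal{G}_1$ and in $M_{I_2}$ on a machine of $\mathcal{G}_2$, and symmetrically for $k$; this is precisely what produces the computations $\sum_{h\in\mathcal{G}_2}y_{gh}(I_1,I_2)=\sum_{h\in\mathcal{G}_2}y_{gh}-\varepsilon/2$ and $\sum_{h\in\mathcal{G}_1}y_{kh}(I_1,I_2)=\sum_{h\in\mathcal{G}_1}y_{kh}-\varepsilon/2$ that preserve (\ref{r1a}) and (\ref{r1c}). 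Since $\varepsilon>0$ and hence $r'<r$, the feasibility of $s$ contradicts the optimality of $\mathbf{s}$.

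The only thing beyond Lemma \ref{abgkpath} that needs watching is the collapse from four jobs to two: one must verify that no constraint is effectively hit twice. As noted, the sums $x_{g1}+x_{g2}$ and $x_{k1}+x_{k2}$ do each decrease by a full $\varepsilon$ instead of $\varepsilon/2$, but this is exactly compensated by the simultaneous decrease of $r$ by $\varepsilon$, so the degeneracy is harmless. I expect this bookkeeping—checking that the doubled reductions on $g$ and $k$ are absorbed by the single reduction $r\to r-\varepsilon$—to be the only non-routine step of the argument.
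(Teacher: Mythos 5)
Your proposal is correct and matches the paper's intent exactly: the paper derives Corollary \ref{agkpath2} by observing that it "follows immediately from the proof of Lemma \ref{abgkpath}", which is precisely your specialization $a:=k$, $b:=g$ of the construction (\ref{ax1})--(\ref{ax6}). Your extra check that the doubled reductions in $x_{g1}+x_{g2}$ and $x_{k1}+x_{k2}$ are absorbed by the full decrease $r\to r-\varepsilon$ in (\ref{y3}) is the right (and only) degeneracy to verify, so the argument goes through as in the paper.
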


\section{Pairs of Columns Absent from $d (\mathbf{y} ,w)$ in $\mathbf{s}$} \label{S5}
Let $g$ and $k$ be two different jobs such that $x_{g 1} >0$, $x_{k 2} >0$.  Let  $I_{k} =(M_{I_{k}} ,\epsilon _{I_{k}})$ be a column of type $\binom{ \ast  ,\overline{k}}{ \ast }$,
and $I_{g} =(M_{I_{g}} ,\epsilon _{I_{g}})$ a column of type $\binom{ \ast }{ \ast  ,\overline{g}}$. Without loss of generality we assume $\epsilon _{I_{k}}= \epsilon _{I_{g}}=\epsilon$. Let $G (I_{g} ,I_{k}) =(M_{I_{g}} \cup M_{I_{k}})$ be a job-machine bi-partite multigraph graph, where an edge connects a machine $h$ and a job $j$ if and only if $(j ,h) \in M_{I_{g}} \cup M_{I_{k}}$. The degree of each machine-vertex in $G (I_{g} ,I_{k})$ is exactly $2$ and the degree of each job-vertex in $G (I_{g} ,I_{k})$ is either $1$ or $2$. Thus, $G (I_{g} ,I_{k})$ is a collection of connected components each of which is either a job-machine path or a job-machine cycle.

\begin{lemma}
\label{twocolumns} If $I_{k} ,I_{g} \in d (\mathbf{y} ,w)$, and $\varepsilon _{r} (g ,k) >0$, then $I_{k}$ is of type $\binom{ \ast }{ \ast  ,k ,g}$ and $I_{g}$ is of type $\binom{ \ast  ,k ,g}{ \ast }$ and both $k$ and $g$ belong to the same connected component of $G (I_{g} ,I_{k})$. 
\end{lemma}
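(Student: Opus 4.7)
The plan is to derive each piece of the conclusion by contradiction, in every case exhibiting a $(g,k)$-feasible semi-matching assembled from $M_{I_g}$ and $M_{I_k}$, and then invoking Lemma~\ref{reduc}. First I would use Lemma~\ref{onecolumn} to take care of two of the four ``matched to the correct side'' conclusions: since $I_k$ has type $\binom{\ast,\overline{k}}{\ast}$, it would be of the forbidden type $\binom{\ast,\overline{k}}{\ast,\overline{g}}$ unless $g$ is matched in $I_k$ to some machine $h^k_g \in \mathcal{G}_2$, and symmetrically $I_g$ must match $k$ to some $h^g_k \in \mathcal{G}_1$.

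Next I would exploit the structure of the bipartite multigraph $G(I_g,I_k)=M_{I_g}\cup M_{I_k}$, in which every machine has degree~$2$ and every job has degree $1$ or $2$, with $D$-jobs necessarily having degree~$2$. Thus $G(I_g,I_k)$ decomposes into alternating paths and cycles, and every $D$-job is either interior to a path or on a cycle. Let $C$ be the component containing $(g,h^k_g)$. To show $k\in C$, I would assume the contrary and form $M'=M_{I_k}\triangle C$, obtained by replacing the $M_{I_k}$-edges on $C$ by the $M_{I_g}$-edges on $C$ while keeping $M_{I_k}$ off $C$. Because the two colors of $C$-edges alternate, $M'$ is again a matching covering every machine; every $D$-job stays matched; $g$ becomes either unmatched (if $g$ is an endpoint of $C$ with only its blue edge) or matched into $\mathcal{G}_1$ (via its red edge, which lies in $\mathcal{G}_1$ by the type of $I_g$); and $k$, being untouched, remains out of $\mathcal{G}_1$ by the type of $I_k$. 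Hence $M'$ is $(g,k)$-feasible, contradicting Lemma~\ref{reduc}, and so $k\in C$. If $C$ is a cycle, then $g$ and $k$ both have degree~$2$, whence the types of $I_g$ and $I_k$ yield the rest of the conclusion immediately.

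When $C$ is a path, what remains is to rule out (a) $g$ being a blue-endpoint of $C$ (so $g\notin\mathcal{J}_{I_g}$) and (b) $k$ being a red-endpoint of $C$ (so $k\notin\mathcal{J}_{I_k}$). In each such sub-case I would construct a semi-matching $E$ by choosing, for every machine $w_j$ on $C$, either its red $M_{I_g}$-edge or its blue $M_{I_k}$-edge (recorded as $c_j\in\{r,b\}$), while keeping $M_{I_k}$ off $C$. The choice is forced to $r$ at the machine incident to $g$'s blue edge on $C$ (removing $g$ from $\mathcal{G}_2$) and to $b$ at the machine incident to $k$'s red edge on $C$ (removing $k$ from $\mathcal{G}_1$). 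The remaining $c_j$'s are chosen so that the whole sequence has the form $r\cdots r\, b\cdots b$ with a single $r\to b$ switch at a position where the two consecutive machines lie in different groups $\mathcal{G}_1$ and $\mathcal{G}_2$; such a switch exists between the two forced positions because the corresponding machines lie in $\mathcal{G}_2$ and $\mathcal{G}_1$ respectively, forcing at least one group change between them. This rule avoids the $(b,r)$ transitions that would leave a $D$-job uncovered and guarantees that the only job potentially incident to two edges of $E$ has those edges in different groups, so $E$ is a $(g,k)$-feasible semi-matching, again contradicting Lemma~\ref{reduc}.

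The main obstacle is this last step: one has to verify that the single rule ``force the two positions, then make one $r\to b$ switch at any group change between them'' simultaneously yields a valid semi-matching, covers every $D$-job on $C$, and achieves the exclusions at $g$ and $k$. The whole argument hinges on the forced positions near $g$ and $k$ lying in opposite machine groups, which is exactly the content of the first step obtained from Lemma~\ref{onecolumn}.
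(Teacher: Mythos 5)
Your proof is correct, and it reaches the conclusion by the same ultimate mechanism as the paper (exhibit a $(g,k)$-feasible semi-matching inside $M_{I_g}\cup M_{I_k}\subseteq G$ and contradict Lemma~\ref{reduc}), but the route is genuinely different in its execution. The paper never invokes Lemma~\ref{onecolumn}; it simply lists five degenerate configurations (different components, a missing job, a degree-one job, all of $k$'s edges in $\mathcal{G}_2$, all of $g$'s edges in $\mathcal{G}_1$) and asserts that in each case one can ``readily verify'' the existence of a feasible \emph{matching} of cardinality $m$ obtained essentially by swapping whole components between $M_{I_g}$ and $M_{I_k}$, adding a remark about repackaging $M$ and its complement as two new columns. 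You instead use Lemma~\ref{onecolumn} first to force $g$ onto $\mathcal{G}_2$ in $I_k$ and $k$ onto $\mathcal{G}_1$ in $I_g$, handle the same-component claim by a whole-component swap (as the paper implicitly does), and then treat the remaining degree-one cases by a finer within-path selection: one edge per machine, constant on each side of a single switch placed at a consecutive pair of machines lying in different groups, which exists because the two forced machines lie in $\mathcal{G}_2$ and $\mathcal{G}_1$ respectively. This produces a semi-matching with exactly one doubly covered job whose two edges go to different groups, so $E_1,E_2$ are matchings, all $d$-tight jobs (necessarily of degree two) stay covered, and $g$, $k$ avoid $\mathcal{G}_2$, $\mathcal{G}_1$ unconditionally. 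That finer construction is precisely the verification the paper leaves to the reader, and it is in fact needed in its semi-matching form: when $\varepsilon_a(g)=\varepsilon_c(k)=0$ and every interior job of the path is in $D$, no genuine matching of cardinality $m$ inside the component satisfies the exclusions, so the paper's literal claim of a feasible matching is too strong there, while your doubly-covered-job device goes through. The one point you should make explicit is the parity check: with the path oriented so that the degree-one job in question is the first endpoint, its unique edge fixes the alternation so that the single $r\to b$ switch yields a doubly covered job (not an uncovered one), and that job is interior, hence distinct from $g$ and $k$; with that spelled out, every step of your argument verifies.
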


\begin{proof}
By contradiction. We can readily verify that if
\begin{enumerate}
\item $k$ and $g$ are in different connected components of $G (I_{g} ,I_{k})$, or 

\item $k$ or $g$ is missing from $G (I_{g} ,I_{k})$ (i.e. at least one of them is not in $D$), or 

\item $k$ or $g$ is of degree $1$ in $G (I_{g} ,I_{k})$ (i.e. at least one of them is not in $D$), or 

\item $(k ,h) \in G (I_{g} ,I_{k})$ implies $h \in \mathcal{G}_{2}$, or 

\item $(g ,h) \in G (I_{g} ,I_{k})$ implies $h \in \mathcal{G}_{1}$, \end{enumerate}
then there is a matching $M$ of cardinality $m$ in $G (I_{g} ,I_{k}) \subseteq G$ that satisfies definition of $(g ,k)$-feasible semi-matching in $G$. This however contradicts Lemma \ref{reduc} since $I_{k} ,I_{g}$  in $d (\mathbf{y} ,w)$ can be replaced
by columns $I'=(M,\epsilon)$ and $I''= ((M_{I_{g}} \cup M_{I_{k}})\setminus M, \epsilon)$ resulting into another feasible solution to $\ell p$ with the same value $r$ of objective function but with a $(g ,k)$-feasible semi-matching  $M$.
\end{proof}

%\section{Forbidden Columns in $S (\mathbf{y} ,\mathbf{x})$}
%Let $I =(M_{I} ,\epsilon _{I})$ be a column of $S (\mathbf{y} ,\mathbf{x})$. Let $g$ with $x_{g 1} >0$ and $k \in B_{2}$ and $x_{k 2} >0$.
%\begin{equation}\varepsilon _{c} (g ,k) =\sum _{h \in \mathcal{G}_{1}}y_{k h} -(\sum _{h \in \mathcal{G}_{1}}b_{k h} +a_{k 2} -x_{k 2} - \Delta (\mathcal{G}_{2}) +r)
%\end{equation}
%\begin{equation}\varepsilon _{a} (g ,k) =\sum _{h \in \mathcal{G}_{2}}y_{g h} -(\sum _{h \in \mathcal{G}_{2}}b_{g h} +a_{g 1} -x_{g 1} - \Delta (\mathcal{G}_{1}) +r)
%\end{equation}
%\begin{equation}\varepsilon  =\min \{\epsilon  ,\varepsilon _{r} (g ,k) ,\varepsilon _{a} (g ,k) ,\varepsilon _{c} (g ,k) ,x_{g 1} ,x_{k 2} ,\min_{(j ,k) \in E}\{y_{j h}\} ,\min_{j \in \mathcal{J} \setminus D}\{r -\sum _{h}y_{j h}\}\} \label{x1}
%\end{equation}

%\begin{lemma}
%\label{acr} Let $g$ with $x_{g 1} >0$ and $k$ with $x_{k 2} >0$ and $\epsilon _{r} (g ,k) >0$. Then $\min \{\epsilon _{a} (g ,k) ,\epsilon _{c} (g ,k)\} =0$. 
%\end{lemma}

%\begin{proof}
%By contradiction. 
%\end{proof}

%\begin{lemma}
%\label{acr1} Let $g$ with $x_{g 1} >0$ and $k$ with $x_{k 2} >0$ and $\epsilon _{r} (g ,k) >0$. If $\epsilon _{a} (g ,k) >0$, then there is no column of type $\binom{ \ast }{ \ast  ,\overline{g}}$. 
%\end{lemma}

%\begin{proof}
%By contradiction. 
%\end{proof}

%\begin{lemma}
%\label{acr2} Let $g$ with $x_{g 1} >0$ and $k$ with $x_{k 2} >0$ and $\epsilon _{r} (g ,k) >0$. If $\epsilon _{c} (g ,k) >0$, then there is no column of type $\binom{ \ast  ,\overline{k}}{ \ast }$. 
%\end{lemma}

%\begin{proof}
%By contradiction. 
%\end{proof}

\section{The $a$-, $c$-, and $d$-tightness in $\mathbf{s}$} \label{S6}
We show that each job in $B_{1}$ is both $a$-tight and $d$-tight, and each job in $B_{2}$ is both $c$-tight and $d$-tight. We begin by showing the $a$- and $c$- tightness. 

\bigskip 
\begin{lemma}
\label{ac} \label{LL1}Each job $g \in B_{1}$is a-tight and
\begin{equation}\sum _{h \in \mathcal{G}_{2}}y_{g h} <w , \label{ee1}
\end{equation}and each job $k \in B_{2}$ is $c$-tight and
\begin{equation}\sum _{h \in \mathcal{G}_{1}}y_{k h} <w . \label{ee2}
\end{equation}
\end{lemma}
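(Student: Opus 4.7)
My plan is to prove Lemma~\ref{ac} by contradiction, invoking Lemma~\ref{reduc} to show that failure of either claim would let us reduce $r$. The argument reduces to proving $a$-tightness of $g\in B_1$; the strict inequality $\sum_{h\in\mathcal{G}_2}y_{gh}<w$ then follows from a short integrality calculation. The $B_2$ half mirrors the $B_1$ half under the symmetry that swaps $\mathcal{G}_1$ with $\mathcal{G}_2$ and constraint~(\ref{r1a}) with~(\ref{r1c}).

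For the $a$-tightness, fix $g\in B_1$, pick $k\in B_2\setminus\{g\}$ (the degenerate configuration $B_2\subseteq\{g\}$ reduces to a self-swap along the lines of Corollary~\ref{agkpath2}), and assume $\varepsilon_a(g)>0$. If $\varepsilon_r(g,k)>0$ and $\varepsilon_c(k)>0$, then conditions~(3) and~(4) in the definition of a $(g,k)$-feasible semi-matching are vacuous, so any column $M_I\in d(\mathbf{y},w)$ is a $(g,k)$-feasible semi-matching---it splits into matchings on $\mathcal{G}_1$ and $\mathcal{G}_2$ and covers $D$---and Lemma~\ref{reduc} gives a contradiction. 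If $\varepsilon_r(g,k)>0$ and $\varepsilon_c(k)=0$, I need a column with $k$ absent from $\mathcal{G}_1$; the only obstruction would be that every column has $k$ on $\mathcal{G}_1$, giving $\sum_{h\in\mathcal{G}_1}y_{kh}=\sum_{I}\varepsilon_{I}=w$. Substituting into the $c$-tightness equation $\sum_{h\in\mathcal{G}_1}(b_{kh}-y_{kh})+a_{k2}-x_{k2}=\Delta(\mathcal{G}_2)-r$ yields $x_{k2}=\sum_{h\in\mathcal{G}_1}b_{kh}+a_{k2}-\Delta(\mathcal{G}_2)+(r-w)$, which is an integer since $r-w=-\lceil w^{\ast}-r^{\ast}\rceil$ by~(\ref{B20}) and all other summands are integers---contradicting $k\in B_2$. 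Finally, if $\varepsilon_r(g,k)=0$, there is $j'\in(B_1\cup B_2)\setminus\{g,k\}$ with $x_{j'1}+x_{j'2}=r$, and since $r$ is non-integer $j'$ itself is fractional; I would then run the two-column reduction of Corollary~\ref{agkpath1} or~\ref{agkpath2} on the triple $g,k,j'$ to obtain the analogous contradiction.

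For the strict inequality, assume $g$ is $a$-tight and $\sum_{h\in\mathcal{G}_2}y_{gh}=w$. Substituting into the $a$-tightness equation $\sum_{h\in\mathcal{G}_2}(b_{gh}-y_{gh})+a_{g1}-x_{g1}=\Delta(\mathcal{G}_1)-r$ gives $x_{g1}=\sum_{h\in\mathcal{G}_2}b_{gh}+a_{g1}-\Delta(\mathcal{G}_1)+(r-w)$, an integer by the same argument, which contradicts $g\in B_1$. The main technical obstacle is the last case of the $a$-tightness proof: there no single column of $d(\mathbf{y},w)$ is directly admissible as a semi-matching, and one must splice two columns while matching the type hypotheses of Corollaries~\ref{agkpath1}--\ref{agkpath2} and tracking how the fractionality of $j'$ interacts with $\varepsilon_r$ and with the column multiplicities in $d(\mathbf{y},w)$; the remaining cases, by contrast, collapse to either a vacuous-condition argument on a single column or the integer-arithmetic squeeze on $r-w$.
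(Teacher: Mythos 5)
Your cases with $\varepsilon_r(g,k)>0$ are sound and match the paper's own mechanism: when $\varepsilon_a(g)>0$ and $\varepsilon_c(k)>0$ any column of $d(\mathbf{y},w)$ is a $(g,k)$-feasible semi-matching, and when $\varepsilon_c(k)=0$ the integrality of $w-r$ from (\ref{B20}) forces a column of type $\binom{\ast,\overline{k}}{\ast}$ to exist, so Lemma \ref{reduc} applies; likewise your derivation of the strict inequalities (\ref{ee1})--(\ref{ee2}) from tightness plus integrality is exactly the paper's dichotomy. The genuine gap is your third case, $\varepsilon_r(g,k)=0$. There you only gesture at ``running the two-column reduction of Corollary \ref{agkpath1} or \ref{agkpath2} on the triple $g,k,j'$,'' but those corollaries conclude merely that two column \emph{types cannot coexist} in $d(\mathbf{y},w)$; to turn that into a contradiction with the non-$a$-tightness of $g$ you would have to know that columns of those types \emph{must} occur, and the structural facts that provide this in the paper (every column covering the jobs of $B_1\cup B_2$, i.e.\ $d$-tightness and the characterization of Lemma \ref{b1b2}) are proved \emph{after} Lemma \ref{ac} and depend on it, so your plan is either circular or incomplete. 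Note also that replacing $k$ by the blocking job $j'$ does not rescue the single-column argument: if $\varepsilon_c(j')=0$ and every column puts $j'$ on $\mathcal{G}_1$, the integrality computation only shows $x_{j'2}$ is integral, which is no contradiction when $j'\in B_1\setminus B_2$ --- this is precisely the delicate situation the paper must confront.

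The paper avoids your problematic case altogether by a global choice rather than a local one: it fixes the job $l$ maximizing $x_{i1}+x_{i2}$ over $B_1\cup B_2$ (WLOG $l\in B_1$), proves the $c$-tightness statement for \emph{all} $k\in B_2$ first (showing via (\ref{bm1})--(\ref{bm2}) that with this choice no third job can attain $x_{i1}+x_{i2}=r$, hence $\varepsilon_r>0$ is guaranteed), and only then treats $g\in B_1$, where in the residual case $x_{l1}+x_{l2}=r$ it takes the partner to be $l$ itself (which has $x_{l2}>0$ even though $l\notin B_2$) and needs the auxiliary fact (\ref{ee31abc}), itself proved from the already-established $B_2$ half together with Lemma \ref{onecolumn}. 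Your symmetric, per-pair formulation cannot reproduce this ordering, so as written the $\varepsilon_r(g,k)=0$ case (and, more mildly, the ``self-swap'' handling of $B_2\subseteq\{g\}$, which Corollary \ref{agkpath2} does not cover) is a missing step, and you acknowledge as much by calling it the main obstacle rather than resolving it.
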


\begin{proof}
By (\ref{B20}), (\ref{r1d}), and (\ref{r1c})
at least one of the following two inequalities
\begin{equation}a_{g 1} -x_{g 1} +\sum _{h \in \mathcal{G}_{2}}(b_{g h} -y_{g h}) < \Delta (\mathcal{G}_{1}) -r \label{ee3}
\end{equation}or
\begin{equation}\sum _{h \in \mathcal{G}_{2}}y_{g h} <w, \label{ee31}
\end{equation}holds for any job $g \in B_{1}$. Likewise, by (\ref{B20}),
(\ref{r1d}), and (\ref{r1a}) at least one of the following two inequalities
\begin{equation}a_{k 2} -x_{k 2} +\sum _{h \in \mathcal{G}_{1}}(b_{k h} -y_{k h}) < \Delta (\mathcal{G}_{2}) -r \label{ee4}
\end{equation}or
\begin{equation}\sum _{h \in \mathcal{G}_{1}}y_{k h} <w\;\text{,}\; \label{ee4aa}
\end{equation}holds for any
job $k \in B_{2}.$ 

Let
$l$ be a job with the largest sum $x_{i 1} +x_{i 2}$ among the jobs $i \in $ $B_{1} \cup B_{2}\;\text{.}\;$ Suppose $l \in B_{1}$ in the proof. If $l \in B_{2}\backslash B_{1}$ the proof proceeds in a similar way and thus will be omitted. 

We prove the lemma for any $k \in B_{2}$ first. For any $k\in B_2$ there is $g\in B_1$ such that $\varepsilon _{r} (g ,k) >0$. Simply take a job $g \in B_{1}$ with the largest $x_{i 1} +x_{i 2}$ among the jobs $i \in B_{1} \setminus \{k\}$,
or take $g=k$ if $B_{1} =\{k\}$. It suffices to show that there is no $i \in (B_{1} \cup B_{2}) \setminus \{g ,k\}$ with $x_{i 1} +x_{i 2} =r$.  Suppose for a contradiction that $x_{i 1} +x_{i 2} =r$ for some $i \in (B_{1} \cup B_{2}) \setminus \{g ,k\}$. If $g \neq k$, then, since $l \in B_{1}$, we have $x_{j 1} +x_{j 2} =r$ for some $j \in \{g ,k\}$. Therefore $\{k ,i ,g\} \subseteq B_{1} \cup B_{2}$ and we get a contradiction by (\ref{bm1}) and (\ref{bm2}). If
$g =k$, then $B_{1} =\{k\}$. 
%Suppose again for a contradiction that $x_{i 1} +x_{i 2} =r$ for some $i \in (B_{1} \cup B_{2}) \setminus \{k\}$. 
Thus, $l =k$ and $x_{k1} +x_{k2} =r$ which implies $B_1 \cup B_{2} =\{i ,k\}$. Since $k\in B_1\cap B_2$ we have $i\in B_1\cap B_2$ which gives contradiction since $i\notin B_1$.

Suppose for a contradiction that $k$ is not $c$-tight, i.e. (\ref{ee4}) holds for $k$. We have $\varepsilon _{c} (k) >0$. If (\ref{ee3}) holds for $g$, then $\varepsilon _{a} (g) >0$. Thus $\min \{\varepsilon _{a} (g) ,\varepsilon _{c} (k)\} >0$ which contradicts Lemma \ref{reduc} since then any column from $d (\mathbf{y} ,w)$ is a $(g ,k)$-feasible matching. If (\ref{ee3}) does not hold
for $g$, then $\varepsilon _{a} (g) =0$ and $\varepsilon _{c} (k) >0$. Take any column $I$ of type $\binom{ \ast }{ \ast  ,\overline{g}}$ from $d (\mathbf{y} ,w)$. This column exists since (\ref{ee31}) holds for $g$. This however contradicts Lemma \ref{reduc} since $I$ is a $(g ,k)$-feasible matching. Therefore, $k$ is $c$-tight and thus by (\ref{ee4aa}) the condition (\ref{ee2})
holds for $k \in B_{2}$. This proves the lemma for any $k \in B_{2}$. 

We now prove the lemma for any $g \in B_{1}$. We begin with $g =l$. There is  $k \in B_{2}$ such that  $\varepsilon _{r} (g ,k) >0$. Simply take a job $k \in B_{2}$ with the largest $x_{i 1} +x_{i 2}$ among the jobs in $B_{2} \setminus \{g\}$, or $k=g$ if $B_{2} =\{g\}$.  Again, it suffices to show that there is no $i \in (B_{1} \cup B_{2}) \setminus \{g,k\}$ such that $x_{i 1} +x_{i 2} =r$. Suppose for a contradiction that $x_{i 1} +x_{i 2} =r$ for some $i \in (B_{1} \cup B_{2}) \setminus \{g ,k\}$. Then, $x_{g 1} +x_{g 2} =r$  implies $B_{1} \cup B_2=\{i, g\}$ by (\ref{bm1}) and (\ref{bm2}).  Hence $k=g$, which implies $B_2=\{g\}$ and $B_1=\{i,g\}$.
Thus, $x_{i 2}$ is integer and by (\ref{bm2}) $x_{g 2} =\lfloor x_{g 2}\rfloor  +\epsilon $. This implies integral $x_{g 1}$ and thus $g \notin B_{1}$ which gives contradiction. 

Suppose for contradiction that $l$ is not $a$-tight, i.e. (\ref{ee3}) holds for $l$. We have $\varepsilon _{a} (l) >0$. If (\ref{ee4}) holds for $k$, then $\varepsilon _{c} (k) >0$. Thus $\min \{\varepsilon _{a} (l) ,\varepsilon _{c} (k)\} >0$ which contradicts Lemma \ref{reduc} since then any column from $d (\mathbf{y} ,w)$ is a $(g ,k)$-feasible matching. If (\ref{ee4}) does not hold
for $k$, then $\varepsilon _{c} (k) =0$ and $\varepsilon _{a} (l) >0$. Take any column $I$ of type $\binom{ \ast  ,\overline{k}}{ \ast }$ from $d (\mathbf{y} ,w)$. This column exists since (\ref{ee4aa}) holds for $k$. This contradicts Lemma \ref{reduc} since $I$ is a $(g ,k)$-feasible matching. Therefore, $l$ is $a$-tight and thus by (\ref{ee31}) the condition (\ref{ee1})
holds for $l \in B_{1}$. This proves the lemma for $l \in B_{1}$. 

To complete the proof assume $B_{1} \setminus \{l\} \neq \emptyset $ in the remainder of the proof. Observe also that for $l$ we have
\begin{equation}\sum _{h \in \mathcal{G}_{1}}y_{l h} <w , \label{ee31abc}
\end{equation}for if
\begin{equation}\sum _{h \in \mathcal{G}_{1}}y_{l h} =w , \label{ee31aa}
\end{equation}then any column $I$ in $d (\mathbf{y} ,w)$ is of type $\binom{ \ast  ,l}{ \ast }$. Since we already have shown that the lemma holds for any $k \in B_{2}$, there is $I$ of type $\binom{ \ast  ,\overline{k} ,l}{ \ast }$ i.e. of type $\binom{ \ast  ,\overline{k}}{ \ast  ,\overline{l}}$.  This contradicts Lemma \ref{onecolumn} since we observe that $\epsilon_r(l,k)$ for  $k\in B_2$ with the largest $x_{k1}+x_{k2}$.

Consider any $g \in B_{1} \setminus \{l\}$. Observe that if $x_{l1}+x_{l2}=r$, then $x_{l2}>0$. Otherwise $B_1=\{l\}$ and we get a contradiction. There is  $k$ such that  $\varepsilon _{r} (g ,k) >0$. Simply take $k =l$, if $x_{l 1} +x_{l 2} =r$, or any $k$  from $B_{2}$ if $x_{l 1} +x_{l 2} <r$. It suffices to show that there is no $i \in (B_{1} \cup B_{2}) \setminus \{g ,k\}$ that satisfies $x_{i 1} +x_{i 2} =r$. Suppose for a contradiction that $x_{i 1} +x_{i 2} =r$ for some $i \in (B_{1} \cup B_{2}) \setminus \{g ,k\}$. Then $x_{k 1} +x_{k 2} =r$. Since $k\neq g$, we have $\{k ,i ,g\} \subseteq B_{1} \cup B_{2}$ and thus we get a contradiction by (\ref{bm1}) and (\ref{bm2}). 

Suppose
for a contradiction that $g$ is not $a$-tight, i.e. (\ref{ee3}) holds for $g$. We have $\varepsilon _{a} (g) >0$. If (\ref{ee4}) holds for $k$, then $\varepsilon _{c} (k) >0$. Thus $\min \{\varepsilon _{a} ,\varepsilon _{c}\} >0$ which contradicts Lemma \ref{reduc} since then any column from $d (\mathbf{y} ,w)$ is a $(g ,k)$-feasible matching. If (\ref{ee4}) does not hold
for $k$, then $\varepsilon _{a} (g) >0$ and $\varepsilon _{c} (k) =0$. Take any column $I$ of type $\binom{ \ast  ,\overline{k}}{ \ast }$ from $d (\mathbf{y} ,w)$. Such column exists for $k \in B_{2}$ since (\ref{ee4aa}) holds in this case, it also exists for $k =l$ (and $l \notin B_{2}$) by (\ref{ee31abc}). This contradicts Lemma \ref{reduc} since
$I$ is a $(g ,k)$-feasible matching. Therefore the lemma holds for each $g \in B_{1}$. 
\end{proof}

We now prove $d$-tightness for each job in $B_{1} \cup B_{2}$.

\begin{theorem}
\label{tightd} \label{t1}Each job in $B_{1} \cup B_{2}$ is $d$-tight. 
\end{theorem}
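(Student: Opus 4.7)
The plan is a proof by contradiction. Assume some job $j_{0} \in B_{1} \cup B_{2}$ is not $d$-tight. By the symmetry between $\mathcal{G}_{1}$ and $\mathcal{G}_{2}$, it suffices to treat the case $j_{0} = g \in B_{1}$. Since $\sum_{h} y_{gh} < w$, the column decomposition $d(\mathbf{y}, w)$ must contain a column $I_{g}$ whose matching $M_{I_{g}}$ does not meet $g$ at all. In particular, $I_{g}$ is of type $\binom{\ast,\bar g}{\ast,\bar g}$, which is a fortiori of the type $\binom{\ast}{\ast,\bar g}$ required by Lemma \ref{twocolumns}.

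Next, following the pivot-selection argument used inside the proof of Lemma \ref{ac}, I would pick a job $k \in B_{2}$ with $k \neq g$ and $\varepsilon_{r}(g,k) > 0$ (concretely, the $k \in B_{2} \setminus \{g\}$ with the largest $x_{k1} + x_{k2}$). By Lemma \ref{ac}, such a $k$ is $c$-tight and satisfies $\sum_{h \in \mathcal{G}_{1}} y_{kh} < w$, so $d(\mathbf{y}, w)$ contains a column $I_{k}$ of type $\binom{\ast,\bar k}{\ast}$. Now apply Lemma \ref{twocolumns} to the pair $(I_{k}, I_{g})$: its conclusion forces $I_{g}$ to be of type $\binom{\ast,k,g}{\ast}$, meaning in particular that $M_{I_{g}}$ matches $g$ to some machine in $\mathcal{G}_{1}$. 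This contradicts the defining property of $I_{g}$, namely that no edge of $M_{I_{g}}$ is incident to $g$. Hence $g$ must be $d$-tight. The identical argument with $\mathcal{G}_{1} \leftrightarrow \mathcal{G}_{2}$ and $B_{1} \leftrightarrow B_{2}$ then gives $d$-tightness for every $k \in B_{2}$.

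The main obstacle I expect is the degenerate configuration in which no valid pivot $k \in B_{2}$ with $k \neq g$ exists: for instance $B_{2} = \{g\}$ with $g \in B_{1} \cap B_{2}$ (and symmetrically $B_{1} = \{k\}$). In such a case Lemmas \ref{onecolumn} and \ref{twocolumns} cannot be invoked verbatim because they require $g \neq k$. I would handle this exactly in the style of the last two paragraphs of the proof of Lemma \ref{ac}: the integrality of $\sum_{j} x_{j1}$ and $\sum_{j} x_{j2}$ combined with $|B_{2}| = 1$ severely constrains the structure of $\mathbf{x}$, so that either another fractional job can legitimately play the role of the pivot $k$, or Corollary \ref{agkpath2} (which deals precisely with jobs satisfying $x_{j1} x_{j2} > 0$) can be applied directly to extract the needed contradiction. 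Once this bookkeeping is discharged, the main argument above closes the proof.
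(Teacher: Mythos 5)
Your overall strategy matches the paper's (use Lemma \ref{ac} to produce columns avoiding $g$ on $\mathcal{G}_2$ and avoiding $k$ on $\mathcal{G}_1$, then force a contradiction through Lemmas \ref{onecolumn} and \ref{twocolumns}), but there is a genuine gap in your pivot selection. You assert that for an arbitrary non-$d$-tight $g\in B_1$ you can take the $k\in B_2\setminus\{g\}$ with the largest $x_{k1}+x_{k2}$ and have $\varepsilon_r(g,k)>0$. That is not justified at this point of the development: $\varepsilon_r(g,k)=0$ whenever \emph{any} job $i\notin\{g,k\}$ of $B_1\cup B_2$ satisfies $x_{i1}+x_{i2}=r$, and the blocking job may lie in $B_1\setminus B_2$ (e.g.\ $x_{i1}$ fractional, $x_{i2}$ a positive integer with $x_{i1}+x_{i2}=r$), in which case \emph{no} choice of $k\in B_2$ works. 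Theorem \ref{eps}, which would give $\varepsilon_r(g,k)>0$ for all such pairs, is proved only in Section \ref{crossi} and itself relies on this theorem, so it cannot be invoked; nor does your fallback via Corollary \ref{agkpath2} apply, since that corollary concerns jobs with $x_{j1}x_{j2}>0$, still presupposes $\varepsilon_r(g,k)>0$, and does not cover a blocker with integral $x_{i2}$. Your explicit degenerate case ($B_2=\{g\}$) is only one of the obstructions.

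The paper circumvents exactly this difficulty by ordering the argument: it anchors at the job $g$ with the globally largest $x_{i1}+x_{i2}$ in $B_1\cup B_2$ (so a blocking third job forces $x_{g1}+x_{g2}=r$ and can be excluded via (\ref{bm1})--(\ref{bm2})), proves $d$-tightness first for this $g$ and for all of $B_2$, then establishes the existence of an auxiliary column $I'_g$ of type $\binom{\ast,\overline{g}}{\ast}$, and only afterwards treats $a\in B_1\setminus\{g\}$, where the pivot is sometimes $g$ itself (using that $x_{g2}>0$ when $x_{g1}+x_{g2}=r$) rather than a member of $B_2$. Your proof, which starts from an arbitrary non-$d$-tight job and pivots only inside $B_2$, cannot handle this configuration without essentially rebuilding that machinery. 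A secondary, easily repaired omission: you should also treat the coincidence $I_k=I_g$, where Lemma \ref{twocolumns} is not the right tool but the column is of type $\binom{\ast,\overline{k}}{\ast,\overline{g}}$ and Lemma \ref{onecolumn} applies directly, as the paper does.
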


\begin{proof}
By (\ref{ee1}) in Lemma \ref{ac}, there is a column $I_{g}$ of type $\binom{ \ast }{ \ast  ,\overline{g}}$ in $d (\mathbf{y} ,w)$ for each $g \in B_{1}$. By (\ref{ee2}) in Lemma \ref{ac}, there is a column $I_{k}$ of type $\binom{ \ast  ,\overline{k}}{ \ast }$ in $d (\mathbf{y} ,w)$ for each $k \in B_{2}$. 

Consider job $g$ with the largest $x_{i 1} +x_{i 2}$ among the jobs $i \in $ $B_{1} \cup B_{2}$. Suppose $g \in B_{1}$. If $g \in B_{2}\backslash B_{1}$, then the proof proceeds in a similar way and thus it will be omitted. Take any $k \in B_{2}\setminus\{g\}$ or $k=g$ if $B_2=\{g\}$. Observe that by our choice of $g$, if $x_{i 1} +x_{i 2} =r$ for some $i \in (B_{1} \cup B_{2}) \setminus \{g ,k\}$, then $x_{g 1} +x_{g 2} =r$. Therefore $\{k ,i ,g\} \subseteq B_{1} \cup B_{2}$ which leads to a contradiction by (\ref{bm1}) and (\ref{bm2}) if
$k \neq g$. Otherwise, if $k=g$, then by (\ref{bm2}) $B_1\cup B_2=\{i,g\}$ and $g \in B_1\cap B_2$. Thus $i \in B_1\cap B_2$ and we get contradiction since $i \notin B_{2}$. Thus $\varepsilon _{r} (g ,k) >0$. 

If $k$ is not $d$-tight, then there is a column $I$ of type $\binom{ \ast  ,\overline{k}}{ \ast  ,\overline{k}}$ in $d (\mathbf{y} ,w)$. Thus, if $I \neq I_{g}$, then we get a contradiction with Lemma \ref{twocolumns} applied to $I$ and $I_g$. Otherwise, if $I =I_{g}$, then $I$ is of type of type $\binom{ \ast  ,\overline{k}}{ \ast  ,\overline{g}}$ which contradicts Lemma \ref{onecolumn}.
Similarly, if $g$ is not $d$-tight, then there is a column $I$ of type $\binom{ \ast  ,\overline{g}}{ \ast  ,\overline{g}}$ in $d (\mathbf{y} ,w)$. Thus, if $I \neq I_{k}$, then we get a contradiction with Lemma \ref{twocolumns} applied to $I_k$ and $I$. Otherwise, if $I =I_{k}$, then $I$ is of type of type $\binom{ \ast  ,\overline{k}}{ \ast  ,\overline{g}}$ which contradicts Lemma \ref{onecolumn}.
Therefore the lemma holds for each job in $\{g\} \cup B_{2}$. Moreover, there is a column $I' _{g}$ of type $\binom{ \ast  ,\overline{g}}{ \ast }$. Otherwise all columns in $d (\mathbf{y} ,w)$ are of type $\binom{ \ast  ,g}{ \ast }$ and thus $I_{k}$ is of type $\binom{ \ast  ,\overline{k}}{ \ast  ,\overline{g}}$ for any $k \in B_{2}$ which contradicts Lemma \ref{onecolumn}. 

It remains to prove the lemma
for each $a \in B_{1}\setminus \{g\}$ whenever $B_{1}\setminus \{g\} \neq \emptyset $. Observe that if $x_{g1}+x_{g2}=r$, then $x_{g2}>0$. Otherwise $B_1=\{g\}$ and we get a contradiction. Take a job $k =g$, if $x_{g 1} +x_{g 2} =r$, or any job $k \in B_{2}$, if $x_{g 1} +x_{g 2} <r$. W have $\varepsilon _{r} (a ,k) >0$. This holds since there is no $i \in (B_{1} \cup B_{2}) \setminus \{a ,k\}$ that meets $x_{i 1} +x_{i 2} =r$. Suppose for a contradiction that $x_{i 1} +x_{i 2} =r$ for some $i \in (B_{1} \cup B_{2}) \setminus \{a ,k\}$. Then $x_{k 1} +x_{k 2} =r$. Since $a\neq k$, we have $\{k ,i ,a\} \subseteq B_{1} \cup B_{2}$ which leads to a contradiction by (\ref{bm1}) and (\ref{bm2}). 

Thus
if $a$ is not $d$-tight, then there is a column $I$ of type $\binom{ \ast  ,\overline{a}}{ \ast  ,\overline{a}}$ in $d (\mathbf{y} ,w)$. Then, if  $\varepsilon _{r} (a ,k) >0$ for $k \in B_{2}$, we have either $I \neq I_{k}$ which leads a contradiction with Lemma \ref{twocolumns} applied to $I_k$ and $I$ or $I =I_{k}$ which implies that $I$ is of type $\binom{ \ast  ,\overline{k}}{ \ast  ,\overline{a}}$ which contradicts Lemma \ref{onecolumn}.
If  $\varepsilon _{r} (a ,k) >0$ for $k \notin B_{2}$, then $k =g$. Thus, if $I \neq I' _{g}$, then we get a contradiction with Lemma \ref{twocolumns} applied to $I$ and $I'_g$. Otherwise, if $I =I' _{g}$, then $I$ is of type of type $\binom{ \ast  ,\overline{g}}{ \ast  ,\overline{a}}$ which contradicts Lemma \ref{onecolumn}.
\end{proof}

\bigskip For $j \in B_{1}\cup B_2$ define
\begin{equation*}\alpha _{j} =\sum _{h \in \mathcal{G}_{1}}y_{j h}\;\;\text{and}\;\;\beta _{j} =\sum _{h \in \mathcal{G}_{2}}y_{j h}.
\end{equation*}
%and
%\begin{equation*}\beta _{k} =\sum _{h \in \mathcal{G}_{2}}y_{k h}\;\;\text{and}\;\;\beta _{g} =\sum _{h \in \mathcal{G}_{2}}y_{g h}.
%\end{equation*}

The following two lemmas relate the fractions of $x_{j 1}$, $x_{j 2}$, $\alpha _{j}$, and $\beta _{j}$ for $j \in B_{1} \cup B_{2}$. The lemmas follow from Lemmas \ref{LL1}, and Theorem \ref{t1} and
will prove useful in the remainder of the paper.
\begin{lemma}
\label{l6}For $g\in B_1$, let
\begin{equation*}x_{g 1} =\left \lfloor x_{g 1}\right \rfloor  +\epsilon _{g}\;\;\text{,}\;\;\beta _{g} =\left \lfloor \beta _{g}\right \rfloor  +\lambda _{g} ,\;\;\text{and}\;\;\alpha _{g} =\left \lfloor \alpha _{g}\right \rfloor  +\omega _{g}
\end{equation*}where $0 \leq \lambda _{g}$, $\omega _{g} <1\;\text{,}\;$$0 <\epsilon _{g} <1$ for $g \in B_{1}$. Then, $\omega_g=\varepsilon_g$, and $\lambda_g=\varepsilon-\varepsilon_g$ for $\varepsilon\geq \varepsilon_g$, and
$\lambda_g=1-(\varepsilon_g - \varepsilon)$ for $\varepsilon< \varepsilon_g$.
%\begin{equation*}\epsilon _{g} +\lambda _{g} -\epsilon \;\;\text{and}\;\;\lambda _{g} +\omega _{g} -\epsilon 
%\end{equation*}are integral. 
\end{lemma}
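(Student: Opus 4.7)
The plan is to extract two linear identities with integer right-hand sides from the $a$- and $d$-tightness of $g$ (guaranteed by Lemma~\ref{LL1} and Theorem~\ref{t1}), reduce them modulo $1$, and then read off the fractional parts.

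First, from $d$-tightness $\alpha_g + \beta_g = w$. Since $w = \lfloor w \rfloor + \epsilon$ by (\ref{B20}) and the choice of $r$, subtracting integer parts gives $\omega_g + \lambda_g \equiv \epsilon \pmod 1$. Because $0 \leq \omega_g, \lambda_g < 1$ and $0 < \epsilon < 1$, this pins down
\begin{equation*}
\omega_g + \lambda_g \in \{\epsilon,\; 1+\epsilon\}.
\end{equation*}

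Second, from $a$-tightness of $g$,
\begin{equation*}
a_{g1} - x_{g1} + \sum_{h \in \mathcal{G}_2}(b_{gh} - y_{gh}) = \Delta(\mathcal{G}_1) - r,
\end{equation*}
which I rearrange as $x_{g1} + \beta_g - r = a_{g1} + \sum_{h \in \mathcal{G}_2} b_{gh} - \Delta(\mathcal{G}_1)$. The right-hand side is an integer, so reducing modulo $1$ and using $x_{g1} = \lfloor x_{g1}\rfloor + \epsilon_g$, $\beta_g = \lfloor \beta_g \rfloor + \lambda_g$, $r = \lfloor r^*\rfloor + \epsilon$ yields $\epsilon_g + \lambda_g - \epsilon \in \mathbb{Z}$. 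Since $0 < \epsilon_g < 1$, $0 \leq \lambda_g < 1$, and $0 < \epsilon < 1$, the only possibilities are $\epsilon_g + \lambda_g = \epsilon$ or $\epsilon_g + \lambda_g = 1 + \epsilon$.

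Finally, I case-split. If $\epsilon \geq \epsilon_g$, the first alternative is the only one consistent with $\lambda_g \geq 0$, giving $\lambda_g = \epsilon - \epsilon_g$; plugging into $\omega_g + \lambda_g \in \{\epsilon, 1+\epsilon\}$ and using $\omega_g < 1$ forces $\omega_g = \epsilon_g$. If $\epsilon < \epsilon_g$, then $\lambda_g = 1 + \epsilon - \epsilon_g = 1 - (\epsilon_g - \epsilon)$ is the only option with $\lambda_g < 1$; the first-step congruence then similarly forces $\omega_g = \epsilon_g$ (the alternative $\omega_g = \epsilon_g - 1$ is negative). This gives all three claimed equalities. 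There is no substantive obstacle here — the argument is a routine chase of fractional parts through the two tightness identities, and the only thing to be careful about is keeping track of which of the two mod-$1$ branches is compatible with the sign constraints on $\lambda_g$ and $\omega_g$.
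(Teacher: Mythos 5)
Your proof is correct and follows essentially the same route as the paper's: both extract the two mod-$1$ congruences from the $a$-tightness and $d$-tightness of $g$ (the paper combines them with the integrality of $w-r$ to get $\omega_g=\epsilon_g$ first, you get $\lambda_g$ first from $r=\lfloor r^*\rfloor+\epsilon$, but the ingredients are identical). One trivial slip: in your case split the bound that kills the rejected branch is $\lambda_g<1$ when $\epsilon\geq\epsilon_g$ and $\lambda_g\geq 0$ when $\epsilon<\epsilon_g$ --- you cite these the other way around --- but the conclusions are unaffected.
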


\begin{proof}
By Lemma \ref{ac} $g \in B_{1}$ is $a$-tight, i.e.
\begin{equation*}a_{g 1} -x_{g 1} +\sum _{h \in \mathcal{G}_{2}}b_{g h} -\beta _{g} = \Delta (\mathcal{G}_{1}) -r,
\end{equation*}by Theorem \ref{t1}
$g$ is $d$-tight, i.e.
\begin{equation}\label{W0}
\alpha _{g} +\beta _{g} =w.
\end{equation}
The two imply $\omega_g=\varepsilon_g$ since $w-r$ is integral. By (\ref{W0}), $\lambda_g + \omega_g - \varepsilon=0$ or $1$. Thus $\lambda_g + \varepsilon_g - \varepsilon=0$ or $1$. Therefore,
$\lambda_g =\varepsilon -\varepsilon_g$ for $\varepsilon\geq \varepsilon_g$, and 
$\lambda_g=1-(\varepsilon_g - \varepsilon)$ for $\varepsilon< \varepsilon_g$.
\end{proof}

\begin{lemma}
\label{l7}For $j\in B_2$, let
\begin{equation*}x_{k 2} =\left \lfloor x_{k 2}\right \rfloor  +\epsilon _{k}\;\;\text{and}\;\;\beta _{k} =\left \lfloor \beta _{k}\right \rfloor  +\lambda _{k}\;\;\text{and}\;\;\alpha _{k} =\left \lfloor \alpha _{k}\right \rfloor  +\omega _{k}
\end{equation*}where $0 \leq \lambda _{k}$, $\omega _{k} <1\;\text{,}\;$ $0 <\epsilon _{k} <1$ for a job $k \in B_{2}$. Then,  $\lambda_k=\varepsilon_k$, and $\omega_k=\varepsilon-\varepsilon_k$ for $\varepsilon\geq \varepsilon_k$, and
$\lambda_k=1-(\varepsilon_k - \varepsilon)$ for $\varepsilon< \varepsilon_k$.
%\begin{equation*}\epsilon _{k} +\omega _{k} -\epsilon \;\;\text{and}\;\;\lambda _{k} +\omega _{k} -\epsilon 
%\end{equation*}are integral. 
\end{lemma}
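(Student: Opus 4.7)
The plan is to mirror the proof of Lemma \ref{l6} via the symmetry that swaps $B_1 \leftrightarrow B_2$, $a$-tight $\leftrightarrow$ $c$-tight, $\mathcal{G}_1 \leftrightarrow \mathcal{G}_2$, $\alpha \leftrightarrow \beta$, and $\omega \leftrightarrow \lambda$. First I would invoke Lemma \ref{ac}, which guarantees that every $k \in B_2$ is $c$-tight, i.e.
\[
\sum_{h \in \mathcal{G}_1}(b_{kh} - y_{kh}) + a_{k2} - x_{k2} = \Delta(\mathcal{G}_2) - r.
\]
Rewriting this as $\alpha_k + x_{k2} = r + a_{k2} + \sum_{h \in \mathcal{G}_1} b_{kh} - \Delta(\mathcal{G}_2)$ and reducing modulo $1$ (every term on the right except $r$ is an integer) yields $\omega_k + \varepsilon_k \equiv \varepsilon \pmod 1$.

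Next I would invoke Theorem \ref{t1} to obtain $d$-tightness $\alpha_k + \beta_k = w$. Since $w - r$ is integral by (\ref{B20}) and the fractional part of $r$ is $\varepsilon$, the fractional part of $w$ is also $\varepsilon$; hence $\omega_k + \lambda_k \equiv \varepsilon \pmod 1$. Subtracting the two congruences yields $\lambda_k \equiv \varepsilon_k \pmod 1$, and because both $\lambda_k$ and $\varepsilon_k$ lie in $[0,1)$ with $\varepsilon_k > 0$, this forces the equality $\lambda_k = \varepsilon_k$.

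To determine $\omega_k$ I would return to the $d$-tightness equation and note that $\omega_k + \lambda_k - \varepsilon$ is an integer lying in the open interval $(-1, 2)$, so it equals either $0$ or $1$. Substituting $\lambda_k = \varepsilon_k$ and enforcing $0 \leq \omega_k < 1$ gives $\omega_k = \varepsilon - \varepsilon_k$ when $\varepsilon \geq \varepsilon_k$, and $\omega_k = 1 - (\varepsilon_k - \varepsilon)$ when $\varepsilon < \varepsilon_k$. These are precisely the stated formulas, read with $\omega_k$ on the left-hand side of the second case, matching the symmetric structure of Lemma \ref{l6}.

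There is no substantive obstacle here; the argument amounts to routine modular-arithmetic bookkeeping once $c$- and $d$-tightness are applied. The only care required is to track which hypervertex each sum lives on, since $\alpha_k$ is the assignment of $k$ over $\mathcal{G}_1$ while $x_{k2}$ is the fractional weight on hypervertex $\mathcal{G}_2$; conflating them would break the parity argument and misidentify which of $\omega_k,\lambda_k$ equals $\varepsilon_k$.
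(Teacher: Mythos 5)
Your proof is correct and follows exactly the route the paper intends: the paper omits the proof of Lemma \ref{l7} as ``similar to Lemma \ref{l6}'', and your argument is precisely that mirror image, combining $c$-tightness (Lemma \ref{ac}) and $d$-tightness (Theorem \ref{t1}) with the integrality of $w-r$ from (\ref{B20}) to get $\lambda_k=\varepsilon_k$ and the two cases for $\omega_k$. You are also right that the second case in the statement should read $\omega_k=1-(\varepsilon_k-\varepsilon)$ rather than $\lambda_k$, in keeping with the symmetry with Lemma \ref{l6}.
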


\begin{proof} The proof is similar to the proof of Lemma \ref{l6} and will be omitted.
%By Lemma \ref{LL1} $k \in B_{2}$ is $c$-tight, i.e.
%\begin{equation*}a_{k 2} -x_{k 2} +\sum _{h \in \mathcal{G}_{1}}b_{g h} -\alpha _{k} = \Delta (\mathcal{G}_{2}) -r\;\text{,}\;
%\end{equation*}and by Theorem
%\ref{t1} $k$ is $d$-tight, i.e.
%\begin{equation*}\alpha _{k} +\beta _{k} =w\;\text{.}\;
%\end{equation*}
\end{proof}

\section{The Absence of Crossing Jobs in $\mathbf{s}$} \label{crossi}

%In this section we prove that $\varepsilon _{r} (g ,k) >0$ for each pair $g \in B_{1}$ and $k \in B_{2}$. To that end we introduce the following two definitions. 
Each job $k \in B_{1} \cap B_{2}$ is called \emph{crossing}. We call a job $a \in B_{1} \cup B_{2}$ an \emph{e-crossing} job, if it meets the following conditions:

\begin{itemize}
\item $0 <x_{a 2}$ and $0 <x_{a 1}$; 

\item both $B_{1} \setminus \{a\}$ and $B_{2} \setminus \{a\}$ are not empty. \end{itemize}

\noindent We have the following.

\begin{theorem}
\label{cnc} Each crossing job is $e$-crossing. 
\end{theorem}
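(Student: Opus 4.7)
My plan is to reduce the theorem to two assertions about a crossing job $k \in B_1 \cap B_2$: (i) $x_{k1}>0$ and $x_{k2}>0$; (ii) $B_1\setminus\{k\}\ne\emptyset$ and $B_2\setminus\{k\}\ne\emptyset$. Part (i) is immediate, because $k\in B_1\cap B_2$ means $x_{k1}$ and $x_{k2}$ are both fractional, hence both strictly positive. So the substance of the theorem lies in (ii), and the strategy is to derive a contradiction from the supposition $B_1=\{k\}$ or $B_2=\{k\}$ by tracking fractional parts.

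The key step uses the apparatus already built in Section~\ref{S6}. Since $k\in B_1\cap B_2$, Lemmas~\ref{l6} and~\ref{l7} apply simultaneously to $k$: writing $x_{k1}=\lfloor x_{k1}\rfloor+\epsilon_{k,1}$ and $x_{k2}=\lfloor x_{k2}\rfloor+\epsilon_{k,2}$ with $\epsilon_{k,1},\epsilon_{k,2}\in(0,1)$, Lemma~\ref{l6} yields that the fractional part $\omega_k$ of $\alpha_k$ equals $\epsilon_{k,1}$, while Lemma~\ref{l7} yields that the fractional part $\lambda_k$ of $\beta_k$ equals $\epsilon_{k,2}$. By Theorem~\ref{t1} the job $k$ is $d$-tight, so $\alpha_k+\beta_k=w=\lfloor w\rfloor+\epsilon$. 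Reducing modulo $1$ gives the key identity
\begin{equation*}
\epsilon_{k,1}+\epsilon_{k,2}\in\{\epsilon,\,1+\epsilon\}.
\end{equation*}

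Now suppose for contradiction that $B_1=\{k\}$. Every $x_{j1}$ with $j\ne k$ is then integral, so constraint~(\ref{bm1}) forces the fractional part of $x_{k1}$ to match that of $r$, i.e. $\epsilon_{k,1}=\epsilon$. The identity above then forces $\epsilon_{k,2}\in\{0,1\}$, contradicting $\epsilon_{k,2}\in(0,1)$. The symmetric argument, using~(\ref{bm2}) instead of~(\ref{bm1}), rules out $B_2=\{k\}$. Hence both $B_1\setminus\{k\}$ and $B_2\setminus\{k\}$ are non-empty, completing~(ii).

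I do not anticipate a real obstacle here: once one recognizes that $k\in B_1\cap B_2$ triggers both Lemma~\ref{l6} and Lemma~\ref{l7} for the same job, the $d$-tightness from Theorem~\ref{t1} and integrality of $w-r$ immediately pin down $\epsilon_{k,1}+\epsilon_{k,2}$ modulo~$1$, and the hypothesis $B_1=\{k\}$ or $B_2=\{k\}$ collides with this by forcing one of the fractional parts to equal $\epsilon$. The only care needed is to verify that the definitions of $\epsilon_g$ in Lemma~\ref{l6} and $\epsilon_k$ in Lemma~\ref{l7} are being applied with the correct indexing $\epsilon_{k,1}$ and $\epsilon_{k,2}$ when both lemmas speak about the same job $k$.
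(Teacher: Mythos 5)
Your proof is correct and is essentially the paper's argument in different packaging: the paper sums the $a$-tight, $c$-tight and $d$-tight equations for the crossing job and gets an integrality contradiction on $-r+x_{a1}+x_{a2}$ using (\ref{bm1})/(\ref{bm2}), while you reach the same mod-$1$ identity $\epsilon_{k,1}+\epsilon_{k,2}\equiv\epsilon$ via Lemmas \ref{l6} and \ref{l7} (which encode exactly those tightness relations) together with $d$-tightness. Both then conclude identically, so this is the same approach, validly applied.
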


\begin{proof}
Suppose for a contradiction that $a$ is crossing but not $e$-crossing. By Theorem \ref{tightd} $a$ is $d$-tight and thus
\begin{equation}\sum _{h \in \mathcal{G}_{2}}y_{a h} +\sum _{h \in \mathcal{G}_{1}}y_{a h} =w . \label{ee31a}
\end{equation}By Lemma \ref{ac} $a$ is both $a$-tight and $c$-tight, thus 

\begin{equation}a_{a 1} -x_{a 1} +\sum _{h \in \mathcal{G}_{2}}(b_{a h} -y_{a h}) = \Delta (\mathcal{G}_{1}) -r \label{ee3a}
\end{equation}and
\begin{equation}a_{a 2} -x_{a 2} +\sum _{h \in \mathcal{G}_{1}}(b_{a h} -y_{a h}) = \Delta (\mathcal{G}_{2}) -r . \label{ee4a}
\end{equation}By summing up (\ref{ee31a}), (\ref{ee3a}),
and (\ref{ee4a}) side by side we obtain 

\begin{equation}a_{a 1} +a_{a 2} +\sum _{h}b_{a h} - \Delta (\mathcal{G}_{1}) - \Delta (\mathcal{G}_{2}) +r -w = -r +x_{a 1} +x_{a 2}\;\text{.}\; \label{ee3aaa}
\end{equation}Since $a$ is not $e$-crossing, $B_{1} \setminus \{a\} =\emptyset $ or $B_{2} \setminus \{a\} =\emptyset $. Thus, $x_{a 1} =\lfloor x_{a 1}\rfloor  +\epsilon $ or $x_{a 2} =\lfloor x_{a 2}\rfloor  +\epsilon $. Therefore, the left hand side of (\ref{ee3aaa}) is integral but its right hand side is fractional
since both $x_{a 1}$ and $x_{a 2}$ are fractional. This leads to contradiction and thus the theorem holds. 
\end{proof}

\begin{theorem}
\label{T4A} For each $e$-crossing job $a$ we have $x_{a 1} +x_{a 2} <r$. 
\end{theorem}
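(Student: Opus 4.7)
My plan is to proceed by contradiction, assuming $x_{a1}+x_{a2}=r$, and to exploit the structure of the set $S=\{j\in\mathcal{J}:x_{j1}+x_{j2}=r\}$, which contains $a$. The first move, and the workhorse of the argument, is to establish the sharp bound $|S|\le 2$: summing $x_{j1}+x_{j2}=r$ over $j\in S$ and using (\ref{bm1})--(\ref{bm2}) gives $|S|\,r=\sum_{j\in S}(x_{j1}+x_{j2})\le\sum_{j} x_{j1}+\sum_{j} x_{j2}=2r$. A parallel observation is that no $j$ can satisfy $x_{j1}=r$, for otherwise $\sum_{i\ne j}x_{i1}=0$ by (\ref{bm1}) would force $x_{a1}=0$, contradicting the $e$-crossing hypothesis; symmetrically $x_{j2}<r$, so each $j\in S$ has $x_{j1},x_{j2}>0$ strictly. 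In the extremal case $|S|=2$ the chain is tight, forcing $x_{j1}=x_{j2}=0$ for $j\notin S$, hence $B_1\cup B_2\subseteq S$; combined with $|B_1|,|B_2|\ge 2$ (from $e$-crossing) this pins down $B_1=B_2=S$, so both jobs in $S$ are crossing.

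With this structure in hand, I would split according to whether $g\in B_1\setminus\{a\}$ and $k\in B_2\setminus\{a\}$ can be chosen distinct (both sets are non-empty by $e$-crossing). In the distinct case, the structural result above forces $|S|=1$, so $\varepsilon_r(g,a)$ and $\varepsilon_r(a,k)$ are both positive. Lemma~\ref{onecolumn} applied to $(g,a)$ and to $(a,k)$ then forbids columns with ($a\in\mathcal{G}_2$ and $g\in\mathcal{G}_1$) and columns with ($a\in\mathcal{G}_1$ and $k\in\mathcal{G}_2$); since $a,g,k$ are all $d$-tight and hence appear in every column by Theorem~\ref{tightd}, every column must fall into one of the types ($a,k\in\mathcal{G}_1$) or ($a,g\in\mathcal{G}_2$). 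Corollary~\ref{agkpath1} applied to the triple $(g,a,k)$ then forbids either the all-in-$\mathcal{G}_1$ or the all-in-$\mathcal{G}_2$ configuration: in the first subcase every column with $a\in\mathcal{G}_1$ must additionally have $g\in\mathcal{G}_2$, so every column has $g\in\mathcal{G}_2$, yielding $\alpha_g=0$ and contradicting $\omega_g>0$ from Lemma~\ref{l6}; the symmetric subcase gives $\beta_k=0$ contradicting Lemma~\ref{l7}.

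In the remaining case, every admissible pair collapses to $g=k$, which forces $B_1\setminus\{a\}=B_2\setminus\{a\}=\{j\}$ for a single crossing $j$, so $B_1=B_2=\{a,j\}$. Since $(B_1\cup B_2)\setminus\{a,j\}=\emptyset$, the definition of $\varepsilon_r$ gives $\varepsilon_r(a,j)=\epsilon>0$ at once. Lemma~\ref{onecolumn} applied to $(a,j)$ and to $(j,a)$ forces $a$ and $j$ into the same group in every column, and Corollary~\ref{agkpath2} applied to $(a,j)$ forbids either the columns with both in $\mathcal{G}_1$ or the columns with both in $\mathcal{G}_2$; either alternative yields $\alpha_a=0$ or $\beta_a=0$, contradicting the positivity of $\omega_a$ or $\lambda_a$ supplied by Lemmas~\ref{l6}--\ref{l7} since $a\in B_1\cap B_2$.

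The delicate step I anticipate is this second case, where no distinct pair exists and the triple-based Corollary~\ref{agkpath1} is not available; the rescue is to recognize that $a$ and $j$ are \emph{both} crossing, so the pair-based Corollary~\ref{agkpath2} applies directly. The sharp bound $|S|\le 2$ and its $|S|=2$ structural consequence are exactly what make the two-case dichotomy exhaustive and funnel the hard configuration onto precisely the pattern that Corollary~\ref{agkpath2} handles.
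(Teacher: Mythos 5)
Your argument follows the paper's own route quite closely: contradiction from $x_{a1}+x_{a2}=r$, choice of $g\in B_1\setminus\{a\}$ and $k\in B_2\setminus\{a\}$, positivity of the relevant $\varepsilon_r$'s extracted from (\ref{bm1})--(\ref{bm2}), and then Lemma~\ref{onecolumn} together with Corollaries~\ref{agkpath1} and \ref{agkpath2}; your bookkeeping via the set $S$ and the final contradictions ``$\alpha_g=0$ versus $\omega_g>0$'' and ``$\beta_k=0$ versus $\lambda_k>0$'' (Lemmas~\ref{l6}--\ref{l7}, legitimately available since both precede this theorem) replace the paper's explicit columns $I_g$, $I_k$, and the column-level reasoning in both of your cases is correct, since all of $a,g,k$ are $d$-tight by Theorem~\ref{tightd} and hence appear in every column.

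There is, however, one unjustified step. Twice you assert $|B_1|,|B_2|\ge 2$ ``from $e$-crossing,'' i.e.\ that $a\in B_1\cap B_2$ (``$B_1=B_2=S$,'' ``$B_1=B_2=\{a,j\}$,'' ``since $a\in B_1\cap B_2$''). The definition of $e$-crossing only places $a$ in $B_1\cup B_2$ and requires $x_{a1},x_{a2}>0$ -- positive, not fractional -- so $a$ need not be crossing; indeed the theorem is later invoked (e.g.\ in Theorem~\ref{product}) for jobs that are provably not crossing (Theorem~\ref{T4}). This misreading is harmless in your first case, where only $B_1\cup B_2\subseteq S$ is needed to force the collapse $g=k$, but it is load-bearing in the collapsed case $B_1\setminus\{a\}=B_2\setminus\{a\}=\{j\}$, where your final contradiction runs through $\omega_a$ and $\lambda_a$ and hence needs $a\in B_1\cap B_2$. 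Two easy repairs: (i) run the identical contradiction on $j$ instead of $a$ -- Corollary~\ref{agkpath2} applied to the pair $(a,j)$ only needs $x_{a1}x_{a2}>0$ and $x_{j1}x_{j2}>0$, and $j$ is genuinely crossing, so Lemmas~\ref{l6}--\ref{l7} give $\alpha_j>0$ and $\beta_j>0$, contradicting either alternative; or (ii) prove $a\in B_1\cap B_2$ directly: if $x_{a1}$ were integral, then $x_{a2}=r-x_{a1}$ has fractional part $\epsilon$, so $a\in B_2$ and $B_2=\{a,j\}$, and since $x_{a1}$ integral forces $B_1=\{j\}$, the equation (\ref{bm2}) forces the fractional part of $x_{j2}$ to vanish, contradicting $j\in B_2$; symmetrically $x_{a2}$ is fractional. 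With either patch your proof is complete.
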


\begin{proof}
By contradiction. Let $a$ be $e$-crossing with $x_{a 1} +x_{a 2} =r$. Let $g \in B_{1} \setminus \{a\}$ and $k \in B_{2} \setminus \{a\}$. By Theorem \ref{t1} and Lemma \ref{ac} there are columns $I_{k}$ of type $\binom{ \ast }{ \ast  ,k}$ and $I_{g}$ of type $\binom{ \ast  ,g}{ \ast }$ in $d (\mathbf{y} ,w)$. By Theorem \ref{t1} $I_{k}$ is either of type $\binom{ \ast }{ \ast  ,k ,g}$ or of type $\binom{ \ast  ,g}{ \ast  ,k}$, and $I_{g}$ is either of type $\binom{ \ast  ,g ,k}{ \ast }$ or of type $\binom{ \ast  ,g}{ \ast  ,k}$. Suppose that $I_{k}$ or $I_{g}$ is of type $\binom{ \ast  ,g}{ \ast  ,k}$, then $g \neq k$. Since $a$ is $e$-crossing, by Theorem \ref{t1} this column, say $I$, is either of type $\binom{ \ast  ,a ,g}{ \ast  ,k}$ or of type $\binom{ \ast  ,g}{ \ast  ,a ,k}$. The former is of type $\binom{ \ast  ,\overline{k}}{ \ast  ,\overline{a}}$ and the latter of type $\binom{ \ast  ,\overline{a}}{ \ast  ,\overline{g}}$. Since $g \neq k$, $a$ is the only job $i$ with $x_{i 1} +x_{i 2} =r$. Thus $\varepsilon _{r} (a ,k) >0$ and $\varepsilon _{r} (g ,a) >0$. Therefore we get a contradiction with Lemma \ref{onecolumn} which implies that $I_{g}$ is of type $\binom{ \ast  ,g ,k}{ \ast }$ and $I_{k}$ is of type $\binom{ \ast }{ \ast  ,k ,g}$ (observe that we may now have $g =k$). Since $a$ is $e$-crossing, by Theorem \ref{t1} we have $I_{g}$ of type $\binom{ \ast  ,a ,g ,k}{ \ast }$ or of type $\binom{ \ast  ,g ,k}{ \ast  ,a}$, and $I_{k}$ is of type $\binom{ \ast }{ \ast  ,a ,k ,g}$ or of type $\binom{ \ast  ,a}{ \ast  ,k ,g}$. The $I_{g}$ of type $\binom{ \ast  ,g ,k}{ \ast  ,a}$ is of type $\binom{ \ast  ,\overline{a}}{ \ast  ,\overline{g}}$, and the $I_{k}$ of type $\binom{ \ast  ,a}{ \ast  ,k ,g}$ is of type $\binom{ \ast  ,\overline{k}}{ \ast  ,\overline{a}}$. Moreover, if $g \neq k$, then $a$ is the only job $i$ with $x_{i 1} +x_{i 2} =r$, and if $k =g$, then either $x_{k 1} +x_{k 2} =r$ or $a$ is the only job $i$ with $x_{i 1} +x_{i 2} =r$. Thus $\varepsilon _{r} (a ,k) >0$ and $\varepsilon _{r} (g ,a) >0$. Therefore, $I_{g}$ being of type $\binom{ \ast  ,g ,k}{ \ast  ,a}$ or $I_{k}$  being of type $\binom{ \ast  ,a}{ \ast  ,k ,g}$ contradicts Lemma \ref{onecolumn}. Thus it
remains to consider $I_{g}$ of type $\binom{ \ast  ,a ,g ,k}{ \ast }$ and $I_{k}$ is of type $\binom{ \ast }{ \ast  ,a ,k ,g}$. This leads to a contradiction by Corollaries \ref{agkpath1}
and \ref{agkpath2} since $\varepsilon _{r} (g ,a) >$ and $\varepsilon _{r} (a ,k) >0$. The last two inequalities clearly hold if $a$ is the only job $i$ with $x_{i 1} +x_{i 2} =r$, otherwise $g =k$ and $k$ is the other job $i$ with $x_{i 1} +x_{i 2} =r$. 
\end{proof}

The following corollary follows immediately from the proof of Theorem \ref{T4A} since the assumption $x_{i 1} +x_{i 2} <r$ for each $i \in B_{1} \cup B_{2}$ implies  $\varepsilon_r(g,k)>0$ for each $g\in B_1$ and $k\in B_2$.

\begin{corollary}
\label{col} If $x_{i 1} +x_{i 2} <r$ for each $i \in B_{1} \cup B_{2}$, then no job is $e$-crossing. 
\end{corollary}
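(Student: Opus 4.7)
The plan is to recycle the case analysis already carried out in the proof of Theorem \ref{T4A}. I would argue by contradiction: suppose that some job $a$ is $e$-crossing. By the definition of an $e$-crossing job I can select $g \in B_{1} \setminus \{a\}$ and $k \in B_{2} \setminus \{a\}$ (possibly with $g = k$ if a job other than $a$ lies in $B_{1} \cap B_{2}$), and by definition $x_{a1} > 0$ and $x_{a2} > 0$.

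The key observation is that the corollary's hypothesis $x_{i1} + x_{i2} < r$ for every $i \in B_{1} \cup B_{2}$ makes the three relevant quantities $\varepsilon_{r}(g,a)$, $\varepsilon_{r}(a,k)$, and $\varepsilon_{r}(g,k)$ strictly positive without any further work: every term $r - (x_{j1} + x_{j2})$ appearing in the defining minima is strictly positive, as is $\epsilon$. This is precisely the ingredient that had to be established by a delicate side argument in the proof of Theorem \ref{T4A}, where one exploited the assumption that $a$ was the unique job with $x_{i1} + x_{i2} = r$.

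With those positivities in hand, I would then repeat the column-type analysis of Theorem \ref{T4A}. Lemma \ref{ac} together with Theorem \ref{tightd} supplies canonical columns $I_{k}$ of type $\binom{\ast}{\ast,k}$ and $I_{g}$ of type $\binom{\ast,g}{\ast}$ in $d(\mathbf{y},w)$; the $d$-tightness of $g$ and $k$ refines each type to one of two possibilities, and the $e$-crossing property forces $a$ into both columns. Each of the resulting configurations is exactly one already treated in the proof of Theorem \ref{T4A} and is ruled out either by Lemma \ref{onecolumn} (for columns of type $\binom{\ast,\overline{k}}{\ast,\overline{a}}$ or $\binom{\ast,\overline{a}}{\ast,\overline{g}}$) or by Corollaries \ref{agkpath1} and \ref{agkpath2} (for the remaining types $\binom{\ast,a,g,k}{\ast}$ and $\binom{\ast}{\ast,a,k,g}$), using the $\varepsilon_{r}$ positivities already granted by the hypothesis.

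There is essentially no new obstacle to overcome; the corollary is the observation that the case analysis of Theorem \ref{T4A} goes through whenever the required $\varepsilon_{r}$ values are positive, and the hypothesis of the corollary supplies them directly. This is why the text flags the result as an immediate consequence of the preceding proof.
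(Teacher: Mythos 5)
Your proposal is correct and follows essentially the same route as the paper: the author likewise observes that the hypothesis $x_{i1}+x_{i2}<r$ for all $i\in B_{1}\cup B_{2}$ makes the relevant $\varepsilon_{r}$ quantities positive directly, after which the column-type case analysis from the proof of Theorem \ref{T4A} (via Lemma \ref{onecolumn} and Corollaries \ref{agkpath1}, \ref{agkpath2}) applies verbatim. Your write-up merely spells out what the paper states in one sentence.
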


\noindent We are now ready to prove two main results
of this section.

\begin{theorem}
\label{T4} No crossing job exists. 
\end{theorem}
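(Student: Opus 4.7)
The plan is to assume for contradiction that a crossing job $a\in B_1\cap B_2$ exists and then derive a contradiction by the same kind of column/type analysis used in the proof of Theorem \ref{T4A}, exploiting the two earlier results (Theorem \ref{cnc} and Theorem \ref{T4A}) that already constrain $a$. Combining these, $a$ is $e$-crossing, so $B_1\setminus\{a\}$ and $B_2\setminus\{a\}$ are both nonempty, and $x_{a1}+x_{a2}<r$.

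First I dispose of the easy case: if \emph{every} $i\in B_1\cup B_2$ satisfies $x_{i1}+x_{i2}<r$, then Corollary \ref{col} says no $e$-crossing job exists, contradicting Theorem \ref{cnc} applied to $a$. So there must be at least one $i_0\in(B_1\cup B_2)\setminus\{a\}$ with $x_{i_01}+x_{i_02}=r$; by Theorems \ref{cnc} and \ref{T4A}, such $i_0$ cannot itself be crossing, so I may assume without loss of generality that $i_0\in B_1\setminus B_2$ (the case $i_0\in B_2\setminus B_1$ being symmetric). Since $x_{i_02}$ is integral, $x_{i_01}$ has fractional part $\epsilon$; combining with (\ref{bm1}) applied to the set $B_1$ — whose fractional parts sum to $i_1+\epsilon$ for some $i_1\in\mathbb{Z}_{\ge0}$ — and with $\epsilon_a\in(0,1)$ contributed by $a$, the assumption $B_1=\{a,i_0\}$ forces $\epsilon_a=i_1$, which is impossible. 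Hence $|B_1|\ge 3$ and a job $g\in B_1\setminus\{a,i_0\}$ can be chosen; fix also $k\in B_2\setminus\{a\}$, which is necessarily distinct from $i_0$.

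The technical heart of the argument is now to apply Lemma \ref{twocolumns} to every pair for which $\varepsilon_r$ is positive. The pairs $(i_0,a)$ and $(i_0,k')$ for any $k'\in B_2\setminus\{a\}$ have $\varepsilon_r>0$ when $i_0$ is the unique saturated job, because $i_0$ is then excluded from the minimum. The column $I_{i_0}$ of type $\binom{*}{*,\overline{i_0}}$ exists by Lemma \ref{ac} ($\beta_{i_0}<w$), and by Lemma \ref{twocolumns} must refine to $\binom{*,k',i_0}{*}$ and to $\binom{*,a,i_0}{*}$ simultaneously — forcing $a$, $i_0$ and every $k'\in B_2\setminus\{a\}$ onto the $\mathcal{G}_1$-side of $I_{i_0}$. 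Coupled with the analogous forcing on the columns $I_a^\alpha$ (type $\binom{*,\bar a}{*}$) and $I_a^\beta$ (type $\binom{*}{*,\bar a}$) guaranteed by Lemma \ref{ac}, we produce a column of type $\binom{*,a,g,k}{*}$ together with a column of type $\binom{*}{*,a,k,g}$, contradicting Corollary \ref{agkpath1} with $a$ playing the middle crossing job. In the sub-case where $g\in B_1\setminus\{a,i_0\}$ happens to lie in $B_2$ (i.e.\ $g$ is a second crossing job), Theorems \ref{cnc} and \ref{T4A} give $x_{g1}+x_{g2}<r$ as well, and the contradiction is obtained instead by Corollary \ref{agkpath2}.

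The main obstacle is clearly the case bookkeeping: the natural pair $(a,k)$ has $\varepsilon_r(a,k)=0$ because of the saturated $i_0$, so one cannot mirror the proof of Theorem \ref{T4A} directly with $a$ as the sole pivot. The workaround is to use $i_0$ as the pivot whenever possible, chaining two applications of Lemma \ref{twocolumns} through $i_0$ to transfer the structural forcing back to a column of the $(*,a,g,k)/(*,a,k,g)$ type required by Corollary \ref{agkpath1}. The residual sub-case in which additional saturated jobs $i'\ne a,i_0$ or a second saturated $k_0\in B_2\setminus B_1$ exist is resolved by a symmetric argument involving $k_0$ in place of $i_0$, or when $|B_2\setminus\{a\}|\ge 2$ by upgrading Corollary \ref{agkpath1} to Lemma \ref{abgkpath}. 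All of these reduce to the same mechanism — forcing a forbidden pair of columns to coexist — and thus close out the proof.
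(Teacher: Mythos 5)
Your opening is fine and matches the paper's setup: $a$ crossing $\Rightarrow$ $e$-crossing (Theorem \ref{cnc}) $\Rightarrow$ $x_{a1}+x_{a2}<r$ (Theorem \ref{T4A}), and Corollary \ref{col} then forces a saturated job $i_0\neq a$ with $x_{i_01}+x_{i_02}=r$. But from there the argument has a fatal gap. Every result you plan to contradict -- Corollary \ref{agkpath1}, Corollary \ref{agkpath2}, Lemma \ref{abgkpath} -- carries $\varepsilon_r$-positivity hypotheses, and the saturated job $i_0$ makes them unsatisfiable for the pairs you use: since $x_{i_01}+x_{i_02}=r$ and the slack $r-(x_{j1}+x_{j2})$ is nonnegative by (\ref{y3}), we have $\varepsilon_r(u,v)=0$ for every pair with $i_0\notin\{u,v\}$. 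Corollary \ref{agkpath1} needs $\varepsilon_r(g,a)>0$ and $\varepsilon_r(a,k)>0$ simultaneously, which forces $i_0\in\{g,a\}\cap\{a,k\}=\{a\}$, impossible because $i_0\neq a$; Lemma \ref{abgkpath} needs two disjoint pairs both containing $i_0$, impossible; and your use of Corollary \ref{agkpath2} is on a pair of crossing jobs distinct from $i_0$, so again $\varepsilon_r=0$. You yourself note that "$\varepsilon_r(a,k)=0$ because of the saturated $i_0$," but chaining Lemma \ref{twocolumns} through $i_0$ does not repair this: producing columns of types $\binom{\ast,a,g,k}{\ast}$ and $\binom{\ast}{\ast,a,k,g}$ is only a contradiction when the corresponding $\varepsilon_r$'s are positive, and here they provably are not. (Separately, even the existence of those two column types is asserted rather than derived -- Lemma \ref{twocolumns} pins $a$, $i_0$ and the jobs of $B_2\setminus\{a\}$ to sides of $I_{i_0}$ and $I_a^{\alpha}$, but never places $g$ and $k$ where your final configuration needs them.)

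The decisive observation, which your write-up walks right past, is that the saturated job $i_0$ itself already yields the contradiction: by Theorem \ref{T4A} $i_0$ is not $e$-crossing, and since $a\in B_1\cap B_2$ with $a\neq i_0$ rules out $B_1=\{i_0\}$ and $B_2=\{i_0\}$, failure of $e$-crossing forces $x_{i_01}=0$ or $x_{i_02}=0$. Then saturation gives $x_{i_02}=r$ or $x_{i_01}=r$, and (\ref{bm2}) or (\ref{bm1}) makes every other $x_{j2}$ (resp.\ $x_{j1}$) zero, i.e.\ $B_2=\{i_0\}$ or $B_1=\{i_0\}$ after all -- contradicting $a\in B_1\cap B_2$. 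This is exactly the paper's proof, and it is two lines from the point where you note "$i_0\in B_1\setminus B_2$, so $x_{i_02}$ is integral." The detour through $|B_1|\ge 3$ and the two-column chaining is unnecessary, and as written it cannot be completed with the lemmas you cite.
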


\begin{proof}
By contradiction. Suppose $a$ is a crossing job. Take a crossing job with the largest $x_{a 1} +x_{a 2}$. By Theorem \ref{cnc} $a$ is $e$-crossing, and by Theorem \ref{T4A} $x_{a 1} +x_{a 2} <r$. By Corollary \ref{col} $x_{i 1} +x_{i 2} =r$ for some $i \in B_{1} \cup B_{2}$. Thus $i \neq a$. By Theorem \ref{T4A} $i$ is not $e$-crossing. Thus ($x_{i 1} =0$ or $x_{i 2} =0$)  which implies ( $B_{1} =\{i\}$ or $B_{2} =\{i\}$). This leads to contradiction since $a \in B_{1} \cap B_{2}$ and $a \neq i$. 
\end{proof}

\begin{theorem}
\label{eps} For each $g \in B_{1}$ and $k \in B_{2}$, $\varepsilon _{r} (g ,k) >0$. 
\end{theorem}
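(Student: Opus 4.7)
The plan is to argue by contradiction. Suppose $\varepsilon_r(g,k)=0$ for some $g\in B_1$ and $k\in B_2$; since $\epsilon>0$, there must exist $j\in (B_1\cup B_2)\setminus\{g,k\}$ with $x_{j1}+x_{j2}=r$. By Theorem \ref{T4}, $B_1\cap B_2=\emptyset$, so $j$ lies in exactly one of the two sets; I will treat $j\in B_1$, the case $j\in B_2$ being symmetric under the interchange of $\mathcal{G}_1$ and $\mathcal{G}_2$ (together with $B_1,B_2$ and $g,k$).

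First I would rule out $x_{j2}=0$: it would give $x_{j1}=r$, so (\ref{bm1}) would force $x_{i1}=0$ for every $i\neq j$, contradicting $x_{g1}>0$. Hence $x_{j1}x_{j2}>0$. Summing (\ref{bm1}) and (\ref{bm2}) gives $\sum_i(x_{i1}+x_{i2})=2r$, so any $i\in (B_1\cup B_2)\setminus\{g,j\}$ with $x_{i1}+x_{i2}=r$ would combine with $j$ (sum $r$) and $g$ (with $x_{g1}>0$) to exceed $2r$; the same trick with $k$ rules out such $i\in (B_1\cup B_2)\setminus\{j,k\}$. Consequently $\varepsilon_r(g,j)>0$ and $\varepsilon_r(j,k)>0$.

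Next I would exploit the column structure of $d(\mathbf{y},w)$. By Lemma \ref{ac}, columns $I_g$ of type $\binom{*}{*,\overline{g}}$, $I_k$ of type $\binom{*,\overline{k}}{*}$, and $I_j$ of type $\binom{*}{*,\overline{j}}$ belong to $d(\mathbf{y},w)$. Applying Lemma \ref{twocolumns} to the pair $(j,k)$ refines $I_j$ to type $\binom{*,k,j}{*}$ and $I_k$ to type $\binom{*}{*,k,j}$. Since $j$ sits in $\mathcal{G}_2$ inside $I_k$ while $j$ is $d$-tight by Theorem \ref{t1}, we have $\sum_{h\in\mathcal{G}_1}y_{jh}<w$, so a column $I_j'$ of type $\binom{*,\overline{j}}{*}$ exists (in fact $I_k$ itself qualifies). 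A second application of Lemma \ref{twocolumns}, now to $(g,j)$ with $I_j'$ and $I_g$, refines $I_g$ to type $\binom{*,j,g}{*}$.

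The final step fixes the positions of the $d$-tight jobs $k$ inside $I_g$ and $g$ inside $I_k$. If $k$ were matched to $\mathcal{G}_2$ in $I_g$, then $I_g$ would be of type $\binom{*,\overline{k}}{*,\overline{j}}$, contradicting Lemma \ref{onecolumn} for $(j,k)$; hence $k\in\mathcal{G}_1$ in $I_g$, and $I_g$ is of type $\binom{*,j,g,k}{*}$. Symmetrically, if $g$ were matched to $\mathcal{G}_1$ in $I_k$, then $I_k$ would be of type $\binom{*,\overline{j}}{*,\overline{g}}$, contradicting Lemma \ref{onecolumn} for $(g,j)$; hence $g\in\mathcal{G}_2$ in $I_k$, and $I_k$ is of type $\binom{*}{*,j,k,g}$. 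But the simultaneous presence of columns of types $\binom{*,j,g,k}{*}$ and $\binom{*}{*,j,k,g}$ in $d(\mathbf{y},w)$ contradicts Corollary \ref{agkpath1} applied with $a=j$. The main obstacle will be orchestrating the three successive column-type refinements (two via Lemma \ref{twocolumns} and the final one via Lemma \ref{onecolumn}) so that they line up precisely with the forbidden pair of types in Corollary \ref{agkpath1}.
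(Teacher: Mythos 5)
Your proof is correct, but it takes a genuinely longer route than the paper's. After reducing to a job $j\in(B_1\cup B_2)\setminus\{g,k\}$ with $x_{j1}+x_{j2}=r$ and, by Theorem \ref{T4}, WLOG $j\in B_1\setminus B_2$, the paper finishes in one stroke: by Theorem \ref{T4A} such a $j$ cannot be $e$-crossing, and since $B_1\setminus\{j\}\ni g$ and $B_2\setminus\{j\}\ni k$ are nonempty while $x_{j1}>0$, this forces $x_{j2}=0$, whence $x_{j1}=r$ and (\ref{bm1}) gives $B_1=\{j\}$, contradicting $g\in B_1$, $g\neq j$. You establish the same dichotomy in the opposite order: you first kill $x_{j2}=0$ by exactly this $B_1=\{j\}$ argument, but then, in the case $x_{j1}x_{j2}>0$, instead of invoking Theorem \ref{T4A} (which applies verbatim, since $j$ is then $e$-crossing and so must satisfy $x_{j1}+x_{j2}<r$) you rebuild the forbidden configuration from scratch via Lemma \ref{ac}, Theorem \ref{t1}, Lemmas \ref{twocolumns} and \ref{onecolumn}, and Corollary \ref{agkpath1} with $a=j$ — essentially re-running, for the triple $(g,j,k)$, the column argument that the paper already packaged inside the proof of Theorem \ref{T4A}. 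Your bookkeeping is sound: the inequalities $\varepsilon_r(g,j)>0$ and $\varepsilon_r(j,k)>0$ do follow from the $2r$ counting, and the successive refinements do produce columns of types $\binom{\ast,j,g,k}{\ast}$ and $\binom{\ast}{\ast,j,k,g}$ that clash with Corollary \ref{agkpath1}. The only cosmetic omission is the degenerate possibility that the two columns fed into Lemma \ref{twocolumns} coincide; the paper handles this elsewhere by a one-line appeal to Lemma \ref{onecolumn}, and the same appeal yields an immediate contradiction here as well. In short, your route buys independence from Theorem \ref{T4A} at the cost of considerable length; the paper's route (or simply citing Theorem \ref{T4A} once $x_{j1}x_{j2}>0$ is in hand) closes the argument immediately.
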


\begin{proof}
Suppose for a contradiction that $\varepsilon _{r} (g ,k) =0$ for some $g \in B_{1}$ and $k \in B_{2}$. By Theorem \ref{T4}, $g \neq k$. Then $r =x_{j 1} +x_{j 2}$ for some $j \in (B_{1} \cup B_{2}) \setminus \{g ,k\}$. By Theorem \ref{T4} $j$ is not crossing thus $\{j ,g\} \subseteq B_{1}$ and $j \notin B_{2}$, or $\{j ,k\} \subseteq B_{2}$ and $j \notin B_{1}$. Suppose the former, the proof for the latter is similar and thus will be omitted. We have $x_{j 2}$ integral. However, by Theorem \ref{T4A} $j$ is not $e$-crossing. Hence $x_{j 2} =0$. Thus $r =x_{j 1}$ and $B_{1} =\{j\}$ which gives a contradiction. 
\end{proof}

\section{Characterization of $d (\mathbf{y} ,w)$ in $\mathbf{s}$} \label{S8}
We give a characterization of $d (\mathbf{y} ,w)$ that will be used in the remainder of the proof.

\begin{lemma}\label{D}
\label{b1b2} For each $g \in B_{1}$ and $k \in B_{2}\;\text{,}\;$ any column $I$ in $d (\mathbf{y} ,w)$ is either of type $\binom{ \ast  ,k}{ \ast  ,g}$ or of type $\binom{ \ast }{ \ast  ,k ,g}$ or of type $\binom{ \ast  ,k ,g}{ \ast }\;\text{.}\;$ Moreover, for each $g \in B_{1}$ and $k \in B_{2}\;\text{,}\;$ there is $I_{k}$ of type $\binom{ \ast }{ \ast  ,k ,g}\;\text{,}\;$ and there is $I_{g}$ of type $\binom{ \ast  ,k ,g}{ \ast }$ in $d (\mathbf{y} ,w)$. Finally, if there is $i \in B_{1} \cup B_{2}$ such that $x_{i 1} +x_{i 2} =r$, then either $B_{1} =\{i\}$ or $B_{2} =\{i\}$. 
\end{lemma}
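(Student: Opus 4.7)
The plan is to assemble this lemma from the earlier structural results, especially Theorem \ref{t1} ($d$-tightness of $B_1\cup B_2$), Lemma \ref{ac} (the strict inequalities on $\alpha_j$ and $\beta_j$), Theorem \ref{T4} (absence of crossing jobs), and Theorem \ref{eps} (positivity of $\varepsilon_r(g,k)$ on $B_1\times B_2$).

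For the first assertion, fix $g\in B_1$ and $k\in B_2$. By Theorem \ref{T4}, $g\neq k$, and by Theorem \ref{t1}, both $g$ and $k$ are $d$-tight, so every column $I\in d(\mathbf{y},w)$ matches each of $g$ and $k$ to some machine. There are then four possibilities for where $g$ and $k$ land: $\binom{*,k,g}{*}$, $\binom{*}{*,k,g}$, $\binom{*,k}{*,g}$, or $\binom{*,g}{*,k}$. The last one is precisely a column of type $\binom{*,\overline{k}}{*,\overline{g}}$. Since $x_{g1}>0$, $x_{k2}>0$ (as $g\in B_1$, $k\in B_2$), and $\varepsilon_r(g,k)>0$ by Theorem \ref{eps}, Lemma \ref{onecolumn} rules this out, leaving exactly the three types listed.

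For the second assertion, I would apply Lemma \ref{ac}: since $g\in B_1$ satisfies $\sum_{h\in \mathcal{G}_2}y_{gh}<w$, there must be some column $I_g\in d(\mathbf{y},w)$ in which $g$ is not matched to a machine of $\mathcal{G}_2$. Among the three allowed types, only $\binom{*,k,g}{*}$ has $g$ on the $\mathcal{G}_1$ side, so $I_g$ is of that type. Symmetrically, $\sum_{h\in \mathcal{G}_1}y_{kh}<w$ forces a column $I_k$ with $k$ not on the $\mathcal{G}_1$ side, and the only admissible such type is $\binom{*}{*,k,g}$.

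For the third assertion, suppose $x_{i1}+x_{i2}=r$ for some $i\in B_1\cup B_2$. Theorem \ref{T4} gives that $i$ is not crossing; by symmetry assume $i\in B_1\setminus B_2$, so it suffices to prove $B_1=\{i\}$. Observe first that $B_2\neq\emptyset$: otherwise every $x_{j2}$ is integral and hence $r=\sum_j x_{j2}$ is integral, contradicting $r=\lfloor r^*\rfloor+\epsilon$ with $\epsilon>0$. Now suppose for a contradiction that $B_1\setminus\{i\}\neq\emptyset$; pick $g\in B_1\setminus\{i\}$ and any $k\in B_2$ (so $k\neq i$ because $i\notin B_2$, and $k\neq g$ because $B_1\cap B_2=\emptyset$ by Theorem \ref{T4}). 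Then $i\in(B_1\cup B_2)\setminus\{g,k\}$ with $x_{i1}+x_{i2}=r$, which forces $\varepsilon_r(g,k)=0$, contradicting Theorem \ref{eps}.

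There is no real obstacle here; the proof is essentially a bookkeeping argument that packages the previously established lemmas. The one point requiring care is the identification between the bar-notation of Lemma \ref{onecolumn} and the concrete placements $\binom{*,g}{*,k}$ versus $\binom{*,k}{*,g}$, since only the former matches the forbidden pattern $\binom{*,\overline{k}}{*,\overline{g}}$; the other mixed type survives, which is why three (not two) types remain.
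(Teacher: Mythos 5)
Your proof is correct and follows essentially the same route as the paper: $d$-tightness (Theorem \ref{t1}) forces $g$ and $k$ into every column, Lemma \ref{onecolumn} together with Theorem \ref{eps} eliminates the fourth placement $\binom{\ast,g}{\ast,k}$, and the strict inequalities of Lemma \ref{ac} produce the columns $I_g$ and $I_k$. The only cosmetic difference is in the last assertion, where the paper invokes Theorem \ref{T4A} (a job with $x_{i1}+x_{i2}=r$ cannot be $e$-crossing) directly, while you derive the same conclusion by noting that such an $i$ would force $\varepsilon_r(g,k)=0$ against Theorem \ref{eps}; both rest on the same underlying results.
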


\begin{proof} Let $g \in B_{1}$ and $k \in B_{2}$.
By Lemma \ref{ac} and Theorem \ref{t1} there are columns $I_{k}$ of type $\binom{ \ast }{ \ast  ,k}$ and $I_{g}$ of type $\binom{ \ast  ,g}{ \ast }$ in $d (\mathbf{y} ,w)$. By Theorem \ref{t1} $I_{k}$ is either of type $\binom{ \ast }{ \ast  ,k ,g}$ or of type $\binom{ \ast  ,g}{ \ast  ,k}$, and $I_{g}$ is either of type $\binom{ \ast  ,g ,k}{ \ast }$ or of type $\binom{ \ast  ,g}{ \ast  ,k}$. By Theorem \ref{eps} we have  $\varepsilon _{r} (g ,k) >0$ and thus by Lemma \ref{onecolumn} neither $I_{k}$ nor $I_{g}$ is of type $\binom{ \ast  ,g}{ \ast  ,k}$. This proves the first part of the lemma. 

If there is $i \in B_{1} \cup B_{2}$ such that $x_{i 1} +x_{x 2} =r$, then by Theorem \ref{T4} the $i$ is not crossing. Hence either $i \in B_{1} \setminus B_{2}$ or $i \in B_{2} \setminus B_{1}$. By Theorem \ref{T4A} $i$ is not $e$-crossing thus either $B_{1} =\{i\}$ or $B_{2} =\{i\}$. 
\end{proof}

\begin{theorem}
\label{product} If there is a job $j$ such that $x_{j 1} x_{j 2} >0$, then $B_{1} =\{j\}$ or $B_{2} =\{j\}$. 
\end{theorem}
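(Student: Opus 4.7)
The plan is proof by contradiction. Suppose $j$ has $x_{j1}x_{j2}>0$ yet $B_1\ne\{j\}$ and $B_2\ne\{j\}$. By Theorem~\ref{T4}, $j$ is not crossing, so I focus on the main case $j\in B_1\setminus B_2$ (the case $j\in B_2\setminus B_1$ is symmetric; I reduce $j\notin B_1\cup B_2$ to the first two cases below). Pick $g'\in B_1\setminus\{j\}$ (nonempty since $B_1\ne\{j\}$) and any $k\in B_2$, and split on whether some $i\in(B_1\cup B_2)\setminus\{g',j\}$ satisfies $x_{i1}+x_{i2}=r$.

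If such $i$ exists, Lemma~\ref{D} and $|B_1|\ge 2$ force $B_2=\{k\}$ with $x_{k1}+x_{k2}=r$. Either $x_{k1}=0$, so $x_{k2}=r=\sum_{j'}x_{j'2}$ by~(\ref{bm2}) forces $x_{j2}=0$, contradicting $x_{j2}>0$; or $x_{k1}x_{k2}>0$, and since $\varepsilon_r(j,k)>0$ (again by Lemma~\ref{D} and $|B_1|\ge 2$), Corollary~\ref{agkpath2} applied to $(j,k)$ excludes one of $\binom{*,j,k}{*}$ or $\binom{*}{*,k,j}$, both of which are provided by Lemma~\ref{D} (as $I_j$ and $I_k$), a contradiction.

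Otherwise $\varepsilon_r(g',j)>0$ and $\varepsilon_r(j,k)>0$. Since $j\in B_1$ is $d$-tight (Theorem~\ref{t1}), Lemma~\ref{D} applied to the pair $(j,k)$ together with $k$ sitting on the bottom of $I_k=\binom{*}{*,k,g'}$ forces $j$ onto the bottom of $I_k$, so a column of type $\binom{*}{*,j,k,g'}$ exists. Corollary~\ref{agkpath1} (with $a=j$, $g=g'$, $k=k$) then excludes $\binom{*,j,g',k}{*}$ from $d(\mathbf{y},w)$, and the same type-restriction reasoning forces $j$ onto the bottom of $I_{g'}=\binom{*,k,g'}{*}$. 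Now $M_{I_{g'}}$ places $g'$ only in $\mathcal{G}_1$ and $j$ only in $\mathcal{G}_2$, so it is a $(g',j)$-feasible matching (condition~(3) holds because $g'$ is $a$-tight by Lemma~\ref{ac}; condition~(4) holds automatically from $j$'s position in $\mathcal{G}_2$, irrespective of $c$-tightness of $j$), contradicting Lemma~\ref{reduc} since $\varepsilon_r(g',j)>0$.

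The case $j\notin B_1\cup B_2$ reduces to the main case: any $k\in B_2$ with $x_{k1}>0$ has $x_{k1}x_{k2}>0$, and by the proved symmetric case applied to $k$ satisfies $B_2=\{k\}$; the resulting tight singleton then triggers the first sub-case argument applied to $(j,k)$, yielding either $x_{j2}=0$ via~(\ref{bm2}) or a Corollary~\ref{agkpath2} contradiction. If every $k\in B_2$ has $x_{k1}=0$, a symmetric analysis on $B_1$ concludes. The main obstacle throughout is controlling the position of $j$ in the Lemma~\ref{D} columns: in the main case it is made rigid by $j\in D$ together with Lemma~\ref{D} applied to $(j,k)$, while in the third case $j$ need not be $d$-tight and the reduction just described is used instead.
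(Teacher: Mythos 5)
Your treatment of the case $j\in B_1\cup B_2$ is correct. In sub-case A you combine the final assertion of Lemma~\ref{D} with Corollary~\ref{agkpath2} properly, and in sub-case B the endgame --- using the type restriction of Lemma~\ref{D} for the pair $(j,k)$ plus $d$-tightness of $j$ to push $j$ onto the $\mathcal{G}_2$ side of $I_k$ and then, after Corollary~\ref{agkpath1}, of $I_{g'}$, and finally reading $M_{I_{g'}}$ as a $(g',j)$-feasible matching contradicting Lemma~\ref{reduc} --- is valid; it is a slightly more direct close than the paper's, which finishes the same configuration through Lemma~\ref{onecolumn}. Two small remarks: condition (3) of $(g',j)$-feasibility holds because $g'$ is matched into $\mathcal{G}_1$ in $M_{I_{g'}}$ (the $a$-tightness of $g'$ only makes the condition binding, it is not the reason it is satisfied), and the claim $\varepsilon_r(j,k)>0$ in sub-case B needs the extra observation that $x_{g'1}+x_{g'2}<r$, which follows from the last part of Lemma~\ref{D} together with Theorem~\ref{T4}.

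The genuine gap is the case $j\notin B_1\cup B_2$. Your reduction ends by ``triggering the first sub-case argument applied to $(j,k)$,'' but that argument obtains its contradiction from Corollary~\ref{agkpath2} only because columns of both types $\binom{\ast,j,k}{\ast}$ and $\binom{\ast}{\ast,k,j}$ are known to exist, and their existence came from Lemma~\ref{D} applied to $j\in B_1$, $k\in B_2$ together with the $d$-tightness of $j$ (Theorem~\ref{t1}). When $j\notin B_1\cup B_2$ neither ingredient is available: $j$ need not be $d$-tight and may be absent from columns altogether, so the disjunctive conclusion of Corollary~\ref{agkpath2} contradicts nothing; also $\varepsilon_r(j,k)>0$ is no longer automatic (some $g\in B_1$ with $x_{g1}+x_{g2}=r$ and $B_1=\{g\}$ is not excluded here), and the ``tight singleton'' you invoke ($x_{k1}+x_{k2}=r$) is not established by the symmetric main case, which only yields $B_2=\{k\}$. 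Finally, the sub-sub-case in which every $k\in B_2$ has $x_{k1}=0$ and every $g\in B_1$ has $x_{g2}=0$ is not addressed at all --- there is then no job inside $B_1\cup B_2$ to reduce to. The paper handles $j\notin B_1\cup B_2$ directly: it chooses $j$ with maximal $x_{j1}+x_{j2}$ among positive-product jobs to secure $\varepsilon_r(g,j)>0$ and $\varepsilon_r(j,k)>0$, applies Corollary~\ref{agkpath1}, and then notes that the Lemma~\ref{D} column of type $\binom{\ast,g,k}{\ast}$ (resp.\ $\binom{\ast}{\ast,k,g}$) is of type $\binom{\ast,\bar{j}}{\ast,\bar{g}}$ (resp.\ $\binom{\ast,\bar{k}}{\ast,\bar{j}}$) whether $j$ appears in it or not, contradicting Lemma~\ref{onecolumn}; no $d$-tightness of $j$ is needed. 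An argument of this kind is what your third case is missing.
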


\begin{proof}
Let $x_{j 1} x_{j 2} >0$ for a job $j$. Without loss of generality let $j$ be a job with the largest value of $x_{j1}+x_{j2}$ among jobs with $x_{j 1} x_{j 2} >0$. Suppose for a contradiction that $B_{1} \setminus \{j\} \neq \emptyset $ and $B_{2} \setminus \{j\} \neq \emptyset $. Thus if $j\in B_1\cup B_2$, then $j$ is $e$-crossing. By Theorem \ref{T4A}, $x_{j 1} +x_{j 2} <r$. Let $g \in B_{1} \setminus \{j\}$ and $k \in B_{2} \setminus \{j\}$. By Theorem \ref{eps} we have $\epsilon _{r} (g ,j) >0$ and $\epsilon _{r} (j ,k) >0$.  Thus by Corollary \ref{agkpath1}
column of type $\binom{ \ast  ,j ,g ,k}{ \ast }$ does not exist in $d (\mathbf{y} ,w)$ or column of type $\binom{ \ast }{ \ast  ,j ,k ,g}$ does not exist in $d (\mathbf{y} ,w)$. Suppose the former holds, then by Lemma \ref{b1b2} a column of type $\binom{ \ast  ,\bar{j}}{ \ast  ,\bar{g}}$ exists in $d (\mathbf{y} ,w)$ which contradicts Lemma \ref{onecolumn}. For the latter, by Lemma \ref{b1b2}
a column of type $\binom{ \ast  ,\bar{k}}{ \ast  ,\bar{j}}$ exists in $d (\mathbf{y} ,w)$ which contradicts Lemma \ref{onecolumn}. 

If $j\notin B_1 \cup B_2$, then both $x_{j1}$ and $x_{j2}$ are integral. Thus $x_{j1} + x_{j2}< r$.  Let $g \in B_{1} \setminus \{j\}$ and $k \in B_{2} \setminus \{j\}$. We have $\epsilon _{r} (g ,j) >0$ and $\epsilon _{r} (j ,k) >0$.
To prove the former inequality we observe that by our choice of job $j$ for any job $i\in B_1 \cup B_2$ different from $g$ and $j$,  and such that $x_{i1}+x_{i2}=r$ must be either $r =x_{i 1}$ or $r =x_{i 2}$. Otherwise
$x_{i1}x_{i2}>0$, thus $i$ would have been chosen instead of $j$. The proof of the latter inequality follows by a similar argument.  Thus by Corollary \ref{agkpath1}
column of type $\binom{ \ast  ,j ,g ,k}{ \ast }$ does not exist in $d (\mathbf{y} ,w)$ or column of type $\binom{ \ast }{ \ast  ,j ,k ,g}$ does not exist in $d (\mathbf{y} ,w)$. Suppose the former holds, then by Lemma \ref{b1b2} a column of type  $\binom{ \ast , g ,k}{ \ast }$   exists in $d (\mathbf{y} ,w)$. This column is either of type $\binom{ \ast , g ,k}{ \ast, j }$  or of type $\binom{ \ast , \bar{j}, g ,k}{ \ast. \bar{j} }$ which implies that the column is of type $\binom{ \ast  ,\bar{j}}{ \ast  ,\bar{g}}$. This however contradicts Lemma \ref{onecolumn}. For the latter, we prove in a similar fashion that 
a column of type $\binom{ \ast  ,\bar{k}}{ \ast  ,\bar{j}}$ exists in $d (\mathbf{y} ,w)$ which contradicts Lemma \ref{onecolumn}. 
Therefore we get a contradiction which proves the theorem.
\end{proof}

\subsection{The overlap} \label{S9}
An overlap of $B_{1}$ is a column $I=(M_I, \epsilon)\in d (\mathbf{y} ,w)$ that matches at least two different jobs from $B_{1}$ with machines in $\mathcal{G}_{1}$. Similarly, an overlap of $B_{2}$ is a column $I=(M_I, \epsilon) \in d (\mathbf{y} ,w)$ that matches at least two different jobs from $B_{2}$ with machines in $\mathcal{G}_{2}$.

\begin{lemma}
\label{overlap} An overlap of $B_{1}$ and an overlap of $B_{2}$  do not occur simultaneously. 
\end{lemma}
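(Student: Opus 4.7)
The plan is to argue by contradiction, assuming an overlap $I_1\in d(\mathbf{y},w)$ of $B_1$ and an overlap $I_2\in d(\mathbf{y},w)$ of $B_2$ coexist, and then to force enough structure on $I_1$ and $I_2$ to trigger Lemma~\ref{abgkpath}. First I would extract four witness jobs: two distinct jobs $a,g\in B_1$ matched in $I_1$ to machines in $\mathcal{G}_1$, and two distinct jobs $b,k\in B_2$ matched in $I_2$ to machines in $\mathcal{G}_2$. Since $B_1\cap B_2=\emptyset$ by Theorem~\ref{T4}, the four jobs $a,g,b,k$ are pairwise distinct.

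Next I would pin down the type of $I_1$. Applying Lemma~\ref{b1b2} to the pair $(g,k)$, the column $I_1$ must be of type $\binom{\ast,k}{\ast,g}$, $\binom{\ast}{\ast,k,g}$, or $\binom{\ast,k,g}{\ast}$. The first two place $g$ in $\mathcal{G}_2$ and are incompatible with $(g,h)\in M_{I_1}$ for $h\in\mathcal{G}_1$, so $I_1$ is of type $\binom{\ast,k,g}{\ast}$; in particular, $k$ is matched in $I_1$ to a machine in $\mathcal{G}_1$. Applying Lemma~\ref{b1b2} again to the pair $(a,b)$, and using that $a$ is matched in $I_1$ to a machine in $\mathcal{G}_1$, the same case analysis forces $I_1$ to be of type $\binom{\ast,b,a}{\ast}$, so $b$ is also matched in $I_1$ to a machine in $\mathcal{G}_1$. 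Combining these, $I_1$ is of type $\binom{\ast,a,b,g,k}{\ast}$. Symmetric reasoning applied to $I_2$, using that $k$ and $b$ are matched in $I_2$ to machines in $\mathcal{G}_2$, yields that $I_2$ is of type $\binom{\ast}{\ast,a,b,k,g}$.

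Finally I would invoke Theorem~\ref{eps}, which (together with $g,a\in B_1$ and $k,b\in B_2$) gives $\varepsilon_r(g,k)>0$ and $\varepsilon_r(a,b)>0$, and $x_{g1},x_{a1},x_{k2},x_{b2}>0$ by definition of $B_1$ and $B_2$. Then Lemma~\ref{abgkpath} directly forbids the simultaneous existence of a column of type $\binom{\ast,a,b,g,k}{\ast}$ and a column of type $\binom{\ast}{\ast,a,b,k,g}$ in $d(\mathbf{y},w)$, contradicting what was just established and proving the lemma.

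The only potentially delicate step is the repeated use of Lemma~\ref{b1b2} to propagate both pairs of $B_1$-jobs into the top of $I_1$ and both pairs of $B_2$-jobs into the bottom of $I_2$; everything else is a clean application of Theorem~\ref{eps} and Lemma~\ref{abgkpath}. I expect no obstacle beyond carefully recording which pair of jobs the \emph{three admissible types} are applied to in each step.
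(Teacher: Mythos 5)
Your proof is correct and follows essentially the same route as the paper: both arguments take the two overlap columns, use Lemma~\ref{b1b2} (applied to the pairs $(g,k)$ and $(a,b)$) to force them into types $\binom{\ast,a,b,g,k}{\ast}$ and $\binom{\ast}{\ast,a,b,k,g}$, note via Theorem~\ref{T4} that the four jobs are distinct, and then contradict Lemma~\ref{abgkpath} using Theorem~\ref{eps}. No gaps to report.
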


\begin{proof}
Suppose for contradiction that both overlaps occur simultaneously. Then there are different jobs $a$ and $g$ both from $B_{1}$ done on $\mathcal{G}_{1}$ in a column $I_{a ,g} \in d (\mathbf{y} ,w)$ of type $\binom{ \ast  ,a ,g}{ \ast }$, and different jobs $b$ and $k$ both from $B_{2}$ done on $\mathcal{G}_{2}$ in a column $I_{b ,k} \in d (\mathbf{y} ,w)$ of type $\binom{ \ast }{ \ast  ,b ,k}$. By Lemma \ref{b1b2}, $I_{a ,g}$ is of type $\binom{ \ast  ,a ,b ,k ,g}{ \ast }$ and $I_{b ,k}$ is of type $\binom{ \ast }{ \ast  ,a ,b ,k ,g}$. This, by Theorem \ref{eps}, contradicts Lemma
\ref{abgkpath}  (by Theorem \ref{T4} there are no crossing jobs thus all four jobs $a$, $g$, $b$, and $k$ are different). This proves the lemma.
\end{proof}

%\end{comment}

\section{Integral optimal solution to $\ell p$ for $\sum_{j\in B_1} \varepsilon_j=\epsilon$ or $ \sum_{j\in B_2} \varepsilon_j=\epsilon$.} \label{S10}

In this section we prove that an integral optimal solution for $\ell p$ exists if $\epsilon>0$ and  $\sum_{j\in B_1} \varepsilon_j=\epsilon$ or $ \sum_{j\in B_2} \varepsilon_j=\epsilon$. We first prove this assuming $\sum_{j\in B_2} \varepsilon_j=\epsilon$ in $\mathbf{s}$ throughout this section. The proof for  $\sum_{j\in B_1} \varepsilon_j=\epsilon$  proceeds in a similar fashion and thus will be omitted. 

Consider the following network flow problem $\cal{F}$ with variables $t_{jh}$ for $j$ and $h\in \mathcal{G}_2$, and $z_{jh}$ for $j$ and $h\in \mathcal{G}_1$. The $r$, $w$, and $x_{j\ell}$ for $j\in \mathcal{J}$ and $\ell=1,2$ in $\cal{F}$ are constants obtained from the solution $\mathbf{s}=(\mathbf{y},\mathbf{x},r,w)$.
\bigskip

%$F$

\begin{equation*}
F=\max \sum\limits_{j\in B_1} \sum\limits_{h\in \mathcal{G}_2}t_{jh}
\end{equation*}%
Subject to%
\begin{equation}
\sum\limits_{j}t_{jh}=\lfloor w\ \rfloor \text{ \ \ } \text{\ }h\in \mathcal{G}_{2}
\label{AB2af}
\end{equation}%

\begin{equation}
\sum\limits_{h\in \mathcal{G}_{2}}b_{jh}+a_{j1}- \Delta (%
\mathcal{G}_{1})+\lfloor r \rfloor - x_{j1}\leq  \sum\limits_{h\in \mathcal{G}_{2}}t_{jh} \text{ \ \ \ }\text{\ }j\in \mathcal{J}\setminus B_1  \label{AB10af}
\end{equation}%

\begin{equation}
\sum\limits_{h\in \mathcal{G}_{2}}b_{jh}+a_{j1}- \Delta (%
\mathcal{G}_{1})+\lfloor r \rfloor - \lceil x_{j1} \rceil \leq  
 \sum\limits_{h\in \mathcal{G}_{2}}t_{jh}\leq \sum\limits_{h\in \mathcal{G}_{2}}b_{jh}+a_{j1}- \Delta (%
\mathcal{G}_{1})+ \lfloor r \rfloor - \lfloor x_{j1} \rfloor \text{ \ \ \ } \text{\ }j\in B_1 \label{AB10affz}
\end{equation}%

\begin{equation}
\sum\limits_{j}z_{jh}= \lfloor w \rfloor \text{ \ \ \ } \text{ }h\in \mathcal{G}%
_{1}  \label{AB3af}
\end{equation}%

\begin{equation}
\sum\limits_{h\in \mathcal{G}_{1}}b_{jh}+a_{j2} - \Delta (%
\mathcal{G}_{2})+ \lfloor r  \rfloor - \lfloor x_{j2} \rfloor \leq  \sum\limits_{h\in \mathcal{G}_{1}}z_{jh} \text{ \ \ \ }\text{\ }j\in \mathcal{J}  \label{AB10aafk}
\end{equation}%

\begin{equation}
0\leq t_{jh}\leq b_{jh} \;\;\text{\ }\;\; \;\;\text{}\;\;h \in \mathcal{M}\;\; \;\;\text{\ }\;\;j \in \mathcal{G}_2 \label{AB5af}
\end{equation}%
\begin{equation}
0\leq z_{jh}\leq b_{jh} \;\;\text{\ }\;\; \;\;\text{}\;\;h \in \mathcal{M}\;\; \;\;\text{\ }\;\;j \in \mathcal{G}_1 \label{AB5aaf}
\end{equation}%
\begin{equation}
\sum\limits_{h\in \mathcal{G}_1}z_{jh} + \sum\limits_{h\in \mathcal{G}_2}t_{jh} \leq \lfloor w \rfloor \text{ \ }\text{\ }j\in \mathcal{J}  \label{AB4afq}
\end{equation}%

\begin{lemma} \label{lowerb1}
There is a feasible solution to $\cal{F}$ with value 
\begin{equation}\label{value}
 \sum\limits_{j\in B_1} \sum\limits_{h\in \mathcal{G}_2}b_{jh} - \sum\limits_{j\in \mathcal{J}\setminus B_1} (a_{j1}-x_{j1})-(|B_1|-1)(\Delta(\mathcal{G}_1)-\lfloor r \rfloor)-\epsilon.
\end{equation}
\end{lemma}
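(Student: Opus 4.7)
The plan is to produce the required feasible $(\mathbf{t},\mathbf{z})$ by perturbing the $\ell p$-optimal $\mathbf{y}$ along an interval of length $\epsilon$ cut from $d(\mathbf{y},w)$. Call a column of $d(\mathbf{y},w)$ \emph{Type I} if (in the terminology of Lemma \ref{b1b2}) it places every $g\in B_1$ on a machine of $\mathcal{G}_2$, and \emph{Type II} otherwise; by Lemma \ref{b1b2} the Type II columns are exactly those in which at least one $g\in B_1$ sits on $\mathcal{G}_1$, and in such a column every $k\in B_2$ is forced onto $\mathcal{G}_1$. Write $T_I$ and $T_{II}$ for the total multiplicities of the two classes. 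I will produce an interval $J\subseteq d(\mathbf{y},w)$ of length $\epsilon$ with two properties: (a) every column of $J$ is Type I, and (b) for every $k\in B_2$, the weight of $J$ placed on columns where $k$ lies on $\mathcal{G}_2$ is at least $\varepsilon_k$. Setting $\delta_{jh}=\sum_{I\in J,\,(j,h)\in M_I}\lambda_I$, I then put $t_{jh}=y_{jh}-\delta_{jh}$ for $h\in\mathcal{G}_2$ and $z_{jh}=y_{jh}-\delta_{jh}$ for $h\in\mathcal{G}_1$.

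Property (a) forces $\sum_{h\in\mathcal{G}_2}\delta_{gh}=\epsilon$ and $\sum_{h\in\mathcal{G}_1}\delta_{gh}=0$ for each $g\in B_1$, so $\sum_{h\in\mathcal{G}_2}t_{gh}=\beta_g-\epsilon$; summing over $g\in B_1$, substituting the $a$-tightness identity from Lemma \ref{ac} and using $\sum_j x_{j1}=r$, this telescopes exactly to (\ref{value}). Feasibility is then bookkeeping: the saturation identity $\sum_j y_{jh}=w$ gives $\sum_j t_{jh}=\lfloor w\rfloor$, so (\ref{AB2af}) and (\ref{AB3af}) hold; for $j\notin B_1\cup B_2$ the constraints (\ref{AB10af}) and (\ref{AB10aafk}) follow because (\ref{r1c}) and (\ref{r1a}) have slack at least $\epsilon$ at $\mathbf{y}$ while $\sum_{h\in\mathcal{G}_\ell}\delta_{jh}\le\epsilon$; (\ref{AB10affz}) for $g\in B_1$ lands $\sum_h t_{gh}$ in $[\text{UB}-1,\text{UB}]$ because $\beta_g=\text{UB}+(\epsilon-\varepsilon_g)$ by Lemma \ref{l6}; (\ref{AB10aafk}) for $k\in B_2$ reduces to $\sum_{h\in\mathcal{G}_1}\delta_{kh}\le\epsilon-\varepsilon_k$ via the $c$-tightness of $k$ and $\alpha_k=\text{LB}+(\epsilon-\varepsilon_k)$ from Lemma \ref{l7}, which is precisely property (b); and (\ref{AB4afq}) follows from the $d$-tightness of every job in $D$, each of which appears in every column of $J$ and so loses exactly $\epsilon$ of total flow.

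The technical heart is the existence of such a $J$, which I attack in two stages. First I show $T_I\ge\epsilon$ by the case split from Lemma \ref{overlap}. If no $B_2$-overlap occurs, then every column carries at most one $k\in B_2$ on $\mathcal{G}_2$, and since $k\in\mathcal{G}_2$ forces Type I, Lemma \ref{l7} yields $T_I\ge\sum_{k\in B_2}\beta_k\ge\sum_{k\in B_2}\varepsilon_k=\epsilon$. If a $B_2$-overlap does exist, then Lemma \ref{overlap} forbids a $B_1$-overlap, so every Type II column contains exactly one $g\in B_1$ on $\mathcal{G}_1$ and $T_{II}=\sum_{g\in B_1}\alpha_g$; by Lemma \ref{l6} each $\alpha_g$ has fractional part $\varepsilon_g$ and $\sum_{g\in B_1}\varepsilon_g=i_1+\epsilon$, so $T_{II}$ has fractional part $\epsilon$ and $T_I=w-T_{II}$ is an integer, while $T_I\ge\beta_k>0$ for any $k\in B_2$ forces $T_I\ge 1>\epsilon$. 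With $T_I\ge\epsilon$ in hand, property (b) is the only real obstacle: I will cast it as a transportation-LP feasibility with demands $\varepsilon_k$ summing to $\epsilon$ and supplies coming from $S_k=\{I\text{ Type I}:k\text{ on }\mathcal{G}_2\text{ in }I\}$ with total mass $\beta_k\ge\varepsilon_k$, then verify Hall's condition; in the no-overlap case the $S_k$ are disjoint and one literally takes $\varepsilon_k$ mass from each, while in the overlap case the integer slack $T_I\ge 1$ lets any unmet demand be absorbed through columns covering several $k$'s at once.
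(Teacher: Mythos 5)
Your construction is essentially the paper's own proof: the paper likewise cuts an interval of length $\epsilon$ from the columns that keep every job of $B_1$ on $\mathcal{G}_2$ (your Type I columns, i.e.\ the sets $Y_k$ together with the $Z$-columns), requires that each $k\in B_2$ appear on $\mathcal{G}_2$ in it with weight at least $\varepsilon_k$, and then derives feasibility and the value \eqref{value} from the same $a$-, $c$-, and $d$-tightness identities, Lemmas \ref{l6}--\ref{l7}, Lemma \ref{b1b2}, and the overlap dichotomy of Lemma \ref{overlap}. Your Type I/Type II counting showing $T_I\geq\epsilon$ (integrality of $T_I$ under a $B_2$-overlap via the absence of a $B_1$-overlap) is exactly the paper's argument that $l\bigl((\bigcup_{k\in B_2}Y_k)\cup Z\bigr)$ is a positive integer, so the proposal is correct and follows the same route, at roughly the paper's own level of terseness in the padding and bookkeeping steps.
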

\begin{proof}
For $\mathbf{s}$, consider the set $Y_j$ of all columns  of type $\binom {%
\ast }{\ast ,j}$ in $d(\mathbf{y},w)$ for $j\in B_2$.
By Lemma \ref{l7},  $l(Y_j)=\beta_j=\lfloor \beta_j \rfloor + \varepsilon_j$. If there is no overlap of $B_2$ or $\sum_{j\in B_2} \lfloor \beta_j \rfloor >0$, then
take an interval $Y \subseteq \bigcup_{j\in B_2} Y_j $ such that $l(Y)= \epsilon$, $l(Y\cap Y_j)\geq \varepsilon_j$ for $j\in B_2$. Otherwise, if there is overlap of $B_2$ and $\sum_{j\in B_2} \lfloor \beta_j \rfloor =0$, then take an interval
$Y \subseteq (\bigcup_{j\in B_2} Y_j)\cup Z $ such that $l(Y)= \epsilon$, $l(Y\cap Y_j)\geq \varepsilon_j$ for $j\in B_2$. Here the $Z$ is  the set  of all columns  of type $\binom {%
\ast,B_2 }{\ast , B_1}$ in $d(\mathbf{y},w)$. In order for such $Y$ to exist we  show that $l( (\bigcup_{j\in B_2} Y_j)\cup Z)\geq 1$. 
%First, we have $0<l( \bigcup_{j\in B_2} Y_j)=\varepsilon'<\varepsilon$. Second, 
By Lemma \ref{overlap} there is no overlap of $B_1$, thus
$l( \bigcup_{j\in B_1} W_j)=\epsilon + i$ for some integer $i\geq 0$, where  $W_j$ is the set of all columns  of type $\binom {%
\ast, j}{\ast }$ in $d(\mathbf{y},w)$ for $j\in B_1$. Thus $l(d(\mathbf{y},w)\setminus  \bigcup_{j\in B_1} W_j)$ is integral since $l(d(\mathbf{y},w))=w$, and positive. However $d(\mathbf{y},w)\setminus  \bigcup_{j\in B_1} W_j)=(\bigcup_{j\in B_2} Y_j)\cup Z$
by Theorem \ref{tightd} and Lemma \ref{D}. This proves $l( (\bigcup_{j\in B_2} Y_j)\cup Z)\geq 1$, and the required $Y$ exists.

Let $Y_{jh}$ be the set of columns $I\in Y$ such that $(j,h)\in M_I$, set $\gamma_{jh}:=l(Z_{jh})$. Informally, $\gamma_{jh}$ is the amount of $j$ done on $h$ in the interval $Y$. We define a truncated solution as follows $z^*_{jh}:=y_{jh}-\gamma_{jh}$ for $h\in \mathcal{G}_1$, and  $t^*_{jh}:=y_{jh}-\gamma_{jh}$ for $h\in \mathcal{G}_2$.
By Theorem \ref{tightd} each $j\in B_2$ is $d$-tight
% and no overlap on $B_2$ 
 thus
\begin{equation}
%\sum\limits_{h\in \mathcal{G}_{2}}b_{jh}+a_{j1}- \Delta (%
%\mathcal{G}_{1})+\lfloor r - x_{j1}\rfloor \leq  
 \sum\limits_{h\in \mathcal{G}_{1}}\gamma_{jh}+  \sum\limits_{h\in \mathcal{G}_{2}}\gamma_{jh}=\epsilon \text{ \ \ \ } \text{\ }j\in  B_2   \label{B10affzzz}
\end{equation}%
and 
\begin{equation} \label{CX1}
\sum\limits_{h\in \mathcal{G}_{2}}\gamma_{jh}=\eta_j\geq \varepsilon_j \text{ \ \ \ } \text{\ }j\in  B_2.
\end{equation}
We prove that this truncated solution is feasible for $\cal{F}$ and meets (\ref{value}).

We first prove the following  lemma.

\begin{lemma} \label{Abothsmaller}
If $ \sum\limits_{j\in B_2}\varepsilon_{j}=\epsilon$,  then truncated solution meets (\ref{AB10aafk}).   
\end{lemma}

\begin{proof}%This solution truncates $\mathbf{y}$ to $\mathbf{t}^*$ on $\mathcal{G}_2$ and to $\mathbf{z}^*$ on $\mathcal{G}_1$. 
We have the following for the truncated solution.

\begin{equation} \label{AyS2}
\sum\limits_{h\in \mathcal{G}_1} z^*_{jh}=\sum\limits_{h\in \mathcal{G}_1} y_{jh} -(\epsilon-\eta_j) \text{ \ \ \ } \text{\ }j\in  B_2.
%\text {\ \ \ and} \text {\ \ \ }  \sum\limits_{h\in \mathcal{G}_1} z^*_{bh}= \sum\limits_{h\in \mathcal{G}_1} y_{bh}-(\epsilon-\epsilon_b). 
\end{equation}
By Lemma \ref{ac} each $j\in B_2$ is $c$-tight. Thus 
%since $\epsilon=\epsilon_k+\epsilon_b$, 
we get
\begin{equation}\label{AS}
\sum\limits_{h\in \mathcal{G}_1} y_{jh}=\sum\limits_{h\in \mathcal{G}_{1}}b_{jh}+a_{j2} - \Delta (\mathcal{G}_{2}) - \lfloor x_{j2} \rfloor + \lfloor r \rfloor +\epsilon -\varepsilon_j \text{ \ \ \ } \text{\ }j\in  B_2.
\end{equation}
%and
%\begin{equation}
%\sum\limits_{h\in \mathcal{G}_1} y_{bh}=\sum\limits_{h\in \mathcal{G}_{1}}b_{bh}+a_{b2} - \Delta (\mathcal{G}_{2}) - \lfloor x_{b2} \rfloor + \lfloor r \rfloor +  \epsilon_k.
%\end{equation}
Therefore by (\ref{AyS2})  and (\ref{AS}) 
we get 
\begin{equation*}
\sum\limits_{h\in \mathcal{G}_1} z^*_{jh} + (\varepsilon_j - \eta_j)= \sum\limits_{h\in \mathcal{G}_{1}}b_{jh}+a_{j2} - \Delta (\mathcal{G}_{2}) - \lfloor x_{j2} \rfloor + \lfloor r \rfloor \text{ \ \ \ } \text{\ }j\in  B_2,
\end{equation*}
and by (\ref{CX1})
\begin{equation*}
\sum\limits_{h\in \mathcal{G}_1} z^*_{jh} \geq \sum\limits_{h\in \mathcal{G}_{1}}b_{jh}+a_{j2} - \Delta (\mathcal{G}_{2}) - \lfloor x_{j2} \rfloor + \lfloor r \rfloor \text{ \ \ \ } \text{\ }j\in  B_2,
\end{equation*}
which proves (\ref{AB10aafk}) holds for the truncated solution $\mathbf{t}^*$  and  $\mathbf{z}^*$. 

\end{proof}

Let $\mathbf{t}^*$ and $\mathbf{z}^*$ be a solution of Lemma \ref{Abothsmaller}.
The   $\mathbf{t}^*$ and $\mathbf{z}^*$ clearly meet
(\ref{AB2af}), (\ref{AB3af}), (\ref{AB5af}), (\ref{AB5aaf}),  (\ref{AB4afq}). By Lemma 
%\ref{bothgreater} and 
\ref{Abothsmaller} (\ref{AB10aafk}) holds. Then  (\ref{AB10af})
%, the left hand side inequality in (\ref{B10aff})
 also holds for $\mathbf{t}^*$ and $\mathbf{z}^*$. To show that we observe that 
 by feasibility of $\mathbf{s}=(\mathbf{y},\mathbf{x},r,w)$ we have
\begin{equation*}
\sum\limits_{h\in \mathcal{G}_{2}}b_{jh}+a_{j1}- x_{j1}- \Delta (%
\mathcal{G}_{1})+r  \leq  \sum\limits_{h\in \mathcal{G}_{2}}(y_{jh} - t^*_{jh})+ \sum\limits_{h\in \mathcal{G}_{2}} t^*_{jh}\text{ \ \ \ } \text{\ }j\in \mathcal{J}\setminus B_1,  \label{B10ab}
\end{equation*}%

Since  for $\mathbf{t}^*$  we have
\begin{equation*}
0\leq  \sum\limits_{h\in \mathcal{G}_{2}}(y_{jh} - t^*_{jh}) \leq \epsilon \text{ \ \ \ }\text{\ }j\in \mathcal{J},  \label{B10ac}
\end{equation*}
and $x_{j1}$ is integral for $\mathcal{J}\setminus B_1$ the $\mathbf{t}^*$ satisfies the (\ref{AB10af}).

To prove  (\ref{AB10affz}) we observe that by Lemma \ref{ac} each $j\in B_1$ is $a$-tight and thus
\begin{equation}
\sum\limits_{h\in \mathcal{G}_{2}}b_{jh}+a_{j1}- x_{j1}- \Delta (%
\mathcal{G}_{1})+r  =  \sum\limits_{h\in \mathcal{G}_{2}}(y_{jh} - t^*_{jh})+ \sum\limits_{h\in \mathcal{G}_{2}} t^*_{jh} \text{ \ \ \ } \text{\ }j\in B_1. \label{AB10abcd}
\end{equation}%
%By Lemma \ref{onecolumn} and definition of $\mathbf{t}^*$
By Theorem \ref{tightd} $j\in B_1$ is $d$-tight. Thus by Lemma \ref{onecolumn}  and definition of truncated solution we have
\begin{equation}
\epsilon=\sum\limits_{h\in \mathcal{G}_{2}}(y_{jh} - t^*_{jh}),
%\geq \epsilon - \lambda \text{ \ \ \ }\forall \text{\ }j\in B_1. 
\label{AB10abcdd}
\end{equation}
for $j\in B_1$.

Thus by (\ref{AB10abcd}) and  (\ref{AB10abcdd}) 
\begin{equation*}
\sum\limits_{h\in \mathcal{G}_{2}}b_{jh}+a_{j1} - \Delta (%
\mathcal{G}_{1})+\lfloor r  \rfloor -\lfloor x_{j1} \rfloor +\epsilon -\epsilon - \varepsilon_j=   \sum\limits_{h\in \mathcal{G}_{2}} t^*_{jh}\text{ \ \ \ }\text{\ }j\in B_1. \label{B10abcdd1}
\end{equation*}
Hence (\ref{AB10affz}) is met by the truncated solution $\mathbf{t}^*$ and $\mathbf{z}^*$.
Therefore the truncated solution  $\mathbf{t}^*$ and $\mathbf{z}^*$ is feasible for $\cal{F}$.

To prove the lower bound on the value of objective function
we observe that  by (\ref{AB10abcd}) and (\ref{AB10abcdd})
\begin{equation}
\sum\limits_{h\in \mathcal{G}_{2}}b_{jh}+a_{j1}- x_{j1}- \Delta (%
\mathcal{G}_{1})+\lfloor r  \rfloor +\epsilon -\epsilon=   \sum\limits_{h\in \mathcal{G}_{2}} t^*_{jh}\text{ \ \ \ } \text{\ }j\in B_1. \label{B10abcdd1}
\end{equation}
Summing up (\ref{B10abcdd1}) side by side over all $j\in B_1$ we get by (\ref{bm1}) for  $(\mathbf{y},\mathbf{x},r,w)$
\begin{equation*}
\sum\limits_{j\in B_1}( \sum\limits_{h\in \mathcal{G}_{2}}b_{jh}+a_{j1})- (r-c)- |B_1|(\Delta (%
\mathcal{G}_{1})-\lfloor r  \rfloor )=   \sum\limits_{j\in B_1} \sum\limits_{h\in \mathcal{G}_{2}} t^*_{jh}, \label{B10abcdd12}
\end{equation*}
where $c=\sum\limits_{j\in \mathcal{J}\setminus B_1} x_{j1}$ is integral by definition of $B_1$.
Thus
\begin{equation*}
\sum\limits_{j\in B_1} \sum\limits_{h\in \mathcal{G}_{2}}b_{jh} +\Delta (%
\mathcal{G}_{1}) -\lfloor r \rfloor- \sum\limits_{j\in \mathcal{J}\setminus B_1} (a_{j1}-x_{j1}) - |B_1|(\Delta (%
\mathcal{G}_{1})-\lfloor r  \rfloor )- \epsilon=   \sum\limits_{j\in B_1} \sum\limits_{h\in \mathcal{G}_{2}} t^*_{jh} \label{B10abcdde}
\end{equation*}
and
\begin{equation*}
\sum\limits_{j\in B_1} \sum\limits_{h\in \mathcal{G}_{2}}b_{jh} - \sum\limits_{j\in \mathcal{J}\setminus B_1} (a_{j1}-x_{j1}) - (|B_1-1)|(\Delta (%
\mathcal{G}_{1})-\lfloor r  \rfloor )- \epsilon =   \sum\limits_{j\in B_1} \sum\limits_{h\in \mathcal{G}_{2}} t^*_{jh} \label{B10abcdde}
\end{equation*}
as required.

\end{proof}

\begin{lemma} \label{lowerb10}
If $\sum_{j\in B_1}\varepsilon_j=\epsilon$, then
\begin{equation}\label{F}
F= \sum\limits_{j\in B_1} \sum\limits_{h\in \mathcal{G}_2}b_{jh} + \sum\limits_{j\in B_1} (a_{j1}-\lfloor x_{j1}\rfloor)- |B_1|(\Delta(\mathcal{G}_1)-\lfloor r \rfloor)
\end{equation}
and
\begin{equation} \label{F0}
\sum_{h\in \mathcal{G}_2}t_{jh}=\sum_{h\in \mathcal{G}_2}b_{jh} + a_{j1} - \lfloor x_{j1} \rfloor - \Delta(\mathcal{G}_1) + \lfloor r \rfloor \text{ \ \ \ } \text{\ }j\in B_1.
\end{equation}
\end{lemma}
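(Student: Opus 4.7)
The plan is to show $F$ equals (\ref{F}) by sandwiching it between matching upper and lower bounds, and in the process force the per-job saturation (\ref{F0}). The upper bound is immediate: summing the right-hand inequality of (\ref{AB10affz}) over $j\in B_1$ yields
\[
F=\sum_{j\in B_1}\sum_{h\in\mathcal{G}_2}t_{jh}\;\le\;\sum_{j\in B_1}\Bigl(\sum_{h\in\mathcal{G}_2}b_{jh}+a_{j1}-\lfloor x_{j1}\rfloor-\Delta(\mathcal{G}_1)+\lfloor r\rfloor\Bigr),
\]
which is the right-hand side of (\ref{F}). Consequently, any feasible solution of $\cal F$ that attains this value must hit the right endpoint of (\ref{AB10affz}) for every $j\in B_1$, which is exactly (\ref{F0}). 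Thus the lemma reduces to exhibiting such a feasible solution.

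I build the solution by a truncation argument that mirrors the one in the proof of Lemma \ref{lowerb1}, but performed along the $B_1/\mathcal{G}_1$ side. Using Lemma \ref{D} and Lemma \ref{overlap}, together with Lemma \ref{l6} (which gives $\alpha_g=\lfloor\alpha_g\rfloor+\varepsilon_g$), I select a length-$\epsilon$ interval $Y'\subseteq d(\mathbf{y},w)$ such that, writing $W_j$ for the set of columns of type $\binom{*,\,j}{*}$, we have $l(Y'\cap W_j)=\varepsilon_j$ for every $j\in B_1$. The hypothesis $\sum_{j\in B_1}\varepsilon_j=\epsilon$ is what makes the individual demands add up to $l(Y')$, so the construction can be carried out with equality rather than with the bare inequality $\ge$. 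Defining $\gamma_{jh}$ as the total amount of $j$ on $h$ contributed by the columns of $Y'$, I then set $t_{jh}:=y_{jh}-\gamma_{jh}$ for $h\in\mathcal{G}_2$ and $z_{jh}:=y_{jh}-\gamma_{jh}$ for $h\in\mathcal{G}_1$.

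The verification that $(\mathbf{t},\mathbf{z})$ is feasible for $\cal F$ follows the template of Lemma \ref{lowerb1}: constraints (\ref{AB2af}) and (\ref{AB3af}) hold because every column of $d(\mathbf{y},w)$ covers every machine and $w=\lfloor w\rfloor+\epsilon$; the lower bounds (\ref{AB10af}) and (\ref{AB10aafk}) follow from (\ref{r1a}) and (\ref{r1c}) for the original $(\mathbf{y},\mathbf{x},r,w)$ together with $\sum_{h}\gamma_{jh}\le\epsilon$; and (\ref{AB5af})--(\ref{AB4afq}) are immediate. The decisive check is (\ref{AB10affz}) for $j\in B_1$: Theorem \ref{tightd} forces $j$ into every column of $d(\mathbf{y},w)$, hence into every column of $Y'$, so $\sum_h\gamma_{jh}=\epsilon$, which with $\sum_{h\in\mathcal{G}_1}\gamma_{jh}=\varepsilon_j$ gives $\sum_{h\in\mathcal{G}_2}\gamma_{jh}=\epsilon-\varepsilon_j$; combined with $a$-tightness of $j\in B_1$ (Lemma \ref{ac}) this yields
\[
\sum_{h\in\mathcal{G}_2}t_{jh}=\sum_{h\in\mathcal{G}_2}y_{jh}-(\epsilon-\varepsilon_j)=\sum_{h\in\mathcal{G}_2}b_{jh}+a_{j1}-\lfloor x_{j1}\rfloor-\Delta(\mathcal{G}_1)+\lfloor r\rfloor,
\]
which is precisely (\ref{F0}) and also the right endpoint of (\ref{AB10affz}). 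Summing over $j\in B_1$ delivers (\ref{F}).

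The main obstacle is the construction of $Y'$ with the exact per-job partition $l(Y'\cap W_j)=\varepsilon_j$. When $B_1$ has no overlap, the sets $W_j$ are pairwise disjoint with $l(W_j)=\alpha_j\ge\varepsilon_j$, so the choice is immediate. When $B_1$ overlaps, Lemma \ref{overlap} rules out any $B_2$-overlap; I then combine columns from $\bigl(\bigcup_{j\in B_1}W_j\bigr)\cup Z$, where $Z$ is the set of swap columns of type $\binom{*,\,B_2}{*,\,B_1}$, and distribute multiplicities so that each $j\in B_1$ receives exactly $\varepsilon_j$. The hypothesis $\sum_{j\in B_1}\varepsilon_j=\epsilon$ is essential for this accounting to close, and it is the analogue of the case split in the proof of Lemma \ref{lowerb1}.
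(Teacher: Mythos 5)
Your upper-bound half (summing the right-hand inequalities of (\ref{AB10affz}) over $B_1$) and the observation that attaining that bound forces (\ref{F0}) agree with the paper. The gap is in the lower-bound half, where you try to \emph{attain} the value (\ref{F}) exactly by a truncation of $\mathbf{y}$ in which each $j\in B_1$ loses exactly $\varepsilon_j$ on $\mathcal{G}_1$ and $\epsilon-\varepsilon_j$ on $\mathcal{G}_2$. Such a truncated solution is not feasible for $\mathcal{F}$. By Lemma \ref{D} (the type $\binom{\ast,g}{\ast,k}$ being excluded by Lemma \ref{onecolumn}), every column that places some job of $B_1$ on a machine of $\mathcal{G}_1$ places \emph{every} job of $B_2$ on $\mathcal{G}_1$ as well. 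So in your ``no overlap of $B_1$'' case, where $Y'\subseteq\bigcup_{j\in B_1}W_j$ and $l(Y')=\epsilon$, each $k\in B_2$ sits on $\mathcal{G}_1$ in every column of $Y'$ and is therefore truncated by the full $\epsilon$ on $\mathcal{G}_1$. But by $c$-tightness of $k$ (Lemma \ref{ac}) the quantity $\sum_{h\in\mathcal{G}_1}y_{kh}$ exceeds the lower bound in (\ref{AB10aafk}) by exactly $\epsilon-\varepsilon_k<\epsilon$ (this is the paper's (\ref{AS})), so (\ref{AB10aafk}) is violated for every $k\in B_2$. Your feasibility argument, ``the lower bounds follow from (\ref{r1a}) and (\ref{r1c}) together with $\sum_h\gamma_{jh}\le\epsilon$,'' is exactly where this is missed: for $j\in B_2$ the available slack is $\epsilon-\varepsilon_j$, not $\epsilon$, which is why the paper's interval in Lemma \ref{lowerb1} is chosen on the $B_2/\mathcal{G}_2$ side with $l(Y\cap Y_j)\ge\varepsilon_j$. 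In the overlap case your construction of $Y'$ with $l(Y'\cap W_j)=\varepsilon_j$ is only asserted, and the same tension persists: reducing the $\mathcal{G}_2$-truncation of a $B_1$-job below $\epsilon$ requires columns that consume the scarce $\mathcal{G}_1$-slack of all $B_2$-jobs. In general no truncation of $\mathbf{y}$ attains (\ref{F}).

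The paper does not attempt such an exact construction. It takes the Lemma \ref{lowerb1} truncation, whose value is the right-hand side of (\ref{F}) minus $\epsilon$ (equation (\ref{Blue1})), and then invokes the Integral Circulation Theorem: since all bounds in $\mathcal{F}$ are integral, an optimal solution may be taken integral, so $F$ is an integer that is at least an integer minus $\epsilon$, hence at least that integer; the summed upper bounds of (\ref{AB10affz}) give the reverse inequality, and tightness of every upper bound then yields (\ref{F0}). This integrality/rounding step, which closes the $\epsilon$ gap without exhibiting an explicit optimal truncation, is the idea missing from your argument.
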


\begin{proof} By  (\ref{B10abcdd1}) 
\begin{equation}
\sum\limits_{h\in \mathcal{G}_{2}}b_{jh}+a_{j1} - \Delta (%
\mathcal{G}_{1})+\lfloor r  \rfloor -\lfloor x_{j1} \rfloor - \varepsilon_j=   \sum\limits_{h\in \mathcal{G}_{2}} t^*_{jh}\text{ \ \ \ }\text{\ }j\in B_1, \label{Blue}
\end{equation}
summing up side by side for $j\in B_1$ and taking $\sum_{j\in B_1}\varepsilon_j=\epsilon$ we get
\begin{equation}
\sum_{j\in B_1}\sum\limits_{h\in \mathcal{G}_{2}}b_{jh}+\sum_{j\in B_1}(a_{j1} -\lfloor x_{j1} \rfloor)- |B_1|(\Delta (%
\mathcal{G}_{1})-\lfloor r  \rfloor ) - \epsilon=  \sum_{j\in B_1} \sum\limits_{h\in \mathcal{G}_{2}} t^*_{jh}, \label{Blue1}
\end{equation}
for the truncated solution $\mathbf{t}^*$ and $\mathbf{z}^*$, which by Lemma \ref{lowerb1} is feasible for $\cal{F}$. Let   $\mathbf{t}$ and $\mathbf{z}$ be an optimal solution for $\cal{F}$. Since all upper and lower bounds in $\cal{F}$ are integral, we may assume both $\mathbf{t}$ and $\mathbf{z}$
integral by the Integral Circulation Theorem, see \cite{L}. Thus by (\ref{Blue1})
\begin{equation}
\sum_{j\in B_1}\sum\limits_{h\in \mathcal{G}_{2}}b_{jh}+\sum_{j\in B_1}(a_{j1} -\lfloor x_{j1} \rfloor)- |B_1|(\Delta (%
\mathcal{G}_{1})-\lfloor r  \rfloor ) \leq  \sum_{j\in B_1} \sum\limits_{h\in \mathcal{G}_{2}} t_{jh}, \label{Blue2}
\end{equation}
and the upper bounds the in (\ref{AB10affz}) give
\begin{equation}
\sum_{j\in B_1}\sum\limits_{h\in \mathcal{G}_{2}}b_{jh}+\sum_{j\in B_1}(a_{j1} -\lfloor x_{j1} \rfloor)- |B_1|(\Delta (%
\mathcal{G}_{1})-\lfloor r  \rfloor ) \geq  \sum_{j\in B_1} \sum\limits_{h\in \mathcal{G}_{2}} t_{jh}. \label{Blue3}
\end{equation}
Hence by (\ref{Blue2}) and (\ref{Blue3}) we get
\begin{equation*}
\sum_{j\in B_1}\sum\limits_{h\in \mathcal{G}_{2}}b_{jh}+\sum_{j\in B_1}(a_{j1} -\lfloor x_{j1} \rfloor)- |B_1|(\Delta (%
\mathcal{G}_{1})-\lfloor r  \rfloor ) =  \sum_{j\in B_1} \sum\limits_{h\in \mathcal{G}_{2}} t_{jh}=F, \label{Blue4}
\end{equation*}
which proves (\ref{F}) in the lemma. Finally, in order to reach this optimal value all upper bounds in (\ref{AB10affz}) must be reached, which proves (\ref{F0}).
\end{proof}

\begin{theorem} \label{bb1}
For $\sum_{j\in B_1}\varepsilon_j=\epsilon$, an optimal solution to $\cal{F}$ can be extended to an integral feasible solution to $\ell p$ with  $lp=\lfloor r^* \rfloor<r$.
\end{theorem}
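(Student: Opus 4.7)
The strategy is to extract an integer optimal solution $(\mathbf{t},\mathbf{z})$ of the network flow problem $\mathcal{F}$ --- which exists by the Integral Circulation Theorem since $\mathcal{F}$ has integer data and the constraint structure of a bipartite transportation problem --- and assemble it into an integer feasible solution $(\mathbf{y}',\mathbf{x}',r',w')$ of $\ell p$ with $r'=\lfloor r^*\rfloor<r$. I set $y'_{jh}:=z_{jh}$ for $h\in\mathcal{G}_1$, $y'_{jh}:=t_{jh}$ for $h\in\mathcal{G}_2$, $w':=\lfloor w\rfloor$ and $r':=\lfloor r\rfloor=\lfloor r^*\rfloor$. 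The identity $w'-r'=w-r=\lceil w^*-r^*\rceil$ yields~(\ref{B20}); the equalities~(\ref{AB2af})--(\ref{AB3af}) combined with the saturation assumption on $\mathbf{s}$ give~(\ref{y1})--(\ref{y2}); (\ref{r1d}) is exactly~(\ref{AB4afq}); and~(\ref{y4}) is~(\ref{AB5af})--(\ref{AB5aaf}). The tightness constraint~(\ref{r1c}) follows from~(\ref{AB10af}) for $j\notin B_1$ and, for $j\in B_1$, from the equality~(\ref{F0}) of Lemma~\ref{lowerb10}; (\ref{r1a}) follows from~(\ref{AB10aafk}) using that $x_{j2}$ is integer outside $B_2$ and that replacing $\lfloor x_{j2}\rfloor$ by $\lceil x_{j2}\rceil$ only loosens the resulting bound.

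It remains to round the $\mathbf{x}$-variables to integers. Setting $x'_{j1}:=\lfloor x_{j1}\rfloor$ for $j\in B_1$ and $x'_{j1}:=x_{j1}$ otherwise decreases $\sum_j x_{j1}$ by exactly $\sum_{j\in B_1}\varepsilon_j=\epsilon$, so $\sum_j x'_{j1}=\lfloor r\rfloor=r'$, securing~(\ref{bm1}). For $B_2$ the fractional parts sum to $i_2+\epsilon$, so I must select a subset $S\subseteq B_2$ of size $i_2$ on which to round up and round down on $B_2\setminus S$, which gives $\sum_j x'_{j2}=\lfloor r\rfloor=r'$ and hence~(\ref{bm2}). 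Constraint~(\ref{y5}) is immediate because $a_{j\ell}$ is integer and $x_{j\ell}<a_{j\ell}$ whenever $x_{j\ell}$ is fractional.

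The main obstacle is~(\ref{y3}): $x'_{j1}+x'_{j2}\leq\lfloor r\rfloor$ for every $j$. For any entry that is unchanged or rounded down, this follows from $x_{j1}+x_{j2}\leq r=\lfloor r\rfloor+\epsilon$ together with the integrality of the rounded sum. The risky case is a rounded-up $j\in S$, where the new sum is $x_{j1}+\lfloor x_{j2}\rfloor+1$ and a violation occurs iff $x_{j1}+\lfloor x_{j2}\rfloor=\lfloor r\rfloor$, equivalently $x_{j1}+x_{j2}\geq\lfloor r\rfloor+\varepsilon_j$. By Lemma~\ref{b1b2} at most one $i$ satisfies $x_{i1}+x_{i2}=r$, and such $i$ forces $B_1=\{i\}$ or $B_2=\{i\}$; the latter case gives $i_2=0$ and no rounding up is needed. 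In every other case I would show that the counting identity $\sum_{j\in B_2}\varepsilon_j=i_2+\epsilon$, combined with the structural constraints from Sections~\ref{S4}--\ref{S9}, forces the existence of a size-$i_2$ subset $S\subseteq B_2$ consisting of jobs with $x_{j1}+\lfloor x_{j2}\rfloor\leq\lfloor r\rfloor-1$, which is exactly what is needed to round up safely. This yields the desired integer feasible solution, and since the constraint $\lfloor r^*\rfloor\leq r$ of $\ell p$ forces $lp\geq\lfloor r^*\rfloor$ while our constructed $r'=\lfloor r^*\rfloor<r$ achieves it, we conclude $lp=\lfloor r^*\rfloor$.
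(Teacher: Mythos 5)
Your scaffolding matches the paper's up to the point where the $\mathbf{x}$-variables must be produced: you take an integral optimal flow $(\mathbf{t},\mathbf{z})$, set $\mathbf{y}'$ from it with $w'=\lfloor w\rfloor$, $r'=\lfloor r\rfloor=\lfloor r^*\rfloor$, and verify (\ref{B20}), (\ref{y1})--(\ref{y4}), (\ref{r1d}), (\ref{r1a}), (\ref{r1c}) essentially as the paper does. But the construction of $\mathbf{x}'$ contains a genuine gap. The entire difficulty of the theorem sits in the step you leave as ``I would show'': the existence of a size-$i_2$ subset $S\subseteq B_2$ of jobs with $x_{j1}+\lfloor x_{j2}\rfloor\leq\lfloor r\rfloor-1$ that can be rounded up so that (\ref{bm2}) and (\ref{y3}) hold simultaneously. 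No argument is given, and the crude estimates available at that point (summing the tight cases against $\sum_j x_{j1}=\sum_j x_{j2}=r$) do not by themselves rule out configurations in which fewer than $i_2$ jobs of $B_2$ are safe to round up; one would need the structural machinery of Sections \ref{S4}--\ref{S9} deployed in a concrete counting argument, which is precisely the kind of work the paper does elsewhere and you have not done here. So the proposal is not a proof; its crux is asserted, not established.

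It is also worth seeing why the paper never faces your obstacle. Section \ref{S10} carries the standing assumption $\sum_{j\in B_2}\varepsilon_j=\epsilon$ (the other case being symmetric), so $i_2=0$: all of $B_2$ is rounded down and (\ref{bm2}) holds exactly, with no rounding up at all. The delicate side is then $B_1$, and the paper does not round $x_{j1}$ down as you do; it \emph{defines} $x^*_{j1}$ from the flow through the tight form of (\ref{r1c}), i.e. (\ref{B10aa3}), proves $0\le x^*_{j1}\le a_{j1}$ and (\ref{y3}) from the two-sided bounds (\ref{AB10affz}) (invoking Lemma \ref{lowerb10} only in the sub-case $\sum_{j\in B_1}\varepsilon_j=\epsilon$), obtains $\sum_j x^*_{j1}\le\lfloor r\rfloor$ from the flow-value lower bound (\ref{bound}) of Lemma \ref{lowerb1}, and then restores equality in (\ref{bm1}) by an explicit augmentation argument (the sets $X$ and $Y$ and the increments $d_j$) that preserves (\ref{y3}) and (\ref{y5}). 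Your route uses (\ref{F0}) to make $\sum_j x'_{j1}=\lfloor r\rfloor$ immediate, which is fine under the hypothesis as literally stated, but it transfers all of the difficulty onto the $B_2$ selection step that remains unproven; in the paper's case split that situation (both fractional sums exceeding $\epsilon$ on one side while needing round-ups on the other) is instead handled by the projection method of Sections \ref{S11} and \ref{S12}, not by this flow extension. To repair your argument you must either adopt the paper's labeling of which side satisfies the ``sum equal to $\epsilon$'' condition and then supply its augmentation step for (\ref{bm1}), or actually prove the safe-subset existence claim you currently only announce.
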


\begin{proof}
Let $\mathbf{t}$ and $\mathbf{z}$ be an optimal solution to $\cal{F}$. This solution exists since by Lemma \ref{lowerb1} there is a feasible solution to $F$. Since all upper and lower bounds in $F$ are integral, we may assume both $\mathbf{t}$ and $\mathbf{z}$
integral by the Integral Circulation Theorem, see \cite{L}. Thus by Lemma \ref{lowerb1} 
%\WKcomment{feasible solution for the circulation exists}
\begin{equation}
\sum\limits_{j\in B_1}  \sum\limits_{h\in \mathcal{G}_{2}} t_{jh}\geq \sum\limits_{j\in B_1} \sum\limits_{h\in \mathcal{G}_2}b_{jh} - \sum\limits_{j\in \mathcal{J}\setminus B_1} (a_{j1}-x_{j1})-(|B_1|-1)(\Delta(\mathcal{G}_1)-\lfloor r \rfloor). \label{bound}
\end{equation}
For the partial solution $((\mathbf{t},\mathbf{z}), r'=\lfloor r \rfloor, w'=\lfloor w \rfloor)$ we have:  (\ref{AB4afq}) implies (\ref{r1d}),  (\ref{AB5af}) and (\ref{AB5aaf}) imply (\ref{y4}), (\ref{AB2af}) implies (\ref{y2}), and (\ref{AB3af}) implies (\ref{y1}).
Let us now extend the solution $((\mathbf{t},\mathbf{z}), r'=\lfloor r \rfloor, w'=\lfloor w \rfloor)$ by setting $x^*_{j2}:=\lfloor x_{j2} \rfloor$, for $j\in B_2$ and $x^*_{j2}:= x_{j2}$ for $j\in \mathcal{J}\setminus B_2$. Since $\sum_{j\in B_2}\varepsilon_j=\epsilon $,  (\ref{bm2}) is met by this extension. Clearly (\ref{y5}) is also met for $\ell=2$. By (\ref{AB10aafk}) we have
\begin{equation*}
\sum\limits_{h\in \mathcal{G}_{1}}b_{jh}+a_{j2}- x_{j2}- \Delta (%
\mathcal{G}_{2})+ \lfloor r \rfloor \leq  \sum\limits_{h\in \mathcal{G}_{1}}z_{jh} \text{ \ \ \ } \text{\ }j\in \mathcal{J}\setminus B_2  \label{B10aa1}.
\end{equation*}%
Also, since $\lfloor r - x_{j2}\rfloor=\lfloor r \rfloor - \lfloor x_{j2}\rfloor$ for $j\in B_2$ we have 
\begin{equation*}
\sum\limits_{h\in \mathcal{G}_{1}}b_{jh}+a_{j2}- x^*_{j2}- \Delta (%
\mathcal{G}_{2})+ \lfloor r \rfloor \leq  \sum\limits_{h\in \mathcal{G}_{1}}z_{jh} \text{ \ \ \ }  \label{B10aa1}
\end{equation*}%
for $j\in B_2$ by (\ref{AB10aafk}),  and thus (\ref{r1a}) is met for the extended solution  $((\mathbf{t},\mathbf{z}), r'=\lfloor r \rfloor, w'=\lfloor w \rfloor)$, and $x^*_{j2}$ for $j\in \mathcal{J}$. 

We now extend this solution further by setting
\begin{equation}
x^*_{j1}:=\sum\limits_{h\in \mathcal{G}_{2}}b_{jh}+a_{j1}- \Delta (%
\mathcal{G}_{1})+ \lfloor r \rfloor  -  \sum\limits_{h\in \mathcal{G}_{2}}t_{jh} \label{B10aa3}
\end{equation}%
for   $j\in B_1$ and $x^*_{j1}:=x_{j1}$ for $j\in \mathcal{J}\setminus B_1$.
% and $x_{j1}> \epsilon$,
%and $x^*_{j1}=x_{j1}$ for $j\in \mathcal{J}\setminus B_1$. 
To prove that (\ref{r1c}) is met for the extended solution  $((\mathbf{t},\mathbf{z}), r'=\lfloor r \rfloor, w'=\lfloor w \rfloor)$, and $x^*_{j2}$, $x^*_{j1}$ for $j\in \mathcal{J}$ we need to show that 
%By (\ref{B10a}) we have
\begin{equation}
\sum\limits_{h\in \mathcal{G}_{2}}b_{jh}+a_{j1} -x^*_{j1} - \Delta (%
\mathcal{G}_{1})+\lfloor r  \rfloor \leq  \sum\limits_{h\in \mathcal{G}_{2}}t_{jh}   \label{B10a12}
\end{equation}%
for each $j\in \mathcal{J}$. By the definition (\ref{B10aa3}) this holds for $j\in B_1$.
% and $x_{j1}> \epsilon$. 
% By (\ref{B10aa2}) this also holds for $j\in B_1$ and $x_{j1}\leq \epsilon$ since then $\lfloor r\rfloor = \lfloor r- x_{j1} \rfloor$.
For $j\in \mathcal{J}\setminus B_1$ we have $x_{j1}$ integral and  thus (\ref{B10a12}) holds since (\ref{AB10af}) holds. Thus (\ref{r1c}) is met for the extended solution  $((\mathbf{t},\mathbf{z}), r'=\lfloor r \rfloor, w'=\lfloor w \rfloor)$, and $x^*_{j2}$, $x^*_{j1}$ for $j\in \mathcal{J}$. 
%which implies $x^*_{j1} - \lfloor r \rfloor + \lfloor r - x_{j1} \rfloor \leq 0$ for $j\in B_1$. 
%Thus (\ref{B11}) is met f the extended solution. 
Moreover $a_{j1}\geq x^*_{j1}\geq 0$ for each $j$ and thus (\ref{y5}) holds for $\ell=1$ in  this extended solution. It suffices to prove this for $j\in B_1$.
 
Then, since $\lfloor r\rfloor \geq \lfloor r \rfloor - \lfloor x_{j1} \rfloor$,  $x^*_{j1}\geq 0$ by (\ref{B10aa3}) and the right hand side inequality of (\ref{AB10affz}). 
%By the RHS of (\ref{AB10affw}) $x^*_{j1}\geq 0$ for $j\in B_1$ and $x_{j1}<\epsilon$.  
Moreover, $a_{j1}\geq \lceil x_{j1}\rceil$. Thus by the left hand side inequality  of  (\ref{AB10affz})

\begin{equation*}
\sum\limits_{h\in \mathcal{G}_{2}}b_{jh}- \Delta (%
\mathcal{G}_{1})+\lfloor r \rfloor \leq  \sum\limits_{h\in \mathcal{G}_{2}}t_{jh}  \label{B10az}
\end{equation*}%
%For fractional $r-x_{j1}$, we have by (\ref{B10aa2})
and by (\ref{B10aa3})
\begin{equation*}
x^*_{j1}=\sum\limits_{h\in \mathcal{G}_{2}}b_{jh}- \Delta (%
\mathcal{G}_{1})+\lfloor r \rfloor -  \sum\limits_{h\in \mathcal{G}_{2}}t_{jh} +a_{j1}
\leq a_{j1}. \label{B10azz}
\end{equation*}
%For integral $r-x_{j1}$, $x^*_{j1}=0$ by (\ref{B10aff}). 
Therefore (\ref{y5}) holds for $\ell=1$ for $j\in B_1$.
For $j\in \mathcal{J}\setminus B_1$ the  (\ref{y5}) for $\ell=1$ in  the extended solution  $((\mathbf{t},\mathbf{z}), r'=\lfloor r \rfloor, w'=\lfloor w \rfloor)$, and $x^*_{j2}$, $x^*_{j1}$  follows from (\ref{y5}) for $\ell=1$ in the solution $(\mathbf{y},\mathbf{x},r,w)$.

By definition of the extended solution $((\mathbf{t},\mathbf{z}), r'=\lfloor r \rfloor, w'=\lfloor w \rfloor)$, and $x^*_{j2}$, $x^*_{j1}$ for $j\in \mathcal{J}$, and since by Theorem \ref{T4} there are no crossing jobs we have
\begin{equation} \label{f}
x^*_{j1}+x^*_{j2}\leq \lfloor r \rfloor
\end{equation}
for $j\in \mathcal{J}\setminus B_1$.  We now need to show this inequality for $j\in B_1$. For these jobs by the left hand side inequality of  (\ref{AB10affz}), and by (\ref{B10aa3})
% we have $x^*_{j1}=0$, by Theorem \ref{T4} we have integral $x_{j2}=x^*_{j2}$, and by (\ref{B6}) in the solution $(\mathbf{y},\mathbf{x},r,w)$ we have $x_{j2}\leq r$. Therefore (\ref{f}) holds for $j \in B_1$ and $x_{j1}\leq \epsilon$. For $j \in B_1$ and $x_{j1}>\epsilon$, by (\ref{B10aff}) and (\ref{B10aa2}) we get  
we get  $x^*_{j1} - \lfloor r \rfloor + \lfloor r \rfloor - \lceil x_{j1} \rceil \leq 0$.
Thus $x^*_{j1}\leq \lceil x_{j1} \rceil$ for each job $j\in B_1$. This unfortunately does not guarantee (\ref{f}) for  $j\in B_1$. However, we either have $\lceil x_{j1} \rceil + x_{j2}\leq \lfloor r \rfloor$ for each $j\in B_1$, in which case (\ref{f}) holds for  $j\in B_1$, or $\lceil x_{k1} \rceil + x_{k2}> \lfloor r \rfloor$ for some $k\in B_1$. The latter implies $\sum_{j\in B_1}\varepsilon_j=\epsilon$, which by Lemma \ref{lowerb10}, implies
\begin{equation*}  
 \sum\limits_{h\in \mathcal{G}_{2}}t_{jh}= \sum\limits_{h\in \mathcal{G}_{2}}b_{jh}+a_{j1}- \Delta (%
\mathcal{G}_{1})+ \lfloor r \rfloor - \lfloor x_{j1} \rfloor \text{ \ \ \ } \text{\ }j\in B_1 \label{CC1}
\end{equation*}
in the optimal solution $\mathbf{t}$ and $\mathbf{z}$ to $\cal{F}$. Thus by definition (\ref{B10aa3}), $x^*_{j1}=\lfloor x_{j1} \rfloor$ for $j\in B_1$. Since by Theorem \ref{T4} there are no crossing jobs the (\ref{f}) is satisfied. 
Hence it remain to prove that  if $\lceil x_{k1} \rceil + x_{k2}> \lfloor r \rfloor$ for some $k\in B_1$, then $\sum_{j\in B_1}\varepsilon_j=\epsilon$.
For contradiction assume $\lceil x_{k1} \rceil + x_{k2}> \lfloor r \rfloor$ for some $k\in B_1$ and $\sum_{j\in B_1}\varepsilon_j>\epsilon$.
If $x_{j1}x_{j2}=0$ for each $j\in \mathcal{J}$, then $x_{k2}=0$. Thus $\lceil x_{k1} \rceil > \lfloor r \rfloor$ which implies $\sum_{j\in B_1} \varepsilon_j=\epsilon$ and gives contradiction.
Otherwise, if $x_{i1}x_{i2}>0$ for some $i\in \mathcal{J}$, then by Theorem \ref{product} we have $B_1=\{i\}$ or $B_2=\{i\}$. If $B_1=\{i\}$, then $\sum_{j\in B_1} \varepsilon_j=\epsilon$ which gives contradiction.
Hence $B_2=\{i\}$ and $x_{j2}=0$ for each $j\in B_1$. Since by Theorem \ref{T4} there are no crossing jobs and $x_{i1}$ is integral and positive. Thus $x_{i1}\geq 1$, and $i\neq k$. By (\ref{bm1}) $\sum_j x_{j1}=\sum_{j\neq i} x_{j1} + x_{i1}=r$. Hence
$\sum_{j\neq i} x_{j1} \leq r - 1$ which gives $x_{k1} \leq r - 1$. Since $x_{k2}=0$, we get $x_{k1}+1+x_{k2}\leq r$. Thus $\lceil x_{k1} \rceil + x_{k2}\leq \lfloor r \rfloor$ which again gives contradiction. This proves that
if $\lceil x_{k1} \rceil + x_{k2}> \lfloor r \rfloor$ for some $k\in B_1$, then $\sum_{j\in B_1}\varepsilon_j=\epsilon$ as required.
Hence (\ref{y3}) holds for the extended solution  $((\mathbf{t},\mathbf{z}), r'=\lfloor r \rfloor, w'=\lfloor w \rfloor)$, and $x^*_{j2}$, $x^*_{j1}$.

Finally we need to prove that (\ref{bm1}) holds for an extended solution. 
By (\ref{B10aa3}) and (\ref{bound})
\begin{equation} \label{r1}
\sum\limits_{j}x^*_{j1}\leq \lfloor r \rfloor
\end{equation}
for the extended solution   $( (\mathbf{t},\mathbf{z},  \lfloor r \rfloor, \lfloor w \rfloor))$, and $x^*_{j2}$, $x^*_{j1}$ for $j\in \mathcal{J}$. This solution satisfies all constraints (\ref{y1})-(\ref{y4}) and  (\ref{bm2})-(\ref{r1c})  of $\ell p$. To complete the proof it suffices to modify the extension $x^*_{j1}$ for $j\in \mathcal{J}$ in order to ensure the equality in (\ref{r1}) to satisfy (\ref{bm1}), and to keep other constraint (\ref{y1})-(\ref{y4}) and  (\ref{bm2})-(\ref{r1c})  of $\ell p$ satisfied. 
%If $B_1=\{j\}$, then $x^*_{j1}<\lfloor x_{j1}\rfloor$ for the sharp inequality in (\ref{r1}). Then set $x^*_{j1}:=\lfloor x_{j1} \rfloor$ to satisfy all constraints (\ref{B2})-(\ref{B11}) in $LP$. 

 If $\sum\limits_{j}x^*_{j1}< \lfloor r \rfloor$, then take a  $j\in B_1$ with a positive $d_j=\min \{\lceil x_{j1} \rceil - x^*_{j1}, \lfloor r \rfloor - x^*_{j1}-x_{j2}\}$. Recall that by Theorem \ref{T4}, $x_{j2}$ is integral for each $j\in B_1$. Such $j$ exists. To prove this existence define $X=\{j\in B_1:\lceil x_{j1} \rceil=x^*_{j1}\}$ and $Y=\{j\in B_1:x^*_{j1}= \lfloor x_{j1} \rfloor\}$.  By definition (\ref{B10aa3}) and (\ref{AB10affz})
we have $B_1=X\cup Y$, and since
\begin{equation} \label{jeden}
\sum\limits_{j}x^*_{j1}< \lfloor r \rfloor < \sum\limits_{j}\lceil x_{j1} \rceil
\end{equation}
we have $Y\neq \emptyset$.
Suppose for a contradiction that for each job $j\in Y$ we have $\lfloor r \rfloor=x^*_{j1}+x_{j2}$. 
%Hence $x^*_{j1}=\lfloor x_{j1} \rfloor$ for each $j\in Y$. 
Thus we have
\begin{equation*} \label{x1}
\sum\limits_{j}x^*_{j1}= \sum\limits_{j\in \mathcal{J}\setminus B_1}x_{j1}+ \sum\limits_{j\in X}\lceil x_{j1}\rceil +\sum\limits_{j\in Y}\lfloor x_{j1} \rfloor< \lfloor r \rfloor.
\end{equation*}
Since for each job $j\in Y$ we have $\lfloor r \rfloor=\lfloor x_{j1}\rfloor +x_{j2}$, we obtain
\begin{equation*}
\sum\limits_{j\in \mathcal{J}\setminus B_1}x_{j1} + \sum\limits_{j\in X}\lceil x_{j1}\rceil + |Y|\lfloor  r \rfloor - \sum\limits_{j\in Y} x_{j2} < \lfloor r \rfloor,
\end{equation*}
and by (\ref{jeden}) the set $Y$ is not empty. Since $\sum\limits_{j\in Y} x_{j2}\leq \lfloor r \rfloor$ by (\ref{bm2}) we get
\begin{equation*}
\sum\limits_{j\in \mathcal{J}\setminus B_1}x_{j1} +\sum\limits_{j\in X}\lceil x_{j1}\rceil + |Y|\lfloor  r \rfloor < 2\lfloor r \rfloor,
\end{equation*}
and thus $|Y|\leq 1$, and since $Y$ is not empty we have $|Y|=1$. However
\begin{equation*}
\lfloor r \rfloor = \lfloor \sum\limits_{j}x_{j1} \rfloor= \sum\limits_{j}\lfloor x_{j1} \rfloor + \lfloor \sum\limits_{j\in B_1}\epsilon_{j} \rfloor,
\end{equation*}
where
\begin{equation*}
 \lfloor \sum\limits_{j\in B_1}\varepsilon_{j} \rfloor \leq |B_1|-1.
\end{equation*}
Thus
\begin{equation*}
\lfloor r \rfloor =\lfloor \sum\limits_{j}x_{j1} \rfloor\leq \sum\limits_{j\in \mathcal{J}\setminus B_1}x_{j1} + \sum\limits_{j\in B_1}\lfloor x_{j1} \rfloor+ |B_1|-1=\sum\limits_{j\in \mathcal{J}\setminus B_1}x_{j1}+\sum\limits_{j\in X}\lceil x_{j1} \rceil +\sum\limits_{j\in Y}\lfloor x_{j1} \rfloor
\end{equation*}
since $|Y|=1$
%\begin{equation}
%\lfloor r \rfloor =\lfloor \sum\limits_{j}x_{j1} \rfloor\leq \sum\limits_{j\in \mathcal{J}\setminus B_1}x_{j1} + \sum\limits_{j\in X}\lceil x_{j1} \rceil +\sum\limits_{j\in Y}\lfloor x_{j1} \rfloor
%\end{equation}
which contradicts (\ref{x1}) and proves that $j\in Y$ with $d_j=1$ exists. Set $d:=\min \{\min_{j, d_j>0}\{d_j\}, \lfloor r \rfloor - \sum\limits_{j}x^*_{j1}\}=1$. Then,
set $x^*_{j1}:=x^*_{j1} +1$ for some  $j\in Y$ with  $d_j=1$. We have $x^*_{j1}\leq \min \{\lceil x_{j1} \rceil, \lfloor r \rfloor - x_{j2} \}$ and $\sum\limits_{j}x^*_{j1}\leq \lfloor r \rfloor$ for the new extended solution, which ensures that  all constraints (\ref{y1})-(\ref{y4}) and  (\ref{bm2})-(\ref{r1c})  of $\ell p$ are met in the new extended solution. Since $d=1$ the $\sum\limits_{j}x^*_{j1}$ gets closer to but does not exceed $\lfloor r \rfloor$. Therefore by (\ref{jeden}) we finally reach an extended solution $\mathbf{t}$, $\mathbf{z}$, and $x^*_{j2}$, $x^*_{j1}$ for $j\in \mathcal{J}$ that meets all (\ref{y1})-(\ref{r1c})  of $\ell p$. The solution is integral with $w'=\lfloor w \rfloor$, and  $r'=\lfloor r^* \rfloor$ which proves the lemma.
\end{proof}

\section{The Projection} \label{S11}

Consider the following system $S$ that defines the set of feasible solutions to the $LP$-relaxation of $\cal{ILP}$,

%\begin{equation}w -r =\left \lceil w^{ \ast } -r^{ \ast }\right \rceil  \label{XB20}
%\end{equation}
\begin{equation}\sum _{j}b_{j h} -( \Delta (\mathcal{G}_{2}) -r) \leq \sum _{j}y_{j h} \leq w\;\;\text{\ \ \ }\;\; \;\;\text{}\;\;h \in \mathcal{G}_{1} \label{XXy1}
\end{equation}
\begin{equation}\sum _{j}b_{j h} -( \Delta (\mathcal{G}_{1}) -r) \leq \sum _{j}y_{j h} \leq w\;\;\text{\ \ }\;\; \;\;\text{\ }\;\;h \in \mathcal{G}_{2} \label{XXy2}
\end{equation}
\begin{equation}\sum _{h}y_{j h} \leq w\;\;\text{\ }\;\; \;\;\text{\ }\;\;j\in \mathcal{J} \label{XXr1d}
\end{equation}
\begin{equation}0 \leq y_{j h} \leq b_{j h} \;\;\text{\ }\;\; \;\;\text{}\;\;h \in \mathcal{M}\;\; \;\;\text{\ }\;\;j \in \mathcal{J}\label{XXy4}
\end{equation}
\begin{equation}\sum _{j}x_{j 1} = r \label{XXbm1}
\end{equation}
\begin{equation}\sum _{j}x_{j 2} = r \label{XXbm2}
\end{equation}
\begin{equation}x_{j 1} +x_{j 2} \leq r\;\;\text{\ }\;\; \;\;\text{\ \ }\;\;j\in \mathcal{J} \label{XXy3}
\end{equation}
\begin{equation}0 \leq x_{j \ell } \leq a_{j \ell } \;\; \;\;\text{\ }\;\;j \in \mathcal{J} \;\; \;\;\text{\ }\;\;\ell=1,2\label{XXy5}
\end{equation}
\begin{equation}\sum _{h \in \mathcal{G}_{1}}(b_{j h} -y_{j h}) +a_{j 2} -x_{j 2} \leq  \Delta (\mathcal{G}_{2}) -r\;\;\text{\ \ }\;\; \;\;\text{\ \ }\;\;j\in \mathcal{J} \label{XXr1a}
\end{equation}
\begin{equation}\sum _{h \in \mathcal{G}_{2}}(b_{j h} -y_{j h}) +a_{j 1} -x_{j 1} \leq  \Delta (\mathcal{G}_{1}) -r\;\;\text{\ \ }\;\; \;\;\text{\ \ }\;\;j\in \mathcal{J} \label{XXr1c}
\end{equation}

Now consider the system $S_r$ obtained from $S$ by dropping (\ref{XXbm1}) and (\ref{XXbm2})  and adding the constraints (\ref{YY}), (\ref{YYY}), and (\ref{Y4}). 
We use $\alpha_{j1}=\sum_{h\in \mathcal{G}_2}(b_{jh}-y_{jh})+a_{j1}-\Delta(\mathcal{G}_1)$
and $\alpha_{j2}=\sum_{h\in \mathcal{G}_1}(b_{jh}-y_{jh})+a_{j2}-\Delta(\mathcal{G}_2)$ for $j\in \mathcal{J}$ for convenience. 
%\begin{equation}w -r =\left \lceil w^{ \ast } -r^{ \ast }\right \rceil  \label{YB20}
%\end{equation}
\begin{equation}\sum _{j}b_{j h} -( \Delta (\mathcal{G}_{2}) -r) \leq \sum _{j}y_{j h} \leq w\;\;\text{\ \ \ }\;\; \;\;\text{}\;\;h \in \mathcal{G}_{1} \label{Yy1}
\end{equation}
\begin{equation}\sum _{j}b_{j h} -( \Delta (\mathcal{G}_{1}) -r) \leq \sum _{j}y_{j h} \leq w\;\;\text{\ \ }\;\; \;\;\text{\ }\;\;h \in \mathcal{G}_{2} \label{Yy2}
\end{equation}
\begin{equation}\sum _{h}y_{j h} \leq w\;\;\text{\ }\;\; \;\;\text{\ }\;\;j \in \mathcal{J} \label{Yr1d}
\end{equation}
\begin{equation}0 \leq y_{j h} \leq b_{j h} \;\;\text{\ }\;\; \;\;\text{}\;\;h \in \mathcal{M}\;\; \;\;\text{\ }\;\;j \in \mathcal{J}\label{Yy4}
\end{equation}
%\begin{equation}\sum _{j}x_{j 1} =r \label{bm1}
%\end{equation}
%\begin{equation}\sum _{j}x_{j 2} =r \label{bm2}
%\end{equation}
\begin{equation}x_{j 1} +x_{j 2} \leq r\;\;\text{\ }\;\; \;\;\text{\ \ }\;\;j \in \mathcal{J} \label{Yy3}
\end{equation}
\begin{equation}0 \leq x_{j \ell } \leq a_{j \ell }\;\; \;\;\text{\ }\;\;j \in \mathcal{J} \;\; \;\;\text{\ }\;\;\ell=1,2 \label{Yy5}
\end{equation}
\begin{equation}\sum _{h \in \mathcal{G}_{1}}(b_{j h} -y_{j h}) +a_{j 2} -x_{j 2} \leq  \Delta (\mathcal{G}_{2}) -r\;\;\text{\ \ }\;\;  \;\;\text{\ \ }\;\;j \in \mathcal{J}\label{Yr1a}
\end{equation}
\begin{equation}\sum _{h \in \mathcal{G}_{2}}(b_{j h} -y_{j h}) +a_{j 1} -x_{j 1} \leq  \Delta (\mathcal{G}_{1}) -r\;\;\text{\ \ }\;\; \;\;\text{\ \ }\;\;j \in \mathcal{J}\label{Yr1c}
\end{equation}
\begin{equation} \label{YY}
\sum_j \alpha_{j1} + (n-1)r\leq 0 \;\;\text{\ \ }\;\; \;\;\text{\ \ }\;\;j \in \mathcal{J}
\end{equation}
\begin{equation} \label{YYY}
\sum_j \alpha_{j2} + (n-1)r\leq 0 \;\;\text{\ \ }\;\; \;\;\text{\ \ }\;\;j \in \mathcal{J}
\end{equation}
\begin{equation} \label{Y4}
0\leq r \leq \min\{\Delta(\mathcal{G}_1),\Delta(\mathcal{G}_2)\}
\end{equation}

Finally consider the following projection on $\bold{y,} w, r$.

\begin{lemma} \label{LL3}
Let $\cal{P}$ be the polyhedron that consists of feasible solutions to $S_r$. Then the projection of $\cal{P}$ on $\bold{y,} w, r$, denoted by $\cal{Q}$,  is the set of solutions to the following system of inequalities $Q$:

%\begin{equation}
%w-r=\lceil w^* - r^* \rceil
%\end{equation}

\begin{equation}\sum _{j}b_{j h} -( \Delta (\mathcal{G}_{2}) -r) \leq \sum _{j}y_{j h} \leq w\;\;\text{\ \ }\;\;  \;\;\text{\ }\;\;h \in \mathcal{G}_{1} \label{Pr00}
\end{equation}
\begin{equation}\sum _{j}b_{j h} -( \Delta (\mathcal{G}_{1}) -r) \leq \sum _{j}y_{j h} \leq w\;\;\text{\ \ \ }\;\; \;\;\text{}\;\;h \in \mathcal{G}_{2} \label{Pr01}
\end{equation}
\begin{equation}\sum _{h}y_{j h} \leq w\;\;\text{\ }\;\; \;\;\text{\ }\;\;j \in \mathcal{J} \label{Pr02}
\end{equation}
\begin{equation}0 \leq y_{j h} \leq b_{j h} \;\;\text{\ }\;\; \;\;\text{}\;\;h \in \mathcal{M}\;\; \;\;\text{\ }\;\;j \in \mathcal{J}\label{Pr03}
\end{equation}

\begin{equation}\label{Pr1}
\sum_{h\in \mathcal{G}_2}(b_{jh}-y_{jh}) + a_{j1} - \Delta(\mathcal{G}_1)\leq 0  \; \; \; \;  \;j\in \mathcal{J}
\end{equation}

\begin{equation}\label{Pr2}
\sum_{h\in \mathcal{G}_1}(b_{jh}-y_{jh}) + a_{j2} - \Delta(\mathcal{G}_2)\leq 0 \; \; \; \;  \;j\in \mathcal{J}
\end{equation}

\begin{equation}\label{Pr3}
\sum_{h\in \mathcal{G}_2}(b_{jh}-y_{jh}) + r - \Delta(\mathcal{G}_1)\leq 0  \; \; \; \;  \;j\in \mathcal{J}
\end{equation}

\begin{equation}\label{Pr4}
\sum_{h\in \mathcal{G}_1}(b_{jh}-y_{jh}) + r - \Delta(\mathcal{G}_2)\leq 0  \; \; \; \; \;j\in \mathcal{J}
\end{equation}

\begin{equation}\label{Pr7}
\sum_h (b_{jh}-y_{jh}) + a_{j1} + a_{j2} - \Delta (\mathcal{G}_1) -\Delta (\mathcal{G}_2) + r  \leq 0 \; \; \; \; \;j\in \mathcal{J}
\end{equation}

\begin{equation}\label{PR10}
\sum_j \alpha_{j1} + (n-1)r\leq 0 \;\;\text{\ \ }\;\; \;\;\text{\ \ }\;\;j \in \mathcal{J}
\end{equation}

\begin{equation}\label{PR11}
\sum_j \alpha_{j2} + (n-1)r\leq 0 \;\;\text{\ \ }\;\; \;\;\text{\ \ }\;\;j \in \mathcal{J}
\end{equation}

\begin{equation}\label{Pr10}
0\leq r\leq \min\{\Delta(\mathcal{G}_1),\Delta(\mathcal{G}_2)\}
\end{equation}

\end{lemma}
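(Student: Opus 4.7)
The plan is to establish the lemma by Fourier--Motzkin elimination of the $2n$ variables $x_{j1},x_{j2}$, $j\in\mathcal{J}$, from the system $S_r$. The first observation is that the $x$-variables decouple across jobs in $S_r$: the only constraints involving $x_{j\ell}$ are (\ref{Yy3}), (\ref{Yy5}), (\ref{Yr1a}), and (\ref{Yr1c}), each carrying the single index $j$; the constraints (\ref{YY}), (\ref{YYY}) and (\ref{Y4}) depend only on the $\alpha_{j\ell}$ and $r$, which are $x$-free, so they carry over to $Q$ unchanged as (\ref{PR10}), (\ref{PR11}), and (\ref{Pr10}). Hence the elimination can be performed one job at a time.

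Fix $j\in\mathcal{J}$. I would first collect the bounds on $x_{j1}$ implied by $S_r$: the lower bounds are $x_{j1}\geq 0$ and $x_{j1}\geq\alpha_{j1}+r$ (from (\ref{Yy5}) and (\ref{Yr1c})), and the upper bounds are $x_{j1}\leq a_{j1}$ and $x_{j1}\leq r-x_{j2}$ (from (\ref{Yy5}) and (\ref{Yy3})). Eliminating $x_{j1}$ by pairing each lower with each upper bound produces the $x$-free inequality $\alpha_{j1}+r\leq a_{j1}$, which after substituting the definition of $\alpha_{j1}$ is precisely (\ref{Pr3}); two new upper bounds on $x_{j2}$, namely $x_{j2}\leq -\alpha_{j1}$ and $x_{j2}\leq r$; and the trivial $0\leq a_{j1}$. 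Next I would list the full bounds on $x_{j2}$: lower bounds $x_{j2}\geq 0$ and $x_{j2}\geq\alpha_{j2}+r$ (from (\ref{Yy5}) and (\ref{Yr1a})), together with upper bounds $x_{j2}\leq a_{j2}$, $x_{j2}\leq r$ and $x_{j2}\leq -\alpha_{j1}$. Eliminating $x_{j2}$ yields $\alpha_{j1}\leq 0$ (i.e., (\ref{Pr1})), $\alpha_{j2}\leq 0$ (i.e., (\ref{Pr2})), $\alpha_{j2}+r\leq a_{j2}$ (i.e., (\ref{Pr4})) and $\alpha_{j1}+\alpha_{j2}+r\leq 0$ (i.e., (\ref{Pr7})), together with the trivial $0\leq a_{j2}$ and the already-known $0\leq r$ from (\ref{Y4}).

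Because Fourier--Motzkin elimination gives an exact description of the projection, $\cal{Q}$ is cut out by (i) the $x$-free constraints of $S_r$ retained verbatim, namely (\ref{Yy1})--(\ref{Yy4}), (\ref{YY}), (\ref{YYY}), and (\ref{Y4}), now labelled (\ref{Pr00})--(\ref{Pr03}), (\ref{PR10}), (\ref{PR11}), and (\ref{Pr10}), together with (ii) the per-job inequalities (\ref{Pr1})--(\ref{Pr4}) and (\ref{Pr7}) produced above. This is exactly the system $Q$.

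I do not anticipate a substantive obstacle; the work is essentially careful bookkeeping of the Fourier--Motzkin pairings. An alternative that sidesteps the explicit elimination is to verify the two inclusions directly: the forward inclusion $\cal{Q}\subseteq\{Q\}$ is immediate since each inequality of $Q$ is either an $x$-free constraint of $S_r$ or a nonnegative combination of $S_r$-constraints, while the reverse reduces to exhibiting, for any $(\mathbf{y},w,r)$ satisfying $Q$, a lift to $\cal{P}$ via $x_{j\ell}:=\max\{0,\alpha_{j\ell}+r\}$, which satisfies the box constraints (\ref{Yy5}) by (\ref{Pr1})--(\ref{Pr4}) and satisfies (\ref{Yy3}) by (\ref{Pr7}).
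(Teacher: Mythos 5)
Your proposal is correct and takes the same route as the paper, whose proof is simply the one-line appeal to Fourier--Motzkin elimination of the $x_{j\ell}$ from $S_r$; you merely carry out the per-job pairings explicitly, and they do yield exactly (\ref{Pr1})--(\ref{Pr4}), (\ref{Pr7}) plus the retained $x$-free constraints. (Only a cosmetic remark: in your alternative lifting sketch, $x_{j\ell}:=\max\{0,\alpha_{j\ell}+r\}$ needs (\ref{Pr1}), (\ref{Pr2}) and $0\leq r$ in addition to (\ref{Pr7}) to verify (\ref{Yy3}) in the cases where one of the maxima is $0$.)
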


\begin{proof}
By Fourier-Motzkin elimination, see \cite{Sch}, of variables $x_{j\ell}$ from the system $S_r$.
\end{proof}

We summarize the results of this section in the following theorem and lemma.

\begin{theorem} \label{T11}
Let   $(\bold{y}, r, w)$ be feasible for $Q$. There exists  $\bold{x}$  such that the solution  $(\bold{y}$, $\bold{x}, w, r)$ is feasible for $S$.
\end{theorem}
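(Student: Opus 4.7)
My plan is to leverage Lemma \ref{LL3} to obtain a preliminary $\mathbf{x}^{0}$ feasible for $S_r$, and then to upgrade it to an $\mathbf{x}$ feasible for $S$ via a two-commodity transportation feasibility argument.

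First, by Lemma \ref{LL3} the polyhedron $\mathcal{Q}$ is exactly the Fourier--Motzkin projection of $\mathcal{P}$ onto $(\mathbf{y}, w, r)$, so from $(\mathbf{y}, r, w)\in\mathcal{Q}$ I obtain $\mathbf{x}^{0}$ with $(\mathbf{y}, \mathbf{x}^{0}, w, r)\in\mathcal{P}$; equivalently, $\mathbf{x}^{0}$ satisfies $S_r$. Observe that $S_r$ differs from $S$ only in having replaced the equalities $\sum_j x_{j1}=r$ and $\sum_j x_{j2}=r$ by the aggregate inequalities (\ref{YY}) and (\ref{YYY}), so the task reduces to producing some $\mathbf{x}$ that satisfies the same per-job constraints as $\mathbf{x}^{0}$ and additionally meets these two equalities.

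Second, I recast the existence of such an $\mathbf{x}$ as a bipartite flow problem. For each $j$ and $\ell\in\{1,2\}$, constraints (\ref{Yy5}), (\ref{Yr1a}), (\ref{Yr1c}) confine $x_{j\ell}$ to $[L_{j\ell}, a_{j\ell}]$, where $L_{j\ell}:=\max\{0,\alpha_{j\ell}+r\}$, and (\ref{Yy3}) couples the two coordinates via $x_{j1}+x_{j2}\leq r$. Demanding $\sum_j x_{j\ell}=r$ for $\ell=1,2$ then amounts to feasibility of a two-commodity flow with per-commodity supply $r$, per-edge bounds $[L_{j\ell}, a_{j\ell}]$ on the commodity-to-job edges, and a job-side capacity $r$; by the integral circulation theorem already invoked in Section \ref{S10}, this reduces to a short list of cut conditions.

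Third, I verify those cut conditions from $Q$. The per-job pairwise compatibility $L_{j1}+L_{j2}\leq r$ is immediate from (\ref{Pr7}), and the upper-capacity condition $\sum_j a_{j\ell}=\Delta(\mathcal{G}_\ell)\geq r$ is (\ref{Pr10}). The main obstacle is the total lower-capacity condition $\sum_j L_{j\ell}\leq r$ for $\ell=1,2$: (\ref{PR10}) only asserts $\sum_j(\alpha_{j1}+r)\leq r$, whereas $\sum_j L_{j1}=\sum_j(\alpha_{j1}+r)^{+}$ could a priori be strictly larger when some of the terms $\alpha_{j1}+r$ are negative. I plan to close this gap by combining (\ref{PR10})--(\ref{PR11}) with the per-machine $y$-bounds (\ref{Pr00})--(\ref{Pr03}) and the per-job relations (\ref{Pr1})--(\ref{Pr7}); heuristically, a large positive excess $\alpha_{j1}+r$ at one job forces large $\mathcal{G}_2$-slacks at other jobs, and these slacks are globally capped by summing (\ref{Pr01}) over $h\in\mathcal{G}_2$, which will let me upgrade the signed inequality (\ref{PR10}) to the positive-part inequality $\sum_j(\alpha_{j1}+r)^{+}\leq r$. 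Once the cut conditions are established, the circulation theorem yields a feasible integral (in particular, well-defined) flow whose edge values supply the required $\mathbf{x}$, so that $(\mathbf{y}, \mathbf{x}, w, r)$ is feasible for $S$.
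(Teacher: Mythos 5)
Your first two steps are exactly where the paper also starts: Lemma \ref{LL3} gives a lift of $(\mathbf{y},r,w)$ into $S_r$, and the whole problem is then to replace it by a lift meeting the two equalities $\sum_j x_{j\ell}=r$. Your reformulation as a feasibility problem with bounds $x_{j\ell}\in[L_{j\ell},a_{j\ell}]$, $L_{j\ell}=\max\{0,\alpha_{j\ell}+r\}$, and job capacities $r$ is also sound (though it is really a single-commodity circulation -- a source split into two nodes of fixed throughput $r$, arcs $[L_{j\ell},a_{j\ell}]$ into the jobs, and job-to-sink capacity $r$ -- not a two-commodity flow; genuine two-commodity problems would give you neither Hoffman-type cut conditions nor integrality). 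The difficulty is that at the decisive point your text is a plan, not a proof: you yourself note that (\ref{PR10}) only bounds the signed sum $\sum_j(\alpha_{j1}+r)$, and you merely announce a ``heuristic'' intention to upgrade it to the positive-part inequality $\sum_j L_{j1}\le r$. That upgrade is the entire content of the theorem, and it is left undone. The paper, by contrast, never goes through cut conditions: it takes a lift $\mathbf{x}$ minimizing $|r-\sum_j x_{j1}|+|r-\sum_j x_{j2}|$ over the fiber and drives this distance to zero by local exchanges, invoking (\ref{PR10})--(\ref{PR11}), (\ref{Pr7}) and (\ref{Pr10}) to exclude a stuck configuration.

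Moreover, the missing step cannot be supplied from the inequalities of $Q$ alone, so no amount of combining (\ref{Pr00})--(\ref{Pr10}) will close it. Take $\mathcal{G}_1=\{M_1\}$, $\mathcal{G}_2=\{M_2,M_3\}$, jobs $1,2,3$, $a_{11}=a_{21}=1$, $a_{31}=0$, $a_{12}=a_{22}=0$, $a_{32}=1$ (so $\Delta(\mathcal{G}_1)=2$, $\Delta(\mathcal{G}_2)=1$), $b_{1,M_2}=b_{2,M_3}=1$, all other $b_{jh}=0$, $\mathbf{y}=\mathbf{0}$, $r=1$, and $w$ large. Every constraint of $Q$ holds (in particular $\alpha_{11}=\alpha_{21}=0$, $\alpha_{31}=-2$, so (\ref{PR10}) holds with equality), yet $L_{11}=L_{21}=1$, so $\sum_j L_{j1}=2>r$; correspondingly (\ref{XXr1c}) and (\ref{XXy5}) force $x_{11}=x_{21}=1$ and (\ref{XXbm1}) cannot be satisfied. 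So your ``main obstacle'' is not a routine verification but a genuine one, and it cannot be overcome while using only membership in $Q$: a correct completion has to bring in additional properties of the specific points $(\mathbf{y},r,w)$ to which the theorem is applied in Sections \ref{S10} and \ref{S12} (saturation and the structure of the truncated solutions), or a strengthened projection system. The same configuration is also precisely the delicate boundary case in the paper's exchange argument (the decrease of $x_{31}$ is blocked by the bound $x_{31}\ge 0$ in (\ref{Yy5})), so you should not expect the gap to be closable by a small trick; as it stands, your proposal does not prove the theorem.
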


\begin{proof}
Let $s=(\bold{y}, r, w)$ be a feasible solution for $Q$. By Lemma  \ref{LL3} there exist $\bold{x}=(x_{j\ell})$, where $j\in \mathcal{J}$ and $\ell=1,2$, such that $s=(\bold{y}$, $\bold{x}, w, r)$ is feasible for $S_r$. Let $X$ be the set of all such $\bold{x}$.
Take $\bold{x}\in X$ with minimum distance $d=|r-\sum_j x_{j1}| + |r-\sum_j x_{j2}|$. We show that $d=0$ for $\bold{x}$. Suppose that $r< \sum_j x_{j1}$ or $r< \sum_ j x_{j2}$. 
Let $r< \sum_j x_{j1}$. If there is $k$ such that $\alpha_{k1} + r < x_{k1}$, then set $x_{k1}:=x_{k1}-\lambda$
where $\lambda=\min\{ x_{k1} -(\alpha_k + r ), \sum_j x_{j1} - r\}$. The new solution is in $X$ and reduces $d$ which gives a contradiction.
Thus we have $\alpha_{j1} + r = x_{j1}$ for each $j$. Therefore $\sum_j \alpha_{j1} + nr=\sum_j x_{j1}\leq r$ by the constraint (\ref{PR10}) which contradicts this case assumption.
%Let $P$ be the set of all $j$ such that $x_{j1}>0$. The set is not empty since otherwise $x_j= 0$ for each $j$ thus
%$r<0$ which contradicts (\ref{Pr10}). Let $Q$ be the set of all $j$ in $\mathcal{J}\setminus P$ such that $x_{j2}>0$. If $Q=\emptyset$, then set $r:=r-\lambda$ and $x_{j1}:=x_{j1}-\lambda$ for $j\in P$  where $\lambda=\min_{j\in P}\{x_{j1}\}$. 
%If $Q\neq \emptyset$, then set $r:=r-\lambda$ and $x_{j1}:=x_{j1}-\lambda$ for $j\in P$and $x_{j2}:=x_{j2}-\lambda$ for $j\in Q$  where $\lambda=\min\{\min_{j\in P}\{x_{j1}\}, \min_{j\in Q}\{x_{j2}\}\}$. 
%In either case the new solution is feasible but has smaller $\sum_j (x_{j1}+x_{j2})$ which gives a contradiction. 
The proof for $r< \sum_ j x_{j2}$ is similar.
Therefore we have $r\geq \sum_j x_{j1}$ and $r\geq \sum_ j x_{j2}$ for the $\bold{x}$.  Suppose that $r> \sum_j x_{j1}$ or $r> \sum_j x_{j2}$. If there is $k$ such that $x_{k1}+x_{k2}<r$ and ($x_{k1}<a_{k1}$ or $x_{k2}<a_{k2}$), then set $x_{k1}+\lambda$,
where $\lambda=\min\{r-(x_{k1}+x_{k2}), a_{k1} - x_{k1}, d\}$ provided $x_{k1}<a_{k1}$. Otherwise, if $x_{k1}=a_{k1}$ and $x_{k2}<a_{k2}$, set $x_{k2}+\lambda$,
where $\lambda=\min\{r-(x_{k1}+x_{k2}), a_{k2} - x_{k2}, d\}$. The new solution is  in $X$ but has smaller $d$ which gives a contradiction. Thus we have
$x_{j1}+x_{j2}=r$ or ($x_{j1}=a_{j1}$ and $x_{j2}=a_{j2}$) for each $j$. We have at least one $j$ with $x_{j1}+x_{j2}=r$. Otherwise, $r>\min\{\Delta(\mathcal{G}_1),\Delta(\mathcal{G}_2)\}$ which contradicts 
(\ref{Pr10}).
 On the other hand,
we can have at most one $j$ with $x_{j1}+x_{j2}=r$. Otherwise $\sum_j (x_{j1}+x_{j2})\geq 2r$ and since $r\geq \sum_j x_{j1}$ and $r\geq \sum_ j x_{j2}$ for the $\bold{x}$  we get $r= \sum_j x_{j1}$ and $r= \sum_j x_{j2}$ which contradicts the assumption. Therefore there is exactly one $j$ such that $x_{j1}+x_{j2}=r$, and $x_{k1}=a_{k1}$, and $x_{k2}=a_{k2}$ for  $k\in \mathcal{J}\setminus \{j\}$. Hence $\Delta(\mathcal{G}_1)-a_{j1}+x_{j1}< r$ or $\Delta(\mathcal{G}_2)-a_{j2}+x_{j2}< r$. Since 
$\Delta(\mathcal{G}_1)-a_{j1}+x_{j1}\leq  r$ and $\Delta(\mathcal{G}_2)-a_{j2}+x_{j2}\leq r$, we have $\Delta(\mathcal{G}_1) + \Delta(\mathcal{G}_2)-a_{j2}+x_{j2}-a_{j1}+x_{j1}< 2r$. Hence $\Delta(\mathcal{G}_1) + \Delta(\mathcal{G}_2)-a_{j2}-a_{j1}< r$
since $x_{j1}+x_{j2}=r$. However by (\ref{Pr7}) and (\ref{Pr03}) we have $a_{j1}+a_{j2}+r \leq \Delta(\mathcal{G}_1) + \Delta(\mathcal{G}_2)$ which gives a contradiction.
% If $x_{j1}>0$ and $x_{k2}>0$ for some $k\in \mathcal{J}$, then
%We have $r>0$, Otherwise $\sum_j (x_{j1}+x_{j2})\geq 2r$ which contradicts the case assumption. 
%set $r:= r -\lambda$, $x_{j1}:=x_{j1} - \lambda$,  $x_{l1}:=x_{l1}$ for $k\in \mathcal{J}\setminus\{j\}$, and  $x_{k2}:=x_{k2} - \lambda$,  $x_{l2}:=x_{l2}$ for $l\in \mathcal{J}\setminus\{k\}$, where $\lambda=\min\{ x_{j1}, \min_{j\neq k} \{r-(x_{k1}+x_{k2}), \min_k\{x_{k2}\}\}\}$.
%If $x_{j2}>0$ and $x_{k1}>0$ for some $k\in \mathcal{J}$, then set $r:= r -\lambda$ and $x_{j2}:=x_{j2} - \lambda$, ,  $x_{l2}:=x_{l2}$ for $k\in \mathcal{J}\setminus\{j\}$, and  $x_{k1}:=x_{k1} - \lambda$,  $x_{l1}:=x_{l1}$ for $l\in \mathcal{J}\setminus\{k\}$, where $\lambda=\min\{x_{j2}, \min_{j\neq k} \{r-(x_{k1}+x_{k2})\}, \min_k\{x_{k1}\}\}$.  if $x_{j2}>0$ and $x_{k1}>0$ for some $k$. In either case we get a feasible solution with the same $d$ but
%smaller $r$ which leads to contradiction.  Finally, if $\sum_j x_{j1}=0$, then $x_{k1}+x_{k2}=0 $ for $k\in \mathcal{J}\setminus\{j\}$ and $x_{j2}=r;$  set $r:= 0$ and $x_{j2}:=0$. 
%Similarly, if $\sum_j x_{j2}=0$, then $x_{k1}+x_{k2}=0 $ for $k\in \mathcal{J}\setminus\{j\}$ and $x_{j1}=r;$  set $r:= 0$ and $x_{j1}:=0$. In either case we get a feasible solution with $d=0$ which leads to contradiction.
Thus we have $d=0$ and the solution is feasible for $S$.

%The theorem follows from Lemmas \ref{LL1}, \ref{LL2}, and \ref{LL3}.

\end{proof}

We have the following lemma

\begin{lemma}\label{LL4}
If $(\bold{y}$, $\bold{x}, r, w)$ is feasible for $S$, then $(\bold{y}, r, w)$ is feasible for $Q$.
\end{lemma}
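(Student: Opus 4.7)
The plan is to invoke Lemma \ref{LL3} as a black box. That lemma equates the polyhedron $\cal{Q}$ (the projection onto $(\mathbf{y},r,w)$ of the feasible polyhedron $\cal{P}$ of $S_r$) with the solution set of the system $Q$. Therefore it suffices to show that any $(\mathbf{y}, \mathbf{x}, r, w)$ feasible for $S$ is also feasible for $S_r$; its $(\mathbf{y},r,w)$-projection will then lie in $\cal{Q}$ and hence be a solution of $Q$.

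Comparing the two systems, $S$ and $S_r$ share the constraints (\ref{Yy1})--(\ref{Yr1c}) verbatim with (\ref{XXy1})--(\ref{XXr1c}); $S_r$ merely drops the equalities (\ref{XXbm1}), (\ref{XXbm2}) and appends (\ref{YY}), (\ref{YYY}), and (\ref{Y4}). So the whole task reduces to checking that a solution feasible for $S$ automatically satisfies (\ref{YY}), (\ref{YYY}), and (\ref{Y4}).

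For (\ref{YY}), I would rewrite (\ref{XXr1c}) using the definition of $\alpha_{j1}$ as $\alpha_{j1}\leq x_{j1}-r$, sum over $j\in\mathcal{J}$, and apply (\ref{XXbm1}) to obtain $\sum_j \alpha_{j1}\leq r-nr=-(n-1)r$. The derivation of (\ref{YYY}) is symmetric, using (\ref{XXr1a}) and (\ref{XXbm2}). For (\ref{Y4}), the bound $r\geq 0$ is immediate from $r=\sum_j x_{j1}$ with $x_{j1}\geq 0$, while $r\leq \Delta(\mathcal{G}_\ell)$ follows by summing $x_{j\ell}\leq a_{j\ell}$ over $j$ for each $\ell\in\{1,2\}$.

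I do not expect any genuine obstacle here: the role of the dropped equalities $\sum_j x_{j\ell}=r$ in the passage from $S$ to $S_r$ is absorbed precisely by the three appended inequalities (\ref{YY})--(\ref{Y4}), and the verifications above are one-line algebra. The only direction of the equivalence between $\cal{Q}$ and the solution set of $Q$ that required substantive work, namely the Fourier--Motzkin elimination of the $x_{j\ell}$ variables, has already been handled by Lemma \ref{LL3}.
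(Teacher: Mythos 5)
Your proposal is correct and takes essentially the same route as the paper: check that any solution of $S$ satisfies the three constraints (\ref{YY}), (\ref{YYY}), (\ref{Y4}) appended in $S_r$, and then invoke Lemma \ref{LL3} to pass to $Q$. In fact your pairing—(\ref{XXr1c}) summed with (\ref{XXbm1}) yielding (\ref{YY}), and (\ref{XXr1a}) with (\ref{XXbm2}) yielding (\ref{YYY})—is the accurate one; the paper's text states these two implications with the roles interchanged, which is a harmless slip.
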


\begin{proof}
If $(\bold{y}$, $\bold{x}, r, w)$ is feasible for $S$, then it is also feasible for $S_{r}$. Observe that (\ref{XXbm1}), (\ref{XXbm2}), and  (\ref{XXy5}) in $S$ imply (\ref{Y4}) in $S_r$, 
the (\ref{XXr1a}) in $S$ implies (\ref{YY}) in $S_r$, and the (\ref{XXr1c}) in $S$ implies (\ref{YYY}) in $S_r$. Finally, by Lemma \ref{LL3} the $(\bold{y}, r, w)$ is feasible for $Q$.

\end{proof}

The system $Q$ is  a network flow model with lower and upper bounds on the arcs for fixed $w$ and $r$.

\section{ Integral Optimal Solution to $\ell p$ for $\sum_{j\in B_i}\varepsilon_j>\epsilon$ for $i=1,2$} \label{S12}

%\subsection{A truncated solution feasible to the projection}

Consider $\mathbf{s}$ with $\sum_{j\in B_i}\varepsilon_j>\epsilon$ for $i=1,2$, by Lemma \ref{overlap} overlap of $B_1$ and of $B_2$ dos not occur simultaneously. Without loss of generality let us assume no overlap of $B_2$.  

Consider the set $Y_j$ of all columns  of type $\binom {%
\ast }{\ast ,j}$ in $d(\mathbf{y},w)$ for $j\in B_2$.
By Lemma \ref{l7},  $l(Y_j)=\beta_j=\lfloor \beta_j \rfloor + \varepsilon_j$.
Take an interval $Y \subseteq \bigcup_{j\in B_2} Y_j $ such that $l(Y)= \epsilon$. Such $Y$ exists since there is no overlap of $B_2$ and $\sum_{j\in B_2}\varepsilon_j>\epsilon$.
%, $l(Y\cap Y_j)\geq \epsilon_j$ for $j\in B_2$.
Let $Y_{jh}$ be the set of columns $I\in Y$ such that $(j,h)\in M_I$, set $\gamma_{jh}:=l(Z_{jh})$. Informally, $\gamma_{jh}$ is the amount of $j$ done on $h$ in the interval $Y$. We define a truncated solution as follows $z^*_{jh}:=y_{jh}-\gamma_{jh}$ for $h\in \mathcal{G}_1$, and  $t^*_{jh}:=y_{jh}-\gamma_{jh}$ for $h\in \mathcal{G}_2$, and $\lfloor r \rfloor$, $\lfloor w \rfloor$.
%By Theorem \ref{tightd} each $j\in B_2$ is $d$-tight
% and no overlap on $B_2$ 
 Thus
\begin{equation*}
%\sum\limits_{h\in \mathcal{G}_{2}}b_{jh}+a_{j1}- \Delta (%
%\mathcal{G}_{1})+\lfloor r - x_{j1}\rfloor \leq  
 \sum\limits_{h\in \mathcal{G}_{1}}\gamma_{jh}+  \sum\limits_{h\in \mathcal{G}_{2}}\gamma_{jh}\leq \epsilon \text{ \ \ \ } \text{\ }j\in \mathcal{J}  \label{DX}
\end{equation*}%
%and 
%\begin{equation} \label{CX1}
%\sum\limits_{h\in \mathcal{G}_{2}}\gamma_{jh}=\eta_j\geq \epsilon_j \text{ \ \ \ } \text{\ }j\in  B_2
%\end{equation}

\begin{theorem} \label{T10}
For $\sum_{j\in B_i}\varepsilon_j>\epsilon$ for $i=1,2$, there is a feasible integral solution to  $\ell p$ with $lp=\lfloor r^* \rfloor<r$.
\end{theorem}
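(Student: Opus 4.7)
The plan is to parallel the proof of Theorem \ref{bb1} but to route the argument through the projection $Q$ of Lemma \ref{LL3}, so that the $\mathbf{x}$-variables do not need to be tracked explicitly. By Lemma \ref{overlap} overlaps of $B_1$ and $B_2$ cannot coexist, so without loss of generality there is no overlap of $B_2$. Since $\sum_{j\in B_2}\varepsilon_j>\epsilon$, an interval $Y\subseteq\bigcup_{j\in B_2} Y_j$ of length exactly $\epsilon$ exists, giving the truncated solution $(\mathbf{t}^*,\mathbf{z}^*)$ defined in the section. I would pair it with $r'=\lfloor r\rfloor=\lfloor r^*\rfloor$ and $w'=\lfloor w\rfloor$, and first observe that, since $r=\lfloor r^*\rfloor+\epsilon$ and $w=\lfloor w\rfloor+\epsilon$, we have $w'-r'=\lfloor w\rfloor-\lfloor r^*\rfloor=\lceil w^*-r^*\rceil$, so constraint (\ref{B20}) is preserved.

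The core step is to verify that $(\mathbf{t}^*,\mathbf{z}^*,r',w')$ satisfies the projection system $Q$. The machine-saturation inequalities (\ref{Pr00}) and (\ref{Pr01}) hold because every column matches every machine: both sides drop by exactly $\epsilon$. The job-load bound (\ref{Pr02}) holds for $j\in D$ because $\sum_h\gamma_{jh}=\epsilon$ (by the requirement $D\subseteq\mathcal{J}_I$ in every column), and for $j\notin D$ it follows from the saturation assumption and integrality, which together force $\sum_h y_{jh}\le\lfloor w\rfloor$ for such $j$. For the projected constraints (\ref{Pr1})--(\ref{Pr7}), I would use Lemma \ref{ac} ($a$-/$c$-tightness of $B_1\cup B_2$), Theorem \ref{tightd} ($d$-tightness), Lemma \ref{l7} (the relations $\lambda_k=\varepsilon_k$ and $\omega_k=\varepsilon-\varepsilon_k$ for $k\in B_2$), and the column characterization in Lemma \ref{D} to track exactly how the truncated mass $\gamma_{jh}$ splits between $\mathcal{G}_1$ and $\mathcal{G}_2$. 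For the aggregate constraints (\ref{PR10}) and (\ref{PR11}), summing the $a$- and $c$-tightness equalities over $j\in B_2$ and using $l(Y)=\epsilon$ gives the required slackening.

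Once $(\mathbf{t}^*,\mathbf{z}^*,r',w')$ is shown feasible for $Q$, I would exploit that $Q$ at the fixed integer values $(r',w')$ has integer bounds and bipartite network-flow structure, as noted at the end of Section \ref{S11}. By the Integral Circulation Theorem (\cite{L}) an integer $\mathbf{y}$ feasible for $Q$ at $(r',w')$ exists. Theorem \ref{T11} then extends this to $(\mathbf{y},\mathbf{x},r',w')$ feasible for $S$, hence feasible for $\ell p$ since (\ref{B20}) still holds. Finally, once $\mathbf{y}$, $r'$ and $w'$ are integral, the admissible $\mathbf{x}$'s form a transportation-type polytope with integer right-hand sides, and the rounding argument from the end of the proof of Theorem \ref{bb1} selects an integral $\mathbf{x}$. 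The resulting integral solution to $\ell p$ has value $r'=\lfloor r^*\rfloor<r$.

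I expect the main obstacle to be the verification of (\ref{Pr7}) together with the global sums (\ref{PR10}) and (\ref{PR11}), particularly for jobs outside $B_1\cup B_2$ that may still pick up some truncated mass. The bookkeeping—keeping track of whether each job's removed mass lands in the slack of the corresponding $a$- or $c$-constraint—parallels the delicate feasibility verification in Lemma \ref{lowerb1}, and will rely on the tightness structure established in Sections \ref{S6}--\ref{S8} together with the absence of crossing jobs from Theorem \ref{T4} to ensure that the combined constraints of $Q$ remain satisfied after truncation.
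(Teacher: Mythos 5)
Your route is the same as the paper's: truncate $\mathbf{s}$ by an interval $Y$ of length $\epsilon$ drawn from the $B_2$-columns (no overlap of $B_2$ by Lemma \ref{overlap}), pass to the projected system $Q$ of Lemma \ref{LL3} with $r'=\lfloor r\rfloor$, $w'=\lfloor w\rfloor$, recover $\mathbf{x}$ via Theorem \ref{T11}, and finish with integral network flows. The gap is in the central verification, which you leave as a plan, and the plan as stated would not close: the crux is (\ref{Pr1})--(\ref{Pr2}), not (\ref{Pr7}), (\ref{PR10}), (\ref{PR11}) as you anticipate. For $g\in B_1$, $a$-tightness gives $\sum_{h\in\mathcal{G}_2}(b_{gh}-t^*_{gh})+a_{g1}-\Delta(\mathcal{G}_1)=x_{g1}-r+\delta_g$, where $\delta_g\le\epsilon$ is the mass removed from $g$ on $\mathcal{G}_2$; a $B_1$-job can indeed lose the full $\epsilon$ on $\mathcal{G}_2$, since it sits on $\mathcal{G}_2$ in columns of type $\binom{\ast}{\ast,k,g}$ and $Y$ may consist entirely of such columns. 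Hence (\ref{Pr1}) survives the truncation only if $x_{j1}\le\lfloor r\rfloor$ for \emph{every} $j$, and no amount of tightness or column bookkeeping (Lemmas \ref{ac}, \ref{l7}, \ref{D}, Theorem \ref{tightd}) yields that. It is exactly here that the hypothesis $\sum_{j\in B_i}\varepsilon_j>\epsilon$ must be used a second time: if $x_{j1}>\lfloor r\rfloor$ for some $j$, then $j\in B_1$ (an integral $x_{j1}$ is automatically $\le\lfloor r\rfloor$ because $r$ is fractional), and (\ref{bm1}) then forces the fractional parts in $B_1$ to sum to exactly $\epsilon$, contradicting the hypothesis; symmetrically for (\ref{Pr2}) via $B_2$. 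Your sketch invokes the strict inequality only to construct $Y$, so this step — the only place where the case of this theorem genuinely differs from Theorem \ref{bb1} — is missing.

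Two smaller points. The constraints you single out as the obstacle are in fact the easy ones: (\ref{Pr7}) follows by adding (\ref{r1a}) and (\ref{r1c}) for $\mathbf{s}$ and using (\ref{y3}) together with $\sum_h t^*_{jh}\ge\sum_h y_{jh}-\epsilon$, while (\ref{PR10})--(\ref{PR11}) follow because the total truncation on $\mathcal{G}_\ell$ is at most $\epsilon\,\vert\mathcal{G}_\ell\vert\le\epsilon(n-1)$ (saturation gives $\vert\mathcal{G}_\ell\vert\le n-1$) while the term $(n-1)r$ drops by $(n-1)\epsilon$. Also, your justification of (\ref{Pr02}) for $j\notin D$ ("saturation and integrality force $\sum_h y_{jh}\le\lfloor w\rfloor$") is not correct as stated, since a non-$d$-tight job may have load in $(\lfloor w\rfloor,w)$; the conclusion nevertheless holds because such a job is absent from columns of total length less than $\epsilon$, so the interval $Y$ necessarily removes at least $\sum_h y_{jh}-\lfloor w\rfloor$ from it.
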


\begin{proof} The constraints (\ref{Pr1}) and (\ref{Pr2}): For $\mathbf{s}$ we have

%\begin{equation} \label{FE1}
%\sum_{h \in \mathcal{G}_{1}} (b_{jh}-y_{jh}) + a_{j2} - x_{j2} \leq \Delta(\mathcal{G}_{2}) - r  \; \; \; \; \forall \;j\in \mathcal{J}
%\end{equation}

\begin{equation*}\label{FE2}
\sum_{h \in \mathcal{G}_{1}} b_{jh} + a_{j2}  - \Delta(\mathcal{G}_{2}) + r -  x_{j2}\leq    \sum_{h \in \mathcal{G}_{1}} y_{jh} \; \; \; \;  \;j\in \mathcal{J}
\end{equation*}

\begin{equation*}\label{FE20}
\sum_{h \in \mathcal{G}_{2}} b_{jh} + a_{j1}  - \Delta(\mathcal{G}_{1}) + r -  x_{j1}\leq    \sum_{h \in \mathcal{G}_{2}} y_{jh} \; \; \; \; \;j\in \mathcal{J}.
\end{equation*}

If  $r-x_{j1}\geq \epsilon$ and $r-x_{j2}\geq \epsilon$ for each $j\in \mathcal{J}$, then $\sum_{h \in \mathcal{G}_{1}} y_{jh} - (r-x_{j2})\leq \sum_{h \in \mathcal{G}_{1}} t_{jh}$ and $\sum_{h \in \mathcal{G}_{2}} y_{jh} - (r-x_{j1})\leq \sum_{h \in \mathcal{G}_{2}} t_{jh}$ for each $j$. Hence (\ref{Pr1}) and  (\ref{Pr2}) hold for the truncated solution. Otherwise, if $r-x_{j1}< \epsilon$ or $r-x_{j2}< \epsilon$ for some $j\in \mathcal{J}$,
then  $\lfloor r \rfloor\leq x_{j1}$ or $\lfloor r \rfloor\leq x_{j2}$ for some $j$. This implies $\sum_{j\in B_1}\varepsilon_j=\epsilon$ or $\sum_{j\in B_2}\varepsilon_j=\epsilon$  which contradicts the theorem's assumption.

The constraints (\ref{Pr3}) and (\ref{Pr4}): For $\mathbf{s}$ we have

\begin{equation*}\label{FE21}
\sum_{h \in \mathcal{G}_{2}} b_{jh} + r  - \Delta(\mathcal{G}_{1}) + a_{j1} -  x_{j1}\leq    \sum_{h \in \mathcal{G}_{2}} y_{jh} \; \; \; \; \;j\in \mathcal{J}
\end{equation*}

\begin{equation*}\label{FE22}
\sum_{h \in \mathcal{G}_{1}} b_{jh} + r  - \Delta(\mathcal{G}_{2}) + a_{j2} -  x_{j2}\leq    \sum_{h \in \mathcal{G}_{1}} y_{jh} \; \; \; \; \;j\in \mathcal{J}.
\end{equation*}

By constraint (\ref{y5}) and definition of the truncated solution 

\begin{equation*}\label{FE20}
\sum_{h \in \mathcal{G}_{1}} b_{jh} + \lfloor r \rfloor - \Delta(\mathcal{G}_{2}) \leq    \sum_{h \in \mathcal{G}_{1}} y_{jh} -\epsilon \leq \sum_{h \in \mathcal{G}_{1}} t_{jh}\; \; \; \; \;j\in \mathcal{J}
\end{equation*}

Hence (\ref{Pr3}) and (\ref{Pr4}) hold.

The constraints (\ref{Pr7}): For $\mathbf{s}$, by (\ref{r1a}) and (\ref{r1c}) we have

\begin{equation*}\label{FE9}
\sum_{h \in \mathcal{G}_{1}} (b_{jh} -  y_{jh})+ a_{j2}  - x_{j2} \leq \Delta(\mathcal{G}_{2}) - r 
\end{equation*}

and

\begin{equation*}\label{FE10}
\sum_{h \in \mathcal{G}_{2}} (b_{jh} -  y_{jh})+ a_{j1}  - x_{j1} \leq \Delta(\mathcal{G}_{1}) - r ,
\end{equation*}

by summing up the two side by side we get

\begin{equation*}\label{FE11}
\sum_{h} (b_{jh} -  y_{jh})+ a_{j1}  + a_{j2} - x_{j1} -x_{j2} \leq \Delta(\mathcal{G}_{1})  + \Delta(\mathcal{G}_{2}) - 2r 
\end{equation*}

or

\begin{equation*}\label{FE12}
\sum_{h} b_{jh} + a_{j1}  + a_{j2}  - \Delta(\mathcal{G}_{1})  - \Delta(\mathcal{G}_{2}) +\lfloor r \rfloor \leq  \sum_{h}  y_{jh} - r + x_{j1} +x_{j2} -\epsilon.
\end{equation*}

Since $ \sum_{h}  y_{jh}  -\epsilon\leq \sum_{h}  t_{jh}$, we have

\begin{equation*}\label{FE13}
 \sum_{h}  y_{jh} - r + x_{j1} +x_{j2} -\epsilon\leq  \sum_{h\in \mathcal{G}_1}  t_{jh} - r + x_{j1} +x_{j2}
\end{equation*}

But $ - r + x_{j1} +x_{j2}\leq 0$  by the constraint  (\ref{y3}) and thus we get

\begin{equation*}\label{FE14}
\sum_{h} b_{jh} + a_{j1}  + a_{j2}  - \Delta(\mathcal{G}_{1})  - \Delta(\mathcal{G}_{2}) +\lfloor r \rfloor \leq  \sum_{h}  t_{jh}  
\end{equation*}

which proves that (\ref{Pr7}) holds. The constraints (\ref{Pr00})-(\ref{Pr03}) are satisfied by definition of truncated solution. Finally, since
$|\mathcal{G}_1|\leq n-1$ and $|\mathcal{G}_2|\leq n-1$ we have the constraints (\ref{PR10}) and (\ref{PR11}) satisfied by the truncated
solution. Observe that $|\mathcal{G}_1|>n-1$ or $|\mathcal{G}_2|>n-1$ implies $|\mathcal{G}_1|=0$ or $|\mathcal{G}_2|=0$ since $|\mathcal{G}_1|+|\mathcal{G}_2|\leq n$. This however contradicts
the saturation.

Therefore the truncated solution $(y^*=(z^*,t^*),\lfloor r \rfloor, \lfloor w \rfloor)$ is feasible for $Q$, and by Theorem \ref{T11} there exists $\bold{x}^*$  such that  $(y^*=(z^*,t^*), \bold{x}^*, \lfloor r \rfloor, \lfloor w \rfloor)$ is feasible for $S$. Moreover $\lfloor r^* \rfloor \leq \lfloor r \rfloor$, and
$\lfloor w \rfloor - \lfloor r \rfloor=\lceil w^* - r^* \rceil$ since  $\mathbf{s}=(\mathbf{y} ,\mathbf{x}, r, w)$  is feasible for $\ell p$. Thus the solution $(y^*=(z^*,t^*), \bold{x}^*, \lfloor r \rfloor, \lfloor w \rfloor)$ is feasible for $\ell p$ and $lp=\lfloor r \rfloor$. For a feasible solution to $Q$ with integral $\lfloor w \rfloor$ and $\lfloor r \rfloor$
all lower and upper bounds in the network $Q$ are integral
%By (\ref{B20}), if $r$ is integer, then $w$ is integer. Using the projective method of \cite{KDWK}
thus we can  find in polynomial time an integral $\mathbf{y}^*$. Finally for given integral and fixed
$\lfloor r \rfloor,\lfloor w\rfloor$ and $\mathbf{y}^ *$ the $S$ becomes a network flow model with integral lower and upper bounds on the flows. Thus we can find in polynomial time an integral $\mathbf{x}^*$ such that the integer solution $(\mathbf{y}^* ,\mathbf{x}^* ,\lfloor r \rfloor, \lfloor w \rfloor)$ is feasible for $l p$ and $lp=\lfloor r \rfloor$. 
\end{proof}

\section{The proof of the conjecture} \label{S13}

We are now ready to prove Theorem \ref{integer} which proves the conjecture.

\begin{proof}
For contradiction suppose the optimal value for $\ell p$ is fractional, $lp=r=\lfloor r^* \rfloor + \epsilon$, where $\epsilon>0$. By Theorem \ref{bb1} there is a feasible integral solution to $\ell p$ with $lp=\lfloor r^* \rfloor$  for
$\sum_{j\in B_1}\varepsilon_j=\epsilon$ or $\sum_{j\in B_2}\varepsilon_j=\epsilon$. By Theorem \ref{T10} there is a feasible integral solution to $\ell p$ with $lp=\lfloor r^* \rfloor$  for
$\sum_{j\in B_1}\varepsilon_j>\epsilon$ and $\sum_{j\in B_2}\varepsilon_j>\epsilon$. 
Thus there is a feasible integral solution 
%$(\mathbf{y} ,\mathbf{x} ,r ,w)$ 
for $\ell p$ with $\lfloor r^ * \rfloor<r$. Hence there is a feasible solution
to $\ell p$ which is smaller than optimal $r$  which gives contradiction and proves the first part of the theorem. Thus optimal $\mathbf{s}$ has both $r$ and $w$ integer. The $\mathbf{s}$ is feasible  for $S$ and thus it is feasible for $Q$ by Lemma \ref{LL4}.
For a feasible solution to $Q$ with integral  $w $ and $r $
all lower and upper bounds in the network $Q$ are integral
%By (\ref{B20}), if $r$ is integer, then $w$ is integer. Using the projective method of \cite{KDWK}
thus we can find in polynomial time an integral $\mathbf{y}$. Finally for given integral and fixed
$ r, w$ and $\mathbf{y}$ the $S$ becomes a network with integral lower and upper bounds on the flows. Thus we can find in polynomial time an integral $\mathbf{x}$ such that the integer solution $(\mathbf{y} ,\mathbf{x},r, w)$ is feasible for $l p$ and $lp=r$.
\end{proof}
% and finally given integral
%$r , w$ and $\mathbf{y}$ we can find in polynomial time an integral $\mathbf{x}$. Therefore, we can find in polynomial time an optimal solution $(\mathbf{y} ,\mathbf{x} ,r ,w)$ to $lp$ which is integral. This proves the second part of the theorem.

%This solution is feasible to $ILP$ and
%$\lfloor w \rfloor - \lfloor r \rfloor=\lceil w^{ \ast } -r^{ \ast }\rceil  =\lceil L P\rceil $. Finally by definition of $L P$-relaxation we have $L P \leq I L P$ for any feasible solution to $I L P$. Thus $(\mathbf{y} ,\mathbf{x} ,r ,w)$ is an optimal solution to $I L P$ obtained in polynomial time. 
% and $w= \lfloor w^* \rfloor$.
%The optimum for $LP$ equals $w^*-r^*$=$\lfloor w^* \rfloor - \lfloor r^* \rfloor +\varepsilon_w -\varepsilon_r$. We have $\varepsilon_w\leq \varepsilon_r$, thus $\lceil w^* -r^* \rceil= \lfloor w^* \rfloor -\lfloor r^*\rfloor$. 
% in both cases which proves that the optimal value for $LP$ is integral. This proves the conjecture.

\section*{Acknowledgements}
The author  is grateful to Dominic de Werra for his insightful  comments.


\begin{thebibliography}{99}
\bibitem {DWAD} de Werra, D., Asratian, A.S., and Durand, S. "Complexity of some special types
of timetabling problems" \textit{Journal of Scheduling}, 2002, \textbf{5}, 171-183 

\bibitem {B}
Berge, C. "Hypergraphs", Elsevier, 1989 

\bibitem {BM} Bondy, J.A. and Murty U.S.R "Graph
Theory with Applications", Elsevier, 1976 

\bibitem {DWKK} de Werra, D., Kis, T. and Kubiak,
W., ``Preemptive open shop scheduling with multiprocessors: polynomial cases and applications'', \textit{Journal of Scheduling}, 2008, \textbf{11},
75-83 

\bibitem {KDWK} Kis, T., de Werra, D. and Kubiak, W., "A projective algorithm for preemptive
open shop scheduling with two multiprocessor groups," \textit{Operations Research Letters, }2010, \textbf{38,} 129-132 

\bibitem {GK}
Gabow, H.N., and Kariv, O., "Algorithms for edge coloring bipartite graphs and multigraphs," \textit{SIAM Journal on Computing,} 1982, \textbf{11},
117-129. 

\bibitem {ADW} Asratian, A.S. and de Werra, D., "A generalized class-teacher model
for some timtabling problems," \textit{European Journal of Operational Research,} 2002, \textbf{143}, 531-542. 

\bibitem {GS}
Gonzalez, T. and Sahni, S., "Open shop scheduling to minimize finish time," \textit{Journal of the ACM, }1976, \textbf{23,} 665-679 

\bibitem {L}
Lawler, E., "Combinatorial Optimization. Networks and Matroids", Dover Publications, 2001 

\bibitem {LLV}
Lawler, E.L., Luby, M.G., and Vazirani, V.V., "Scheduling open shops with parallel machines" \textit{Operations Research Letters, } 1982, \textbf{1,}
161-164 

\bibitem{Sch} Schrijver, A., "Theory of linear and integer programming" John Wiley \& Sons, 1999.

%\bibitem {LeV} LeVeque, W.J., "Topics in Number Theory" Volumes I and II, Dover Publications,
%2002. 

\bibitem {SU} Scheinerman, E.R. and Ullman, D.H., "Fractional Graph Theory" Wiley,
1997.
\end{thebibliography}
\end{document}